\newif\iffull
\newif\ifanon
\begin{document} 

\title{\textbf{Rigidity for Monogamy-of-Entanglement Games}}
\ifanon
\author{\vspace{0mm}}
\date{\vspace{-10mm}}
\else
\author{
	Anne Broadbent and Eric Culf
}
\date{\vspace{-10mm}} 
\affil{University of Ottawa,  Department of Mathematics and Statistics
\iffull
\footnote{\texttt{\{abroadbe,eculf019\}@uottawa.ca}}
\else
\fi
}
\setcounter{Maxaffil}{0}
\renewcommand\Affilfont{\small}
\fi

\maketitle
\iffull
\begin{abstract}
	
	In a monogamy-of-entanglement (MoE) game, two players who do not communicate try to simultaneously guess a referee's measurement outcome on a shared quantum state they prepared. We study the prototypical example of a game where the referee measures in either the computational or Hadamard basis and informs the players of her choice.
	
	We show that this game satisfies a rigidity property similar to what is known for some nonlocal games. That is, in order to win optimally, the players' strategy \emph{must} be of a specific form, namely a convex combination of four unentangled optimal strategies generated by the Breidbart state. We extend this to show that strategies that win near-optimally must also be near an optimal state of this form. We also show rigidity for multiple copies of the game played in parallel.
	
	We give three applications:  (1) We construct for the first time a weak string erasure (WSE) scheme where the security does not rely on limitations on the parties' hardware. Instead, we add a prover, which enables security via the rigidity of this MoE game. (2) We show that the WSE scheme can be used to achieve bit commitment in a model where it is impossible classically. (3) We achieve everlasting-secure randomness expansion in the model of trusted but leaky measurement and untrusted preparation and measurements by two isolated devices, while relying only on the temporary assumption of pseudorandom functions. This achieves randomness expansion without the need for shared entanglement.
\end{abstract}

\fi

\iffull
\newpage
\setcounter{tocdepth}{2}
\tableofcontents
\newpage
\fi


\iffull
\section{Introduction}
\fi

Monogamy-of-entanglement (MoE) games provide an intuitive way to understand the strength of quantum multipartite correlations. Such games pit two cooperating players, usually named Bob and Charlie, against an honest referee, Alice. The players try, without communicating, to simultaneously guess the outcome of Alice's measurement on a quantum state provided by the players and with which they may share entanglement freely. Interestingly, any one of the players can always correctly guess the result of any projective measurement Alice makes, by providing her with one register of a maximally entangled state, whereas two players are prohibited from \emph{simultaneously} doing as well since tripartite correlations of the shared state are weaker.

The quintessential MoE game is the original example introduced by Tomamichel, Fehr, Kaniewski, and Wehner \cite{TFKW13}. In this game, Alice's space consists of a single qubit and she measures either in the computational basis or the Hadamard basis with equal probability to get a one-bit answer. As shown there, Bob and Charlie can win with probability at most $\cos^2{\frac{\pi}{8}}\approx0.85$. Further, this game has an exponentially small winning probability when played in parallel, which was shown to yield  applications such as a one-sided device independent quantum key distribution (DI-QKD) scheme. The TFKW game has a particularly simple optimal strategy: Bob and Charlie share no entanglement; they just send Alice a pure Breidbart state $\ket{\beta}\propto\ket{0}+\ket{+}$, that sits directly between the computational zero $\ket{0}$ and the Hadamard zero $\ket{+}$, and always guess $0$ for the measurement outcome. It is straightforward to see that, due to the symmetries of Alice's measurement bases under the action of the Pauli operators, there are at least $4$ optimal unentangled strategies: the Wiesner-Breidbart states $\ket{\beta}$, $X\ket{\beta}$, $Z\ket{\beta}$, $XZ\ket{\beta}$\iffull, illustrated in \cref{fig:bloch}\fi. But the question remains: are these all the possible optimal strategies? Particularly, are there optimal strategies where the players use entanglement? This question is tantamount to asking about the \emph{rigidity} of the TFKW~game.

The idea of rigidity, first formally introduced by Mayers and Yao \cite{MY04}, is that certain games can be used to ``self-test'' quantum states: if such a game is won with high enough probability, then the self-test property tells us that the players must hold some quantum state, up to local isometry. More general are robust self-tests, where even a near-optimal winning probability gives a guarantee that the state is near to this optimal one. Up until now, the study of rigidity has been limited to \emph{nonlocal games}. Nonlocal games are similar to monogamy-of-entanglement games (both belong to the class of extended nonlocal games \cite{JMRW16}), except the referee is classical while the players might be asked different questions and be expected to provide different answers. This area of study grew around the CHSH game. This game, introduced by Clauser, Horne, Shimony, and Holt \cite{CHSH69} as a discrete-variable analogue of a Bell inequality \cite{Bel64}, was known, even before rigidity was formalised, to self-test a maximally-entangled state on two qubits \cite{Tsi93}. This result was later extended to be robust \cite{MYS12} and to hold under parallel repetition \cite{Col17}. \iffull
The rigidity of the CHSH game has found many applications: for example it was used to construct a protocol for quantum delegated computation, which was then used to show the equivalence of complexity classes $\tsf{QMIP}=\tsf{MIP}^\ast$ \cite{RUV13}. Other examples of nonlocal games include the Mermin-Peres magic square game --- which can always be won and self-tests two copies of the maximally entangled state, and was used to show $\tsf{MIP}^\ast=\tsf{RE}$ \cite{JNV+20arxiv} --- and more generally linear constraint games \cite{CMMN20}.
\else
Rigidity of nonlocal games has found many applications \cite{RUV13,JNV+20arxiv}.
\fi

\iffull
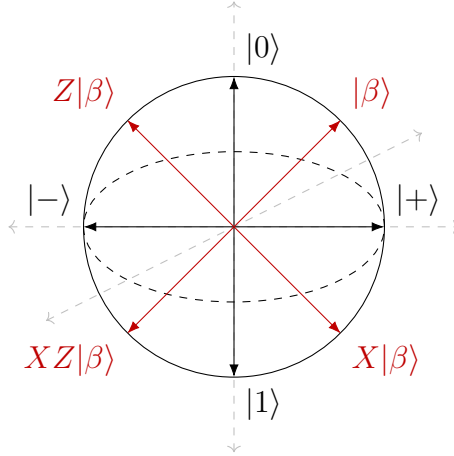
\begin{figure}
	\centering
	\begin{tikzpicture}
	\draw[lightgray, dashed, <->] (-3,0) -- (3,0);
	\draw[lightgray, dashed, <->] (0,-3) -- (0,3);
	\draw[lightgray, dashed, <->] (-2.5,-1.25) -- (2.5,1.25);
	\draw (2,0) arc (0:360:2);
	\draw[dashed] (0,0) ellipse (2cm and 1cm);
	\draw[-Latex] (0,0) -- (0,2) node[above right]{$\ket{0}$};
	\draw[-Latex] (0,0) -- (0,-2) node[below right]{$\ket{1}$};
	\draw[-Latex] (0,0) -- (2,0) node[above right]{$\ket{+}$};
	\draw[-Latex] (0,0) -- (-2,0) node[above left]{$\ket{-}$};
	
	\draw[darkred, -Latex] (0,0) -- (1.42,1.42) node[above right]{$\ket{\beta}$};
	\draw[darkred, -Latex] (0,0) -- (1.42,-1.42) node[below right]{$X\ket{\beta}$};
	\draw[darkred, -Latex] (0,0) -- (-1.42,-1.42) node[below left]{$XZ\ket{\beta}$};
	\draw[darkred, -Latex] (0,0) -- (-1.42,1.42) node[above left]{$Z\ket{\beta}$};
	\end{tikzpicture}
	\caption{Positions of the Wiesner-Breidbart states on the Bloch sphere. They form a pair of bases analogous to the conjugate-coding bases, but rotated by $\frac{\pi}{4}$ so that each vector is located at the midpoint between a vector from the computational basis and one from the Hadamard basis.}
	\label{fig:bloch}
\end{figure}
\newpage\fi
Our main contribution is to prove the first rigidity result for a monogamy-of-entanglement game:

\begin{faketheorem}[informal]
	The state of any optimal strategy for the TFKW game is given as a convex combination of the unentangled optimal states $\ket{\beta}$, $X\ket{\beta}$, $Z\ket{\beta}$, $XZ\ket{\beta}$. This is robust and extends to multiple rounds played in parallel.
\end{faketheorem}

By convex combination, we mean a superposition of tensor product states where the components on Alice's register are the optimal Wiesner-Breidbart states and the components on Bob and Charlie's register have orthogonal supports. That is, Bob and Charlie can simultaneously distinguish which unentangled optimal state Alice receives. Note that a similar notion of rigidity holds for some nonlocal games: for example, Man\v{c}inska, Nielsen, and Prakash \cite{MNP21arxiv} show that the glued magic square game self-tests a convex combination of inequivalent optimal strategies. This requirement on optimal strategies of the TFKW game forces Bob and Charlie to \emph{not} use any of their shared entanglement while playing.

For applications, it is often necessary to extend the rigidity result to be robust and to the scenario where games are played in parallel. This is because playing the game only once gives essentially no information on the winning probability of the strategy used. What Alice can do to remedy this is to get Bob and Charlie to play many games at the same time and use that information to build up statistics about how often they win. As such, she needs the result to be robust --- a guarantee that the state is near-optimal if the winning probability is near-optimal --- as the sampling cannot quite show that the strategy is optimal. Also, she needs the result to hold for games played in parallel, to ensure that there is nothing different and more exotic they may do using their entanglement to win multiple games optimally. We show that the rigidity of the TFKW game holds in this general case.

\paragraph{Application 1: Weak string erasure.}
We construct for the first time a weak string erasure (WSE) scheme that is secure against adversaries with unrestricted quantum systems. WSE is a cryptographic primitive introduced by K\"onig, Wehner, and Wullschleger \cite{KWW12} that allows the sharing of partial information between mistrustful parties, a sender Alice and a receiver Bob. In WSE, Alice receives a random bit string $x$ while Bob receives a substring; Bob knows which bits of $x$ he holds but is unable to determine the remainder, while Alice is unable to determine which substring Bob holds. As shown in~\cite{KWW12}, WSE implies both bit commitment and oblivious transfer from Alice to Bob. Since these are information-theoretically impossible in both the classical and quantum plain model \cite{May96, LC97,BS16}, additional assumptions are necessary to be able to realise WSE. In \cite{KWW12}, they use a noisy-storage model to limit the amount of storage a dishonest party can access.

In our model, we introduce a third party, a prover Charlie who is initially in full collusion with Bob, but who is isolated from Bob once Alice measures. Under the assumption of a public broadcast from Alice to Bob and Charlie, we  are able to exploit the rigidity of the TFKW game to arrive at a secure scheme for WSE. This scheme requires no entanglement for the honest parties and may be run with one round of communication --- in particular it can be realised as a relativistic prepare-and-measure scheme.

\paragraph{Application 2: Bit commitment.}
Two-prover bit commitment was studied before in the classical context \cite{BGKW88}, where it was shown that separating the \emph{sender} into two isolated parties can be used to ensure the binding property (see also \cite{CSST11}). In contrast, the WSE that we achieve implies, using \cite{KWW12}, a bit commitment with two isolated \emph{receivers} and a single sender. To the best of our knowledge, this is the first such scheme; furthermore, we show that, with classical communication only, our model reduces to the single-receiver model --- where unconditionally secure bit commitment is impossible --- meaning that we have identified a new qualitative advantage for quantum communication in cryptography.

\paragraph{Application 3: Everlasting randomness expansion.}
\iffull
Randomness is a precious resource for computation and cryptography. Pseudorandom generators are functions that produce large amounts of randomness from a small random seed, but the quality of this randomness is inherently based on a computational assumption, \emph{e.g.}~the existence of one-way functions. Thus, given sufficient computational power or time, an adversary can eventually break the scheme.
\fi

Quantum entanglement has long been known to provide an advantage in creating unconditionally secure randomness \cite{Col06,AM16}. By verifying that two isolated parties violate a Bell inequality, a verifier is able to guarantee, due to the randomness inherent in quantum mechanics, that the players' outputs provide intrinsic, fresh randomness. Such schemes are able to yield exponential randomness expansion \cite{VV12}. Further, using the rigidity of the CHSH game, it is possible to guarantee that the randomness is secure against side information, and thus allow composition, providing arbitrarily large randomness expansion \cite{CY14}.
The technical difficulty with these schemes is that they require entanglement between isolated parties, which remains difficult to generate in sufficient quantities. Based on the experimental demonstration of a loophole-free Bell inequality violation \cite{HBD+15}, recent work has been able to achieve a randomness expansion of 24\% over a period of 91 hours; however, the new randomness is only secure against classical and not quantum side information \cite{SZB+21}.

Here, we give a protocol where entanglement between isolated parties is not required in order to expand randomness. In order to achieve this, we make use of an adapted version of the WSE protocol as described above. First, the questions Alice asks are pseudorandom rather than uniformly random; this allows Alice to start with only a small random seed.   With polynomial overhead, we can extract statistically near-uniform randomness using the rigidity of the TFKW game. To do this, Alice uses many of the bits to verify that the shared state is near the state of an optimal strategy, and then extracts randomness using her knowledge of the remainder of the state. We thus require the computational assumption to hold during the interaction of the protocol, after which the output randomness becomes nearly indistinguishable from uniform, even to an unbounded adversary --- this concept is called \emph{everlasting} security and was previously studied in the context of quantum key distribution \cite{SML10} and multi-party computation \cite{Unr13}. Furthermore, we note that in our model, all of the measurement settings Alice uses can be leaked as she measures, without compromising the security or uniformity of the randomness.

\iffull
\subsection{Summary of Techniques}
\else
\subsection*{Summary of Techniques}
\fi

\iffull
In this section, we summarise the techniques used to show our results. First, we mention our interpretation of MoE games, and then go through the general method we follow to prove rigidity of the TFKW game and apply it to achieve weak string erasure and everlasting randomness expansion.
\fi

\paragraph{Monogamy-of-entanglement games.}
We give an expression of a two-answer MoE game, such as the TFKW game, in terms of a game polynomial where the variables are Bob and Charlie's observables. In this way, we may study the strategies of a game by studying the positivity of this operator-valued polynomial. This technique expands upon one that has been used previously to study nonlocal games~\cite{CMMN20}.

\paragraph{Rigidity.}
We present a sum-of-squares (SOS) decomposition of the game polynomial for the TFKW game. The state $\ket{\psi}$ of any optimal strategy is an eigenspace of the game polynomial in terms of the observables of that strategy, which provides a selection of relations for the observables. There are two types of relations that come out: one allows to exchange Bob and Charlie's observables and the other gives that $\ket{\psi}$ is an eigenvector of a particular sum of observables. In particular, these imply that either of the players' observables must commute with respect to the state, generating a $\ket{\psi}$-representation of $\Z_2^2$. As such, we invoke the Gowers-Hatami theorem as in \cite{Vid18} to locally dilate the players' space isometrically and transform this into a bona fide representation. These observables are simultaneously diagonalisable, so the dilated shared space can be decomposed as a direct sum of orthogonal subspaces on which they act as scalars. Returning to the relations from the SOS decomposition using the dilated observables allows us to constrain where the shared state lives in this orthogonal sum and show that the components on Alice's space must take the form $X^{s_0}Z^{s_1}\ket{\beta}$.

\iffull
We then build on this technique to show the rigidity in the robust case. Here, however, since the winning probability is assumed to be some $\varepsilon>0$ smaller than optimal, the value of each of the terms in the SOS decomposition are not zero when acting on the state, but rather in $O(\varepsilon)$. Nevertheless, we can use the relations to get an approximate representation, which we dilate similarly with Gowers-Hatami. Of course, this cannot give that the state is exactly a convex combination as above, but rather that its projection onto the unwanted subspaces is small, giving that this is $O(\sqrt{\varepsilon})$ close to an optimal state.

We show the exact rigidity for a parallel repetition of $n$ TFKW games by extracting many optimal strategies for a single game, assuming Bob and Charlie can guess each of the answer bits for the repeated games with optimal probability. We show first that the observables related to each copy of the TFKW game must act in the same way on $\ket{\psi}$ by using the rigidity decomposition, and then use this as tool to show that all of the observables commute. This induces, again with Gowers-Hatami, a representation of $(\Z_2^2)^n$ and lets us conclude in a similar way as for the single-game case that the state must be a convex combination of tensor products of states of the form $X^{s_0}Z^{s_1}\ket{\beta}$.

The most general rigidity result we prove is the robust case of the parallel repetition of TFKW games. To generalise the exact-case method, we use a technique of \cite{Col17} to extract sufficiently many strategies for TFKW that win near-optimally. Proceeding similarly as before, we get that the state is $O(n^3\sqrt{\varepsilon})$ away from an optimal state.
\else
We build on this technique to show rigidity in the robust and then parallel-repeated cases. First, since the winning probability is assumed to be some $\varepsilon>0$ smaller than optimal, the value of each of the terms in the SOS decomposition are not zero when acting on the state, but rather in $O(\varepsilon)$. Nevertheless, we can use the relations to get an approximate representation, which we dilate similarly with Gowers-Hatami, which provides a an optimal state $O(\sqrt{\varepsilon})$ close to the real state. Next, we show the exact rigidity for a parallel repetition of $n$ TFKW games by extracting many optimal strategies for a single game. We show first that the observables related to each copy of the TFKW game must act in the same way on $\ket{\psi}$ by using the rigidity decomposition, and then use this as tool to show that all of the observables commute. This induces, again with Gowers-Hatami, a representation of $(\Z_2^2)^n$ and lets us conclude that the state must be a convex combination of tensor products of terms $X^{s_0}Z^{s_1}\ket{\beta}$. The most general rigidity result we prove is the robust case of the parallel repetition of TFKW games. To generalise the exact-case method, we use a technique of \cite{Col17} to extract sufficiently many strategies for TFKW that win near-optimally, and proceed similarly to get that the state is $O(n^3\sqrt{\varepsilon})$ away from an optimal state.
\fi

Finally, we adapt a technique of \cite{RUV13} to be able to pass from winning statistics Alice may observe when playing TFKW games in parallel to a guarantee on the winning probability of a large subset of the games. Knowing upper bounds on the winning probability of each of the games, we can couple independent Bernoulli random variables to each game, and use Hoeffding's inequality to show that there is but a low probability that the players win most games while the winning probability for too many of them is more than $\varepsilon$ away from optimal.

\paragraph{Weak string erasure.}
We construct a WSE scheme whose security is based upon the rigidity of the TFKW game. The receiver Bob prepares a state $\rho_{ABC}$ shared between Alice, Bob, and Charlie, where Alice holds $N\in\ttt{poly}(n)$ qubits. In the honest case, this has the form of an unentangled optimal strategy for the parallel-repeated TFKW game. Then, Alice verifies that the state must be near an optimal state for the TFKW game by playing the game with Bob and Charlie using $N-n$ of her qubits. This check fails with exponentially small probability in $n$. On the remaining $n$ qubits, however, she measures in a random Wiesner-Breidbart basis, \emph{i.e.}~either the basis $\ket{\beta},XZ\ket{\beta}$ or the basis $Z\ket{\beta},X\ket{\beta}$. Giving Bob the information about which basis she chose for these $n$ qubits, he may guess on average half of the bits and have no information about the rest. This provides security against a dishonest Bob. For security against a dishonest Alice, we note that the rigidity still gives Bob the freedom to choose the Wiesner-Breidbart state on the register he gives to Alice. \iffull It can be seen from \cref{fig:bloch} that t\else T\fi hese states constitute a pair of mutually-unbiased bases. Therefore, if Bob chooses the state randomly, this eliminates Alice's chance of guessing which bits he knows. The isolation requirement between Bob and Charlie is necessary to prevent an attack where they jointly share a maximally entangled state with Alice and then can always measure each bit in the correct basis. The requirement that Alice broadcast publicly which $n$ bits are used to generate the output string is to prevent an attack where she asks Bob and Charlie to play the TFKW game on different bits, and uses Charlie's replies to extract information about Bob's prepared conjugate-coding~basis.

\paragraph{Everlasting Randomness Expansion.} We use the rigidity of the TFKW game, as well as a computational assumption on the existence of pseudorandom generators, to construct a randomness expansion scheme that is everlasting, in the sense that the output randomness is guaranteed to be near-uniform in trace norm, as long as the computational assumption is not broken during the execution of the protocol.
 As in the previous protocol, Alice interacts with a pair of adversaries, Bob and Charlie, and they all share an adversarially-prepared state $\rho_{ABC}$, where Alice holds $N\in\ttt{poly}(n)$ qubits. Alice plays the TFKW game on $N-n$ of the qubits to verify that the shared state is near an optimal state. However, rather than choosing the locations and questions for the TFKW game rounds uniformly at random, she chooses them by sampling the output of a pseudorandom generator, given a random seed. Bob and Charlie, who are assumed to be computationally bounded, have only a negligible probability of distinguishing this from the uniformly random case, and thus this check has only a negligibly small probability of failure. Alice measures each of the remaining~$n$ qubits in the basis $\ket{0_\circlearrowleft},\ket{1_\circlearrowleft}$ that diagonalises the Pauli $Y$ operator. Since this basis is mutually unbiased with both of the Wiesner-Breidbart bases, the outcome is nearly uniformly random, and neither Bob nor Charlie have information on what this outcome is, as long as they stay isolated.

\iffull
\subsection{Further Related Work}
The study of monogamy-of-entanglement games is a burgeoning field in quantum information, with several applications to cryptography. As mentioned earlier, these games were introduced in~\cite{TFKW13}, where they also introduced an important technique using overlaps of measurement operators to upper bound the winning probabilities. Johnston, Mittal, Russo, and Watrous \cite{JMRW16} adapted the overlap technique to show that all MoE games with two questions can be won using an unentangled strategy and satisfy perfect parallel repetition, and gave a generalisation of the NPA hierarchy \cite{NPA08} that can be used on MoE games. Broadbent and Lord \cite{BL20} used the TFKW game to study uncloneable encryption in the quantum random oracle model. Most recently, Coladangelo, Liu, Liu, and Zhandry \cite{CLLZ21} defined a new MoE game of a slightly different style where Alice measures in a basis of subspace coset states, and Bob and Charlie try to guess different strings, for which an upper bound on the winning probability was shown by Culf and Vidick \cite{CV21arxiv} using overlaps. In \cite{CLLZ21}, they use this game along with some computational assumptions to construct schemes for uncloneable decryption and copy-protection of pseudorandom functions.

The idea of using the probabilistic nature of quantum mechanics to create randomness is well-established. Colbeck \cite{Col06} pointed out that violations of Bell inequalities can be used to generate randomness, which, as mentioned, was expanded by Vazirani and Vidick \cite{VV12}, and then Coudron and Yuen \cite{CY14}, to give very powerful device-independent randomness expansion protocols. Our scheme may be contrasted with the work of Brakerski, Christiano, Mahadev, Vazirani, and Vidick \cite{BCM+18}, where they also use a short-term computational assumption to achieve everlasting randomness expansion. They use the learning with errors (LWE) assumption to construct noisy trapdoor claw-free functions, using which it is possible to verify that an untrusted quantum device approximately prepares states in the Hadamard basis and then measures in the computational basis. This protocol does not require a communication assumption or a trusted measurement, as ours does. On the other hand, it requires a specific computational assumption, far less general than existence of pseudorandom generators, and requires a full fault-tolerant quantum computer in the honest case, rather than simply preparation and measurement of single-qubit states. Less demanding models, where some aspects of the devices are trusted, have also been considered. In a semi-device-independent model, it is assumed that the dimensions of the devices' Hilbert spaces are constrained \cite{PB11}. Randomness expansion schemes in this model do not require entanglement, but make use of strong assumptions on the devices: a finite distribution of states and measurements, and no entanglement with another system \cite{LYW+11}. There are also more asymmetric models, like quantum steering, where one of the devices may be completely trusted while the other is untrusted \cite{BCWSW12}.

\subsection{Acknowledgements}
We would like to thank Arthur Mehta for introducing us to sum-of-squares decompositions, and S\'ebastien Lord for many insightful discussions.

This work was supported by the Air Force Office of Scientific Research under award number FA9550-20-1-0375, Canada’s NFRF and NSERC, an Ontario ERA, and the University of Ottawa’s Research Chairs program.

\subsection{Outline}

In \cref{sec:prelims} we present the notation and technical facts from the theories of quantum information, probability, and approximate representation of finite groups that we use throughout the paper. Next, in \cref{sec:monogamy}, we formally define the concept of a monogamy-of-entanglement game and present different ways of understanding the winning probabilities of strategies for these games. In \cref{sec:rigidity}, we prove rigidity for the TFKW game. Our most general rigidity results are given by \cref{thm:rob-par} and \cref{thm:prob-gen-rob-par}. Lastly, in \cref{sec:application} we apply the rigidity result to construct a weak string erasure scheme, which we relate to a construction of bit commitment; and combine it with a computational assumption to construct a everlasting randomness expansion scheme.

\fi

\iffull
\section{Preliminaries} \label{sec:prelims}

In this section, we go over the basic technical facts needed in the remainder of the paper. First, in \cref{sec:notation}, we introduce the general notation we use, which is largely standard. Next, in \cref{sec:quant-inf}, we go over the basic objects from quantum information theory we need, including the definitions and properties of some important states and operators on the space of a qubit we see throughout. In \cref{sec:probability}, we touch on some notation and results from probability theory. Finally, in \cref{sec:representation}, we recall some results from the representation theory of finite groups and its generalisation to approximate representations. We also prove that operators we encounter later generate approximate representations.

\subsection{Notation} \label{sec:notation}

A Hilbert space is a $\C$-vector space with an inner product that is complete as a metric space. Here, we only consider finite-dimensional Hilbert spaces so the completeness is always guaranteed. As is customary, we use Dirac bra-ket notation. That is, a vector in a Hilbert space $\tsf{H}$ are written as a \emph{ket} $\ket{v}\in\tsf{H}$; the inner product of two vectors $\ket{v},\ket{w}\in\tsf{H}$ is written as a \emph{braket} $\braket{v}{w}$; for every ket $\ket{v}\in\tsf{H}$, there is a corresponding \emph{bra} $\bra{v}\in\tsf{H}^\ast$, which is the unique element of the dual space such that $\bra{v}(\ket{w})=\braket{v}{w}$ for every $\ket{w}\in\tsf{H}$; and for any $\ket{v},\ket{w}\in\tsf{H}$, the \emph{ketbra} $\ketbra{v}{w}$ is the linear map $\tsf{H}\rightarrow\tsf{H}$ such that $\ketbra{v}{w}(\ket{u})=\braket{w}{u}\ket{v}$ for all $\ket{u}\in\tsf{H}$. We denote the adjoint of a linear map $L:\tsf{H}\rightarrow\tsf{K}$ with respect to the brakets on the two spaces by $L^\dag:\tsf{K}\rightarrow\tsf{H}$. For a one-dimensional Hilbert space, we write $\ket{v}$ to mean $\spn_\C\{\ket{v}\}$ when there is little chance of confusion.

Let $\tsf{H}$ and $\tsf{K}$ be Hilbert spaces. We use the following important operator spaces. We denote the space of all linear operators $\tsf{H}\rightarrow\tsf{K}$ as $\mc{L}(\tsf{H},\tsf{K})$, and write $\mc{L}(\tsf{H}):=\mc{L}(\tsf{H},\tsf{H})$. The set of invertible linear maps is denoted $\mc{GL}(\tsf{H})\subseteq\mc{L}(\tsf{H})$. Next, the \emph{Hermitian} operators on $\tsf{H}$ are the $L\in\mc{L}(\tsf{H})$ such that $L=L^\dag$, and we write the set of Hermitian operators $\mc{H}(\tsf{H})$. The \emph{positive} (semidefinite) operators on $\tsf{H}$ are the operators $P\in\mc{H}(\tsf{H})$ such that $\braket{v}{P}{v}\geq 0$ for all $\ket{v}\in\tsf{H}$, and we denote the set of these operators $\mc{P}(\tsf{H})$. We often write $P\geq 0$ to mean $P\in\mc{P}(\tsf{H})$. An \emph{isometry} from $\tsf{H}$ to $\tsf{K}$ is an operator $V\in\mc{L}(\tsf{H},\tsf{K})$ such that $V^\dag V=\Id_H$, and the set of isometries is $\mc{U}(\tsf{H},\tsf{K})$. Note that isometries may only exist if $\dim\tsf{K}\geq\dim\tsf{H}$. Finally, a \emph{unitary} operator is an isometry $U:\tsf{H}\rightarrow\tsf{H}$, i.e. $U^\dag U=UU^\dag=\Id_H$, and the set of unitaries is $\mc{U}(\tsf{H})=\mc{U}(\tsf{H},\tsf{H})$. Operators $A,B\in\mc{L}(\tsf{H})$ are said to \emph{commute} if $AB=BA$ and \emph{anticommute} if $AB=-BA$. The \emph{commutator} of two operators is $[A,B]=AB-BA$.

We consider the natural numbers to be $\N=\{1,2,3,\ldots\}$, and for $n\in\N$ write the subset $[n]=\{1,\ldots,n\}\subseteq\N$. We see elements of the vector space $\Z_2^n$ for $n\in\N$ as bit strings, so we write them as a concatenation $x=x_1x_2\ldots x_n$. We define, for $i\in[n]$, $1^i\in\Z_2^n$ as the bit string that is $1$ in position $i$ and $0$ elsewhere. Analogously, for a subset $I\subseteq[n]$, write $x_I=x_{i_1}x_{i_2}\ldots x_{i_k}$ for $I=\{i_1,\ldots,i_k\}$ with $i_1<i_2<\ldots<i_k$; and $1^I\in\Z_2^n$ the string that is $1$ at indices in $I$ and $0$ elsewhere.

\subsection{Quantum Information}\label{sec:quant-inf}

	The classical states of a system are represented by a finite set $H$ called a \emph{register}. The pure quantum states are represented by superpositions of elements of the register, so vectors with norm $1$ in the Hilbert space $\tsf{H}=\spn_\C\set*{\ket{h}}{h\in H}\cong\C^{|H|}$, where the spanning set is an orthonormal basis. More generally, quantum states may be seen as mixed states, which are probability distributions over pure quantum states. Every mixed state may be represented as a \emph{density operator} of the form $\rho=\sum_{i}p_i\ketbra{\psi_i}$, where $p_i\geq 0$, $\sum_{i}p_i=1$, and $\ket{\psi_i}\in\tsf{H}$ is a pure quantum state. The set of mixed states corresponds to the set of positive operators with trace $1$, and we write this set $\mc{D}(\tsf{H})$. Every mixed state may be purified by appending some auxiliary register (\cref{lem:pure-state}). We call a state \emph{classical} if it is diagonal in the canonical basis of $\tsf{H}$ -- it corresponds exactly to a probability distribution on $H$.
	
	A quantum measurement is represented by a positive operator-valued measurement (POVM), which is a map \begin{align}
		\begin{matrix}P:&X&\rightarrow&\mc{P}(\tsf{H})\\&x&\mapsto&P_x\end{matrix}
	\end{align}
	where $X$ is the (finite) set of possible measurement outcomes and $\sum_xP_x=\Id_H$. The probability of measuring outcome $x$ given a state $\rho\in\mc{D}(\tsf{H})$ is given by Born's rule as $\Tr(P_x\rho)$. Again, any measurement may be purified, by adding auxiliary registers, to a projector-valued measurement (PVM), which is a measurement where the $P_x$ are orthogonal projectors $P_xP_y=\delta_{x,y}P_x$ (\cref{lem:pure-pvm}). This then corresponds to a measurement in some basis followed by a deterministic classical function.
	
	Given two registers $H$ and $K$, the corresponding joint quantum system is given by the tensor product Hilbert space $\tsf{H}\otimes\tsf{K}$. If necessary, we (perhaps inconsistently) add the name of the register as a subscript onto a state/operator to distinguish which register it acts on/belongs to. A state $\ket{\psi}\in\tsf{H}\otimes\tsf{K}$ is \emph{separable} if it can be written as a pure tensor of the form $\ket{\psi}=\ket{v}_H\otimes\ket{w}_K$. Otherwise, the state is \emph{entangled}. Nevertheless, any pure state $\ket{\psi}\in\tsf{H}\otimes\tsf{K}$ admits a \emph{Schmidt decomposition} $\ket{\psi}=\sum_i\sqrt{p_i}\ket{i}_H\otimes\ket{i}_K$ where the $p_i\geq 0$, and $\set{\ket{i}_H}$ and $\set{\ket{i}_K}$ are sets of orthonormal vectors. An operation is \emph{local} if it acts as a pure tensor. The \emph{partial trace} of a register $H$ is the linear map $\Tr_H:\mc{L}(\tsf{H}\otimes\tsf{K})\rightarrow\mc{L}(\tsf{K})$ defined on pure tensors as $\Tr_H(A\otimes B)=\Tr(A)B$ (and extended linearly), corresponding to making a measurement on the space $\tsf{H}$ and then forgetting the result. For a state $\rho_{HK}\in\mc{D}(\tsf{H}\otimes\tsf{K})$, we write the state on $\tsf{H}$ as $\rho_{H}=\Tr_K(\rho_{HK})$.
	
	The Euclidean norm $\norm{\ket{v}}=\sqrt{\!\braket{v}}$ gives the appropriate distance metric between pure states. For mixed states, we use the \emph{trace distance}
	\begin{align}
		d_{\Tr}(\rho,\sigma)=\norm{\rho-\sigma}_{\Tr}=\frac{1}{2}\Tr\abs{\rho-\sigma},
	\end{align}
	where the absolute value of an operator is $\abs{L}=\sqrt{L^\dag L}$. For other operators, we use the \emph{operator norm}
	\begin{align}
		\norm{L}=\sup\set*{\norm{L\ket{v}}}{\vphantom{\big|}\braket{v}=1}.
	\end{align}
	Important properties of these norms are given in \cref{sec:a-prelimlems}.
	
	An important system is the \emph{bit} $Q=\Z_2=\{0,1\}$, and its corresponding Hilbert space, the \emph{qubit} $\tsf{Q}\cong\C^2$. The basis $\set{\ket{0},\ket{1}}$ is the \emph{computational basis} and the basis $\set*{\ket{+},\ket{-}}$ where $\ket{+}=\frac{1}{\sqrt{2}}(\ket{0}+\ket{1})$ and $\ket{-}=\frac{1}{\sqrt{2}}(\ket{0}-\ket{1})$ is the \emph{Hadamard basis}. The \emph{Hadamard operator} is the Hermitian unitary $H:\tsf{Q}\rightarrow\tsf{Q}$ that maps the computational basis to the Hadamard basis, which is expressed in either basis as
	\begin{align}
	H=\frac{1}{\sqrt{2}}\begin{bmatrix}1&1\\1&-1\end{bmatrix}.
	\end{align}
	In the computational basis, the \emph{Pauli operators} are
	\begin{align}
		Z=\begin{bmatrix}1&0\\0&-1\end{bmatrix}\qquad X=\begin{bmatrix}0&1\\1&0\end{bmatrix}\qquad Y=\begin{bmatrix}0&-i\\i&0\end{bmatrix}.
	\end{align}
	It is direct to check that $Z$ and $X$ anticommute, and that the Hadamard diagonalises $X$, so that $X=HZH$. We define the \emph{Breidbart} operator $\be:\tsf{Q}\rightarrow\tsf{Q}$ as the Hermitian unitary that diagonalises $H$, so $H=\be Z\be$ and in the computational basis
	\begin{align}
		\be=\begin{bmatrix}\cos\frac{\pi}{8}&\sin\frac{\pi}{8}\\\sin\frac{\pi}{8}&-\cos\frac{\pi}{8}\end{bmatrix};
	\end{align}
	and the \emph{Breidbart state} $\ket{\beta}=\be\ket{0}$. Important relations that follow from the definition are  $H\ket{\beta}=\ket{\beta}$, $Z\ket{\beta}=\be\ket{+}$, $X\ket{\beta}=\be\ket{-}$, and $ZX\ket{\beta}=\be\ket{1}$. Finally, we define the \emph{conjugate-coding/Wiesner/BB84} states on $n\in\N$ qubits for $x,\theta\in\Z_2^n$ as $\ket{x^\theta}=H^\theta\ket{x}=H^{\theta_1}\ket{x_1}\otimes\cdots\otimes H^{\theta_n}\ket{x_n}\in\tsf{Q}^{\otimes n}$; and we also call the states $\be^{\otimes n}\ket{x^\theta}$ the \emph{Wiesner-Breidbart states}.
	
	More details are given in any of the many good resources for quantum information, such as \cite{NC00,Wat18}.	

\subsection{Probability}\label{sec:probability}

	Any probability distribution on a finite set $X$ may be represented by a function $\pi:X\rightarrow[0,1]$ such that $\sum_{x\in X}\pi(x)=1$. Then, the probability of an event $S\subseteq X$ is $\Pr(S)=\sum_{x\in S}\pi(x)$. For any function $f:X\rightarrow\tsf{V}$, where $\tsf{V}$ is a $\C$-vector space, we write the expectation value with respect to this distribution as
	\begin{align}
		\expec{x\leftarrow \pi}f(x)=\sum_{x\in X}\pi(x)f(x).
	\end{align}
	We distinguish the uniform probability distribution $\mbb{u}:X\rightarrow[0,1]$, which is $\mbb{u}(x)=\frac{1}{|X|}$; and we write $\mbb{E}_{x\in X}$ to mean $\mbb{E}_{x\leftarrow \mbb{u}}$. Any probability distribution $\pi$ on $X$ can be represented as a classical state $\mu_\pi=\mbb{E}_{x\leftarrow\pi}\ketbra{x}\in\mc{D}(\tsf{X})$, where we write in particular the maximally mixed state as the classical state of the uniform distribution $\mu_X:=\mu_{\mbb{u}}$. For a random variable $\Gamma$, we use the same notattion $\expec{x\leftarrow\Gamma}$ to denote $x$ sampled from the image of $\Gamma$ with repsect to its distribution. If $\Gamma$ has image in a vector space, we write its expectation as $\expec{x\leftarrow\Gamma}x=\mathbb{E}\Gamma$.
	
	An important bound we make use of is Hoeffding's inequality. Let $\Gamma_1,\ldots,\Gamma_n$ be independent random variables with image in $[0,1]$, and write their sum $\Gamma=\Gamma_1+\ldots+\Gamma_n$. The inequality states that for any $t\geq 0$,
	\begin{align}
		\Pr\parens*{\Gamma-\mathbb{E}\Gamma\geq t}\leq e^{-2\frac{t^2}{n}}
	\end{align}

\subsection{Exact and Approximate Representation Theory}\label{sec:representation}

	Throughout this section, let $G$ be a finite group.
	
	A \emph{representation} of $G$ over $\C$ is a group homomorphism $\gamma:G\rightarrow\mc{GL}(\tsf{V})$, where $\tsf{V}$ is a finite-dimensional $\C$-vector space. Two representations $\gamma_i:G\rightarrow\mc{GL}(\tsf{V}_i)$ for $i=0,1$ are \emph{isomorphic} if there exists an invertible linear map $U:\tsf{V}_0\rightarrow\tsf{V}_1$ such that $U\gamma_0(g)=\gamma_1(g)U$ $\forall\,g\in G$ (intertwining operator). Every representation is isomorphic to a unitary representation, i.e. where $\gamma(G)\subseteq\mc{U}(\tsf{V})$. A representation is \emph{irreducible} if the only subspaces invariant under the action of $G$ are $\tsf{V}$ and $0$. By Maschke's theorem, every representation of $G$ decomposes as a direct sum of irreducible representations. Let $\tsf{Irr}(G)$ be a set of representatives for the isomorphism classes of the irreducible representations; $\tsf{Irr}(G)$ has finitely many elements and the sum $\sum_{\gamma\in\tsf{Irr}(G)}d_\gamma\Tr(\gamma(g))=|G|\delta_{g,1}$, where $d_\gamma$ is the dimension of the representation $\gamma$. The only irreducible representations of an Abelian group are $1$-dimensional. The important example we see in this paper is $\Z_2^n$ under addition. The irreducible representations are indexed by the elements $s\in\Z_2^n$, and they take the form $\gamma_s(x)=(-1)^{x\cdot s}$, where $x\cdot s=x_1s_1+x_2s_2+\ldots+s_nx_n$.
	
	This concludes our whirlwind summary of some of the representation theory of finite groups; complete explanations are available from many perspectives, such as \cite{Wei03,Ser77}. We now go into rather more detail about the theory of approximate representations, which hinges on a result of Gowers and Hatami \cite{GH17}.
	
	\begin{definition}
		Let $\tsf{V}$ and $\tsf{W}$ be Hilbert spaces and let $\ket{\psi}\in\tsf{V}\otimes\tsf{W}$. For $\varepsilon\geq 0$, an \emph{$(\varepsilon,\ket{\psi})$-representation} of $G$ is a map $f:G\rightarrow\mc{U}(\tsf{V})$ such that, for every $y\in G$,
		\begin{align}
		\expec{x\in G}\norm{\parens*{f(x)f(y)-f(xy)}\ket{\psi}}^2\leq\varepsilon^2.
		\end{align}
	\end{definition}

	The following theorem characterises how close an approximate representation is to a true representation.
	
	\begin{theorem}[Gowers-Hatami]\label{thm:gowers-hatami}
		Let $f:G\rightarrow\mc{U}(\tsf{V})$ be a $(\varepsilon,\ket{\psi})$-representation. Then, there exists a Hilbert space $\tsf{V}'$, an isometry $V:\tsf{V}\rightarrow\tsf{V}'$, and a representation $g:G\rightarrow\mc{U}(\tsf{V}')$ such that, for any $x\in G$,
		\begin{align}
		\norm{\parens*{Vf(x)-g(x) V}\ket{\psi}}\leq\varepsilon.
		\end{align}
	\end{theorem}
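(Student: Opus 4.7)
The plan is to construct the dilation explicitly by taking $\tsf{V}' = \C^G \otimes \tsf{V}$ and letting $g$ be the right regular representation on $\C^G$ extended trivially to $\tsf{V}$. This is a standard sub-representation-of-regular-rep idea: for an exact representation, the map $\ket{v}\mapsto \frac{1}{\sqrt{|G|}}\sum_x \ket{x}\otimes f(x)\ket{v}$ intertwines $f$ with the regular representation, and in the approximate setting the defect is controlled precisely by the quantity in the definition of an $(\varepsilon,\ket{\psi})$-representation.

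Concretely, first I would define $V:\tsf{V}\to\C^G\otimes\tsf{V}$ by
\begin{align}
V\ket{v}=\frac{1}{\sqrt{|G|}}\sum_{x\in G}\ket{x}\otimes f(x)\ket{v},
\end{align}
and check it is an isometry: because $\{\ket{x}\}_{x\in G}$ is orthonormal and each $f(x)$ is unitary, $V^\dag V=\frac{1}{|G|}\sum_x f(x)^\dag f(x)=\Id_\tsf{V}$. Next, define $g:G\to\mc{U}(\C^G\otimes\tsf{V})$ by $g(y)(\ket{x}\otimes\ket{u})=\ket{xy^{-1}}\otimes\ket{u}$; the map $y\mapsto (\ket{x}\mapsto\ket{xy^{-1}})$ is genuinely a homomorphism (direct check), so $g$ is an honest representation. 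Finally, compare: on one hand
\begin{align}
(V\otimes\Id_\tsf{W})(f(y)\otimes\Id_\tsf{W})\ket{\psi}=\frac{1}{\sqrt{|G|}}\sum_x\ket{x}\otimes(f(x)f(y)\otimes\Id_\tsf{W})\ket{\psi},
\end{align}
while a change of variables $x\mapsto xy$ in $g(y)(V\otimes\Id_\tsf{W})\ket{\psi}$ gives
\begin{align}
(g(y)\otimes\Id_\tsf{W})(V\otimes\Id_\tsf{W})\ket{\psi}=\frac{1}{\sqrt{|G|}}\sum_x\ket{x}\otimes(f(xy)\otimes\Id_\tsf{W})\ket{\psi}.
\end{align}
Subtracting and using the orthonormality of $\{\ket{x}\}$,
\begin{align}
\norm{(Vf(y)-g(y)V)\ket{\psi}}^2=\expec{x\in G}\norm{(f(x)f(y)-f(xy))\ket{\psi}}^2\leq\varepsilon^2,
\end{align}
which is exactly the hypothesis on $f$.

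The core computation is routine once the right change of variables is spotted, so there is no real obstacle — the only subtle point is making sure to pick the right regular representation convention (sending $\ket{x}$ to $\ket{xy^{-1}}$ rather than $\ket{yx}$) so that after re-indexing the summand becomes the expression $f(x)f(y)-f(xy)$ appearing in the definition, rather than some twisted variant. Everywhere I have implicitly extended $V$ and $g$ by $\Id_\tsf{W}$ on the auxiliary register carrying $\ket{\psi}$; in the statement this tensoring is suppressed notationally but is needed to apply the defining bound.
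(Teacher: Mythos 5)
Your proof is correct and complete, and it takes a genuinely different (in presentation) route from the paper's. The paper constructs the dilated space as $\bigoplus_{\gamma\in\tsf{Irr}(G)}\tsf{V}\otimes\tsf{V}_\gamma\otimes\tsf{V}_\gamma$ using the Fourier transform $\hat{f}(\gamma)=\frac{1}{|G|}\sum_{g}f(g)\otimes\gamma(g)$ to build the isometry, then computes $V^\dag V$ and $V^\dag g(x)V$ via orthogonality relations for irreducible characters. You instead take $\tsf{V}'=\C^G\otimes\tsf{V}$ with the right regular representation and the isometry $V\ket{v}=\frac{1}{\sqrt{|G|}}\sum_x\ket{x}\otimes f(x)\ket{v}$; the isometry check, the homomorphism property of $g$, and the change of variables $x\mapsto xy$ are all correct, and the final norm bound follows immediately from orthonormality of $\{\ket{x}\}$, giving precisely $\expec{x\in G}\norm{(f(x)f(y)-f(xy))\ket{\psi}}^2$. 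The two constructions are unitarily equivalent — the regular representation $\C^G$ decomposes as $\bigoplus_\gamma\tsf{V}_\gamma\otimes\tsf{V}_\gamma$, and the paper's Fourier transform is exactly the intertwiner that transports your $V$ into their coordinates — so the content is the same, but your presentation is more elementary: it requires only the unitarity of $f$ and orthonormality of the group basis, with no appeal to $\tsf{Irr}(G)$, Maschke's theorem, or character orthogonality. The paper's decomposition buys explicit access to the irreducible blocks of $g$, which is precisely what is used downstream (e.g., the proofs of \cref{thm:exact} and \cref{thm:robust} decompose $\tsf{B}'$ into eigenspaces $\tsf{B}_s$ indexed by $s\in\Z_2^2$ and work with the projectors onto them), so the Fourier form is more directly usable in this paper; with your construction one would still need to invoke the decomposition of the regular representation afterwards to get those blocks. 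As a proof of the theorem itself, though, your version is cleaner and entirely sound, including your correct observation about choosing the convention $\ket{x}\mapsto\ket{xy^{-1}}$ so that $g$ is a homomorphism and the reindexing produces $f(xy)$ rather than a twisted variant.
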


	Note that the above definition and theorem have a slightly different form from how they were presented in previous work \vphantom{\cite{NV17}}\cite{Vid18, CMMN20}.
	
	The proof given here is almost identical to the proof of \cite{Vid18}, which uses the notion of the Fourier transform of a function acting on a group. Given a function $f:G\rightarrow\tsf{V}$ for $\tsf{V}$ a $\C$-vector space, the Fourier transform is the map acting on $\tsf{Irr}(G)$ defined as
	$$\hat{f}(\gamma)=\frac{1}{|G|}\sum_{g\in G}f(g)\otimes\gamma(g).$$
	It is straightforward to check that the inverse transform is
	$$f(g)=\sum_{\gamma\in\tsf{Irr}(G)} d_\gamma\Tr_{V_\gamma}\parens*{(\Id_V\otimes\gamma(g^{-1}))\hat{f}(\gamma)}.$$
	
	\begin{proof}[Proof of \cref{thm:gowers-hatami}]
		First, we construct the necessary objects. The dilated space is
		\begin{align}
		\tsf{V}'=\bigoplus_{\gamma\in\tsf{Irr}(G)}\tsf{V}\otimes\tsf{V}_\gamma\otimes\tsf{V}_\gamma,
		\end{align}
		and the representation is taken to be
		\begin{align}
		g(x)=\bigoplus_{\gamma\in\tsf{Irr}(G)}\Id\otimes\Id_\gamma\otimes\overline{\gamma(x)},
		\end{align}
		where the complex conjugate on $\tsf{V}_\gamma$ is taken with respect to a fixed basis $\ket{i}_\gamma$ for $i=1,...,d_\gamma$. Then, we take the isometry to be
		\begin{align}
		V\ket{v}=\bigoplus_{\gamma\in\tsf{Irr}(G)} \sqrt{d_\gamma}\sum_{i=1}^{d_\gamma}\hat{f}(\gamma)(\ket{v}\otimes\ket{i}_\gamma)\otimes\ket{i}_\gamma.
		\end{align}
		This is in fact an isometry as
		\begin{align}
		\begin{split}
		V^\dag V&=\sum_{\gamma\in\tsf{Irr}(G)} d_\gamma\sum_{i=1}^{d_\gamma}(\Id\otimes\bra{i}_\gamma)\hat{f}(\gamma)^\dag\otimes\bra{i}_\gamma\sum_{i=1}^{d_\gamma}\hat{f}(\gamma)(\Id\otimes\ket{i}_\gamma)\otimes\ket{i}_\gamma\\
		&=\sum_{\gamma\in\tsf{Irr}(G)} d_\gamma\sum_{i=1}^{d_\gamma}(\Id\otimes\bra{i}_\gamma)\hat{f}(\gamma)^\dag\hat{f}(\gamma)(\Id\otimes\ket{i}_\gamma)\\
		&=\expec{x,y\in G}\sum_{\gamma\in\tsf{Irr}(G)} d_\gamma\sum_{i=1}^{d_\gamma}f(x)^\dag f(y)\braket{i}{\gamma(x)^\dag\gamma(y)}{i}_\gamma\\
		&=\expec{x,y\in G}f(x)^\dag f(y)\sum_{\gamma\in\tsf{Irr}(G)} d_\gamma\Tr\parens{\gamma(x^{-1}y)}\\
		&=\expec{x,y\in G}f(x)^\dag f(y)|G|\delta_{x,y}=\expec{x\in G}f(x)^\dag f(x)\\
		&=\Id.
		\end{split}
		\end{align}
		Thus, with a similarly long equation we may simplify
		\begin{align}
		\begin{split}
		V^\dag g(x) V&=\sum_{\gamma\in\tsf{Irr}(G)} d_\gamma\sum_{i,j=1}^{d_\gamma}(\Id\otimes\bra{i}_\gamma)\hat{f}(\gamma)^\dag\hat{f}(\gamma)(\Id\otimes\ket{j}_\gamma)\braket{i}{\overline{\gamma(x)}}{j}_\gamma\\
		&=\expec{y,z\in G}f(y)^\dag f(z)\sum_{\gamma\in\tsf{Irr}(G)} d_\gamma\sum_{i,j=1}^{d_\gamma}\braket{i}{\gamma(y)^\dag\gamma(z)}{j}\!\!\!\braket{j}{\gamma(x)^\dag}{i}_\gamma\\
		&=\expec{y,z\in G}f(y)^\dag f(z)\sum_{\gamma\in\tsf{Irr}(G)} d_\gamma\Tr\parens*{\gamma(y^{-1}zx^{-1})}\\
		&=\expec{y,z\in G}f(y)^\dag f(z)|G|\delta_{z,yx}\\
		&=\expec{y\in G}f(y)^\dag f(yx).
		\end{split}
		\end{align}
		Noting that $\norm{\parens*{f(x)f(y)-f(xy)}\ket{\psi}}^2=2\braket{\psi}-2\latRe\braket{\psi}{f(y)^\dag f(x)^\dag f(xy)}{\psi}$, we can use the above and the hypothesis to get
		\begin{align}
		\begin{split}
		\norm{(Vf(x)-g(x)V)\ket{\psi}}^2&=2\braket{\psi}-2\latRe\braket{\psi}{f(x)^\dag V^\dag g(x) V}{\psi}\\
		&=\expec{y\in G}\parens*{2\braket{\psi}-2\latRe\braket{\psi}{f(x)^\dag f(y)^\dag f(yx)}{\psi}}\\
		&=\expec{y\in G}\norm{\parens*{f(y)f(x)-f(yx)}\ket{\psi}}^2\leq\varepsilon^2.
		\end{split}
		\end{align}
		which completes the proof.
	\end{proof}

	Later, we naturally come across approximate representations of $\Z_2^n$. These representations are induced by approximate commutation relations of the generators. To show they are in fact approximate representations, we need to relate approximate commutation of the generators to approximate commutation of all the elements. First, we tackle the case that needs no extra assumptions, $G=\Z_2^2$.
	
	\begin{lemma}\label{lem:z22-gh}
		Let $\tsf{V}$ and $\tsf{W}$ be Hilbert spaces, let $\ket{\psi}\in\tsf{V}\otimes\tsf{W}$, and let $U_0,U_1\in\mc{U}(\tsf{V})$ be self-inverse such that
		\begin{align}
			\norm{[U_0,U_1]\ket{\psi}}\leq\delta,		
		\end{align}
		for some $\delta\geq 0$. Then, the function $f:\Z_2^2\rightarrow\mc{U}(\tsf{V})$ defined by $f(00)=\Id$, $f(01)=U_0$, $f(10)=U_1$, and $f(11)=U_0U_1$ is an $\parens{\tfrac{\delta}{\sqrt{2}},\ket{\psi}}$-representation of $\Z_2^2$.
	\end{lemma}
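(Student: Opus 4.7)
Since $\Z_2^2$ has only four elements, the natural approach is to verify the defining inequality directly by enumerating all four choices of $y$. That is, for each $y \in \Z_2^2$ I plan to show
\begin{align*}
\expec{x\in\Z_2^2}\norm{\parens*{f(x)f(y)-f(xy)}\ket{\psi}}^2 \;\le\; \frac{\delta^2}{2},
\end{align*}
taking the square root at the end. The only tools I will need beyond the hypothesis $\norm{[U_0,U_1]\ket{\psi}}\le\delta$ are (i) $U_i^2=\Id$ for $i=0,1$, and (ii) left-multiplication by a unitary preserves the norm. Every approximation that arises will reduce, after left-multiplying by an appropriate product of $U_0,U_1$, to a commutator applied to $\ket{\psi}$.

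The two easy values of $y$ are $y=00$ and $y=10$. For $y=00$, all four terms are identically zero because $f(00)=\Id$. For $y=10$, i.e.\ right-multiplication by $U_1$, a direct check shows $f(x)U_1=f(x\oplus10)$ exactly for all $x$: the key point is that $f(11)=U_0U_1$ has $U_1$ on the right, so the products $f(10)U_1=\Id=f(00)$ and $f(11)U_1=U_0=f(01)$ collapse via $U_1^2=\Id$ without ever needing to commute $U_0$ and $U_1$. Thus the average squared error for $y\in\{00,10\}$ is~$0$.

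The two non-trivial cases are $y=01$ and $y=11$. For each, two of the four values of $x$ again give exact equality (namely $x=00$, and the $x$ for which the associated square of $U_0$ or $U_1$ collapses on the correct side). The remaining two terms each reduce to $\norm{[U_0,U_1]\ket{\psi}}\le\delta$. For example, at $(y,x)=(01,10)$ the error is $\norm{(U_1U_0-U_0U_1)\ket{\psi}}=\norm{[U_0,U_1]\ket{\psi}}$ directly; at $(y,x)=(01,11)$ left-multiplying $U_0U_1U_0-U_1$ by $U_0$ gives $U_1U_0-U_0U_1=-[U_0,U_1]$, so again the norm is at most $\delta$. The case $y=11$ is symmetric: the identity $(U_0U_1)^2-\Id=(U_0U_1)[U_0,U_1]$, obtained by inserting $U_1U_0U_0U_1=\Id$, reduces $\norm{(f(11)^2-\Id)\ket{\psi}}$ to $\norm{[U_0,U_1]\ket{\psi}}$ after left-multiplying by $(U_0U_1)^\dag$.

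Summing, each non-trivial $y$ contributes at most $0+0+\delta^2+\delta^2=2\delta^2$, and averaging over the four $x$ yields $\delta^2/2$. Taking square roots gives the claimed bound $\delta/\sqrt{2}$. There is no real obstacle here: the entire argument is a four-line case analysis, and the only thing one needs to be careful about is, in each approximate case, choosing the correct unitary prefactor to expose a genuine commutator $[U_0,U_1]$ acting on $\ket{\psi}$ (rather than on $U_i\ket{\psi}$, which is not controlled by the hypothesis).
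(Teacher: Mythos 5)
Your proof is correct and follows essentially the same route as the paper: a case-by-case enumeration over the four values of $y$, identifying the exact cancellations for $y\in\{00,10\}$ and, for $y\in\{01,11\}$, reducing each of the two nonvanishing terms to $\norm{[U_0,U_1]\ket{\psi}}\leq\delta$ by left-multiplying by a suitable unitary prefactor. Your explicit observation that one must expose the commutator acting directly on $\ket{\psi}$ (and not on $U_i\ket{\psi}$) is exactly the point the paper's calculation is silently exploiting.
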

	
	\begin{proof}
		This is straightforward to check using the hypothesis and the fact that the action by a unitary does not change the Euclidean norm. For $y=00$,
		\begin{align}
			\expec{x\in G}\norm{\parens*{f(x)f(00)-f(x+00)}\ket{\psi}}^2=\expec{x\in G}\norm{\parens*{f(x)-f(x)}\ket{\psi}}^2=0\leq\frac{\delta^2}{2}.
		\end{align}
		For $y=01$,
		\begin{align}
		\begin{split}
		\expec{x\in G}\norm{\parens*{f(x)f(01)-f(x+01)}\ket{\psi}}^2&=\frac{1}{4}\big(\norm{(U_0-U_0)\ket{\psi}}^2+\norm{(U_0^2-\Id)\ket{\psi}}^2\\
		&+\norm{(U_1U_0-U_0U_1)\ket{\psi}}^2+\norm{(U_0U_1U_0-U_1)\ket{\psi}}^2\big)\\
		&\leq\frac{\delta^2}{2}.
		\end{split}
		\end{align}
		For $y=10$,
		\begin{align}
		\begin{split}
		\expec{x\in G}\norm{\parens*{f(x)f(10)-f(x+10)}\ket{\psi}}^2&=\frac{1}{4}\big(\norm{(U_1-U_1)\ket{\psi}}^2+\norm{(U_0U_1-U_0U_1)\ket{\psi}}^2\\
		&+\norm{(U_1^2-\Id)\ket{\psi}}^2+\norm{(U_0U_1U_1-U_0)\ket{\psi}}^2\big)\\
		&=0\leq\frac{\delta^2}{2}.
		\end{split}
		\end{align}
		And finally, for $y=11$,
		\begin{align}
		\begin{split}
		\expec{x\in G}\norm{\parens*{f(x)f(11)-f(x+11)}\ket{\psi}}^2&=\frac{1}{4}\big(\norm{(U_0U_1-U_0U_1)\ket{\psi}}^2+\norm{(U_0U_0U_1-U_1)\ket{\psi}}^2\\
		&+\norm{(U_1U_0U_1-U_0)\ket{\psi}}^2+\norm{(U_0U_1U_0U_1-\Id)\ket{\psi}}^2\big)\\
		&\leq\frac{\delta^2}{2}.
		\end{split}
		\end{align}
	\end{proof}
	
	Extending a result of this form to $\Z_2^n$ for $n>2$ requires another condition on the unitaries, in order to be able to use the commutation with respect to $\ket{\psi}$ even when there are operators sitting between the state and the unitaries. To do this, we impose an additional relation, arising from our sum-of-squares decomposition, which allows to swap operators onto another register while incurring only a small error.
	
	\begin{lemma}\label{lem:z2n-gh}
		Let $\tsf{V},\tsf{W}$ be Hilbert spaces, let $\ket{\psi}\in\tsf{V}\otimes\tsf{W}$, and let $U_1,...,U_n,V_1,...,V_n\in\mc{U}(\tsf{V})$ be a collection of self-inverse unitaries such that
		\begin{align}
		&\norm{[U_i,U_j]\ket{\psi}}\leq\delta\label{eq:z2n-comm}\\
		&\norm{U_i\ket{\psi}-V_i\ket{\psi}}\leq\epsilon\label{eq:z2n-type}\\
		&[U_i,V_j]=0\label{eq:z2n-reg}
		\end{align}
		for some $\delta,\epsilon\geq 0$. Then, the map
		\begin{align}
			\begin{matrix}f:&\Z_2^n&\rightarrow&\mc{U}(\tsf{V})\\&x&\mapsto&U^x,\end{matrix}
		\end{align}
		where $U^x:=U_1^{x_1}\cdots U_n^{x_n}$, is an $(n^2(3\epsilon+\delta),\ket{\psi})$-representation of $\Z_2^n$.
	\end{lemma}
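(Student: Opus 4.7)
The plan is to establish the stronger pointwise bound $\norm{(U^x U^y - U^{x+y})\ket{\psi}} \leq n^2(3\epsilon+\delta)$ for every $x,y\in\Z_2^n$, from which the expected-value inequality in the definition of an $(\epsilon,\ket{\psi})$-representation follows immediately. The starting observation is that each $U_i$ is self-inverse, so $U_i^{x_i+y_i} = U_i^{x_i}U_i^{y_i}$, and hence
\[
U^{x+y} = U_1^{x_1}U_1^{y_1}\,U_2^{x_2}U_2^{y_2}\cdots U_n^{x_n}U_n^{y_n}, \qquad U^xU^y = U_1^{x_1}\cdots U_n^{x_n}\cdot U_1^{y_1}\cdots U_n^{y_n}.
\]
The two expressions differ only in the ordering of factors, and one is transformed into the other by bubbling each $U_i^{y_i}$ leftward past the operators $U_j^{x_j}$ with $j>i$ --- a total of $\binom{n}{2}$ adjacent swaps. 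A telescoping sum with the triangle inequality then bounds $\norm{(U^xU^y-U^{x+y})\ket{\psi}}$ by a sum of $\binom{n}{2}$ terms, each of the form $\norm{[U_i,U_j]\,W\ket{\psi}}$, where $W$ is the product of $U_k$'s lying between the swap and $\ket{\psi}$ at that intermediate stage.

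The central step is to bound each such commutator term using all three hypotheses in concert. Let $V_W$ denote the operator obtained from $W$ by replacing every $U_k$ with the corresponding $V_k$. By \cref{eq:z2n-reg}, each $V_k$ commutes with both $U_i$ and $U_j$, so $V_W$ commutes with $[U_i,U_j]$ and hence
\[
\norm{[U_i,U_j]\,V_W\ket{\psi}} = \norm{V_W[U_i,U_j]\ket{\psi}} \leq \delta
\]
by \cref{eq:z2n-comm}. The remaining contribution $\norm{[U_i,U_j](W-V_W)\ket{\psi}}$ is at most $2\norm{(W-V_W)\ket{\psi}}$ by the operator-norm bound $\norm{[U_i,U_j]}\leq 2$.

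The final subgoal is therefore to bound $\norm{(W-V_W)\ket{\psi}}$, which the plan is to do via a further hybrid argument that substitutes the $U_k$'s in $W$ by $V_k$'s one at a time. Each substitution uses \cref{eq:z2n-type} to approximate $U_k\ket{\psi}$ by $V_k\ket{\psi}$ at a cost of $\epsilon$, while \cref{eq:z2n-reg} is used to slide the already-substituted $V_k$'s past the remaining $U_l$'s so that the next $U_k$ acts directly on $\ket{\psi}$. Summing the per-swap errors over the $\binom{n}{2}\leq n^2/2$ adjacent swaps then yields the claimed bound $n^2(3\epsilon+\delta)$. The main obstacle is the careful error accounting in these nested hybrids, since the $V_k$'s are not assumed to commute with one another: every rearrangement must be routed through the exact commutation $[U_i,V_j]=0$ of \cref{eq:z2n-reg} rather than swapping distinct $V$'s directly, in order to avoid accumulating uncontrolled extra errors in the analysis.
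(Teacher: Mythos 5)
Your high-level plan --- telescope $U^xU^y-U^{x+y}$ into $\binom{n}{2}$ adjacent-swap commutator terms and bound each $\norm{[U_i,U_j]W\ket{\psi}}$ by $\delta+2\norm{(W-V_W)\ket{\psi}}$ using \cref{eq:z2n-reg} and \cref{eq:z2n-comm} --- is sound, and the per-term bound you derive is correct. The difficulty is the final error accounting, which does not deliver the claimed bound. Because every intermediate state $Z_k$ in your telescope is a word in the $U_k$'s alone, the suffix $W$ at each step consists of all $U$-factors to the right of the swap pair: the $x$-factors that the moving $y$-factor has already passed, plus the entire unmoved tail of $y$-factors. That is $\Theta(n)$ operators, so the fresh hybrid $W\to V_W$ costs $|W|\epsilon=\Theta(n\epsilon)$ per swap. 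Summing over swaps, the total $\epsilon$-contribution is $2\epsilon\sum_k|W_k|$, and in the worst case $x=y=1\cdots1$ a direct count gives $\sum_k|W_k|=\tfrac{1}{2}n(n-1)^2=\Theta(n^3)$ (already for $n=4$ the suffix sizes are $3,4,5,2,3,1$, summing to $18>n^2$). Your argument therefore yields a bound of order $n^3\epsilon+n^2\delta$, not $n^2(3\epsilon+\delta)$, and does not establish the lemma as stated.

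The paper avoids the $n^3$ blowup by not insisting that the intermediate states be words in $U$'s alone. It converts the tail $U^{y^2}$ into $V^{y^2}$ once (cost $|y^2|\epsilon$) and then, in each ``cycle,'' moves a single $U_{i_\ell}$ from the $y$-part leftward past the $U^{x^2}$-factors, converting each $U_j$ it passes into $V_j$ on the fly and leaving it as a $V$. Since $V$-factors commute exactly with everything by \cref{eq:z2n-reg}, they drop out of all subsequent commutator bounds at no cost, so no hybrid is redone from scratch per swap. Each of the at most $|y|\leq n$ cycles then costs only $O(n(\epsilon+\delta))$, giving the $n|y|(2\epsilon+\delta)+|y^2|\epsilon\leq n^2(2\epsilon+\delta)+n\epsilon$ total. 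To salvage your proof you would need to carry the $U\to V$ substitutions forward across swaps --- that is, allow $V$'s to appear in the telescope's intermediate states --- rather than rebuilding $V_W$ independently for each commutator term. As written, there is a genuine gap of a factor of $n$ in the $\epsilon$-term.
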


	\begin{proof}
		Let $x,y\in\Z_2^n$. Then $f(x)f(y)-f(x+y)=U^xU^y-U^{x+y}$. Write $V^x=V_n^{x_n}\cdots V_1^{x_1}$ Suppose the first nonzero term of $y$ is at position $i_0$. Write $x^1=x_1\ldots x_{i_0-1}0\ldots0$ and $x^2=0\ldots0x_{i_0+1}\ldots x_n$ and similarly for $y$. By hypothesis, this gives via \cref{eq:z2n-type}
		\begin{align}
		\begin{split}
		\norm{(U^xU^y-U^{x+y})\ket*{\psi}}&=\norm{(U^{x^1}U_{i_0}^{x_{i_0}}U^{x^2}U_{i_0}U^{y^2}-U^{x+y})\ket*{\psi}}\\
		&\leq \norm{(U^{x^1}U_{i_0}^{x_{i_0}}U^{x^2}U_{i_0} V^{y^2}-U^{x+y})\ket*{\psi}}+|y^2|\epsilon.
		\end{split}
		\end{align}
		Now, we can shift $U_{i_0}$ up through $U^{x^2}$ by using the commutation relations \cref{eq:z2n-comm,eq:z2n-reg} and then replacing that term of $U^{x^2}$ with the corresponding $V$ term, and continuing recursively. This adds an error
		\begin{align}
		\norm{(U^{x^1}U_{i_0}^{x_{i_0}}U^{x^2}U_{i_0} V^{y^2}-U^{x+y})\ket*{\psi}}\leq \norm{(U^{x^1}U_{i_0}^{x_{i_0}+y_{i_0}}V^{y^2}V^{x^2}-U^{x+y})\ket*{\psi}}+|x^2|(\epsilon+\delta).
		\end{align}
		We can then shift $V^{x^2}$ and the first term, $i_1$, of $y^2$ back:
		\begin{align}
		\begin{split}
		\norm{(U^{x^1}U_{i_0}^{x_{i_0}}U^{x^2}U_{i_0} V^{y^2}-U^{x+y})\ket*{\psi}}&\leq \norm{(U^{x^1}U_{i_0}^{x_{i_0}+y_{i_0}}U^{x^2}U^{y_{i_1}}V^{{y^2}^2}-U^{x+y})\ket*{\psi}}+|x^2|(2\epsilon+\delta)+\epsilon\\
		&\leq \norm{(U^{x^1}U_{i_0}^{x_{i_0}+y_{i_0}}U^{x^2}U^{y_{i_1}}V^{{y^2}^2}-U^{x+y})\ket*{\psi}}+n(2\epsilon+\delta)
		\end{split}
		\end{align}
		Note that the above estimate is relatively crude. This process can be repeated another $|y^2|$ times to get
		\begin{align}
		\norm{(U^xU^y-U^{x+y})\ket*{\psi}}\leq n|y|(2\epsilon+\delta)+|y^2|\epsilon\leq n^2(2\epsilon+\delta)+n\epsilon,
		\end{align}
		which gives the result.
	\end{proof}

\section{Monogamy-of-Entanglement Games} \label{sec:monogamy}

In this section, we formally introduce the concept of a monogamy-of-entanglement game. In \cref{sec:moeg-defs}, we define monogamy-of-entanglement games and how to play them, and introduce the game from \cite{TFKW13} we study in this paper. In \cref{sec:bias}, we introduce a different way to look at winning a game, and use this to get an algebraic approach (sum-of-squares decomposition of the game polynomial) to upper bounding the winning probability. This method is adapted from what has been used before for nonlocal games \cite{BaP15}.

\begin{figure}[h!]
	\centering
	\begin{tikzpicture}
	\draw[rounded corners] (0,-3) rectangle (2,3);
	\draw (3.5,-0.75) rectangle (5.5,0.75);
	\draw (3.5,1.5) rectangle (5.5,3);
	\draw (3.5,-1.5) rectangle (5.5,-3);
	\draw[dashed] (3.25,-1.15) -- (5.75,-1.15);
	\draw (2,0) -- (3.5,0);
	\draw (2,2.25) -- (3.5,2.25);
	\draw (2,-2.25) -- (3.5,-2.25);
	\draw[gray] (2.25,3) rectangle (3.25,4);
	\draw[gray] (2.75,3) -- (2.75,-1.875) -- (3.5,-1.875);
	\draw[gray] (2.75,0.375) -- (3.5,0.375);
	\draw[gray] (2.75,2.675) -- (3.5,2.675);
	\draw[gray, -Latex] (5.5,0) -- (6.5,0) node[right]{\color{black}$y_B$};
	\draw[gray, -Latex] (5.5,2.25) -- (6.5,2.25) node[right]{\color{black}$y$};
	\draw[gray, -Latex] (5.5,-2.25) -- (6.5,-2.25) node[right]{\color{black}$y_C$};
	
	\node at (1,0){$\rho$};
	\node at (4.5,0){Bob};
	\node at (4.5,2.25){Alice};
	\node at (4.5,-2.25){Charlie};
	\node at (2.75,3.5){$\Theta$};
	\end{tikzpicture}
	\caption{Scenario of a monogamy-of-entanglement game. Note that Alice's measurements, though not included in the diagram, are fixed by the description of the game.}
	\label{fig:moe-game}
\end{figure}
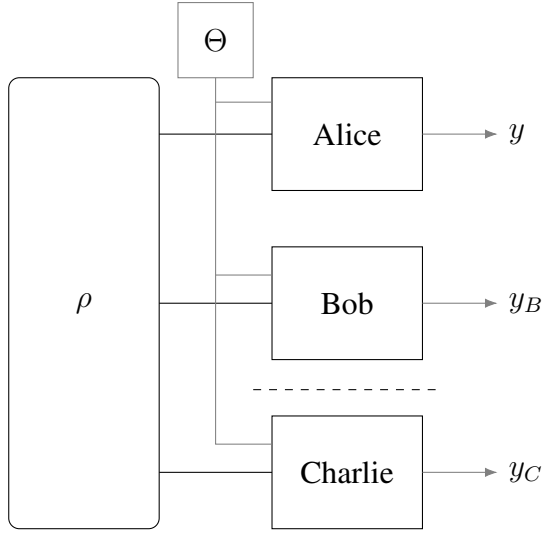

\subsection{Definitions}\label{sec:moeg-defs}

Informally, a monogamy-of-entanglement (MoE) game is a game played by three quantum parties: a trusted referee, Alice, against two collaborating adversaries, Bob and Charlie, who may agree on a strategy but do not communicate while the game is in play. Such a game is played as follows:
\begin{enumerate}[(1)]
	\item The adversaries prepare a quantum state $\rho_{ABC}$ shared between the three players. After this, they may no longer communicate.
	
	\item Alice chooses a measurement to make on her space and provides Bob and Charlie the information about what measurement she chose.
	
	\item Alice measures, and Bob and Charlie both try to guess her measurement outcome using their parts of the state.
	
	\item The adversaries win the game if they simultaneously guessed Alice's outcome correctly.
\end{enumerate}

The setup for a generic MoE game is given in \cref{fig:moe-game}. Note that if there were only one adversary, they would always be able to guess Alice's measurement (as long as it is projective) by sharing a maximally entangled state. However, this is not in general true for MoE games because there is no maximal tripartite entanglement. We can define such a game more formally as follows.


\begin{definition}
	An \emph{monogamy-of-entanglement (MoE) game} is a tuple $\ttt{G}=\parens*{\Theta,Y,\tsf{A},\pi,A}$, where
	\begin{itemize}
		\item $\Theta$ is a finite set representing the possible questions;
		\item $Y$ is a finite set representing the possible answers;
		\item $\tsf{A}$ is the complex Hilbert space that Alice holds;
		\item $\pi:\Theta\rightarrow[0,1]$ is a function representing the probability that Alice chooses each question;
		\item and $A$ is a positive operator-valued function
		\begin{align}
		\begin{matrix}A:&\Theta\times Y&\rightarrow&\mc{P}(\tsf{A})\\&(\theta,y)&\mapsto&A^{\theta}_{y}\end{matrix}
		\end{align}
		such that, for each $\theta$, $A^\theta:Y\rightarrow\mc{P}(\tsf{A})$ is a POVM.
	\end{itemize}
\end{definition}

The MoE game we study here is the original game of this kind introduced in \cite{TFKW13}, where Alice measures a single qubit in a conjugate-coding basis.

\begin{definition}
	The \emph{TFKW game} is the MoE game $\ttt{TFKW}=\parens*{\Z_2,\Z_2,\tsf{Q},\mbb{u},A}$, where $\mbb{u}(\theta)=\frac{1}{2}$ is the uniform distribution and $A^\theta_y=\ketbra{y^\theta}$.
\end{definition}

The strategies Bob and Charlie may use are constrained only by the laws of quantum mechanics. There are other classes of strategies based on other resource theories \cite{JMRW16} that are not studied here.


\begin{definition}
	A quantum \emph{strategy} for an MoE game $\ttt{G}=\parens*{\Theta,Y,\tsf{A},\pi,A}$ is a tuple $\ttt{S}=\parens*{\tsf{B},\tsf{C},B,C,\rho}$, where
	\begin{itemize}
		\item $\tsf{B}$ and $\tsf{C}$ are the complex Hilbert spaces that Bob and Charlie hold, respectively;
		\item $B$ and $C$ are Bob and Charlie's quantum measurements, so positive operator-valued functions
		\begin{align}
		\begin{matrix}B:&\Theta\times Y&\rightarrow&\mc{P}(\tsf{B})\\&(\theta,y)&\mapsto&B^{\theta}_{y}\end{matrix}\qquad\qquad\qquad\begin{matrix}C:&\Theta\times Y&\rightarrow&\mc{P}(\tsf{C})\\&(\theta,y)&\mapsto&C^{\theta}_{y}\end{matrix},
		\end{align}
		such that $B^{\theta}$ and $C^{\theta}$ are POVMs.
		\item and $\rho\in\mc{D}(\tsf{A}\otimes\tsf{B}\otimes\tsf{C})$ is a shared quantum state.
	\end{itemize}
\end{definition}

\begin{definition}
	The \emph{winning probability} of a strategy $\ttt{S}$ for a game $\ttt{G}$ is
	\begin{align}
		\mfk{w}_{\ttt{G}}(\ttt{S})&=\expec{\theta\leftarrow\pi}\sum_{y\in Y}\Tr\squ*{\parens*{A^{\theta}_{y}\otimes B^{\theta}_{y}\otimes C^{\theta}_{y}}\rho}.
	\end{align}
	The optimal winning probability of the game is the supremum over strategies $\mfk{w}_\ttt{G}=\sup_{\ttt{S}}\mfk{w}_\ttt{G}(\ttt{S}).$
\end{definition}

Note that there may not necessarily be a strategy that wins with probability $\mfk{w}_{\ttt{G}}$ if the set of winning probabilities is not closed.

In \cite{TFKW13}, the winning probability of the TFKW game was found to be about $0.85$.

\begin{theorem}[\cite{TFKW13}]
	\begin{align}
		\mfk{w}_{\ttt{TFKW}}=\cos^2\parens*{\frac{\pi}{8}}=\frac{1}{2}+\frac{1}{2\sqrt{2}}.
	\end{align}
\end{theorem}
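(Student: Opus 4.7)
The plan is to prove matching upper and lower bounds. For the lower bound, I would exhibit the explicit unentangled Breidbart strategy already highlighted in the introduction: Alice's register holds the pure state $\ket{\beta}$, Bob's and Charlie's registers are trivial, and their POVMs deterministically output $y=0$ regardless of $\theta$. Its winning probability is $\tfrac{1}{2}\parens*{|\braket{0}{\beta}|^2+|\braket{+}{\beta}|^2}$; using $\ket{\beta}=\cos\tfrac{\pi}{8}\ket{0}+\sin\tfrac{\pi}{8}\ket{1}$ together with the identity $(\cos\tfrac{\pi}{8}+\sin\tfrac{\pi}{8})^2=1+\sin\tfrac{\pi}{4}$, both inner products squared equal $\cos^2\tfrac{\pi}{8}$, so this strategy attains the claimed value.

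For the matching upper bound, I would pass to the operator-norm formulation. The winning probability of any strategy $\ttt{S}$ equals $\Tr[\rho\,\Pi]$ for the game operator
\begin{align*}
\Pi=\frac{1}{2}\parens*{\Pi^0+\Pi^1},\qquad \Pi^\theta=\sum_{y\in\Z_2}\ketbra{y^\theta}\otimes B^\theta_y\otimes C^\theta_y,
\end{align*}
so $\mfk{w}_\ttt{TFKW}(\ttt{S})\leq\|\Pi\|$. By Naimark dilation (\emph{cf.}~\cref{lem:pure-pvm}) I may assume without loss of generality that Bob's and Charlie's POVMs are projective; then each $\Pi^\theta$ is itself an orthogonal projection on $\tsf{A}\otimes\tsf{B}\otimes\tsf{C}$, being a sum over $y$ of mutually orthogonal summands (since the rank-one projectors $\ketbra{0^\theta}$ and $\ketbra{1^\theta}$ are orthogonal on Alice's qubit). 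It therefore suffices to establish $\|\Pi^0+\Pi^1\|\leq 2\cos^2\tfrac{\pi}{8}=1+\tfrac{1}{\sqrt{2}}$.

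The technical heart of the argument, which I expect to be the main obstacle, is the overlap bound $\|\Pi^0\Pi^1\|\leq\tfrac{1}{\sqrt{2}}$; once this is in hand, the standard identity $\|P+Q\|\leq 1+\|PQ\|$ for orthogonal projections $P,Q$ immediately yields the desired inequality. Expanding the product gives
\begin{align*}
\Pi^0\Pi^1=\sum_{y,y'\in\Z_2}\braket{y^0}{y'^1}\,\ket{y^0}\!\bra{y'^1}\otimes B^0_yB^1_{y'}\otimes C^0_yC^1_{y'},
\end{align*}
and every Alice-side overlap satisfies $|\braket{y^0}{y'^1}|=\tfrac{1}{\sqrt{2}}$ uniformly in $(y,y')$. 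To convert this uniform overlap into an operator-norm bound, I would adapt the TFKW ``overlap'' technique: evaluate $\|\Pi^0\Pi^1\ket{\phi}\|^2$ on an arbitrary unit vector, apply Cauchy--Schwarz to the resulting double sum, and collapse cross terms using the completeness relations $\sum_yB^0_y=\sum_{y'}B^1_{y'}=\Id_B$ on Bob's side and likewise on Charlie's side. This yields $\|\Pi^0\Pi^1\|\leq\tfrac{1}{\sqrt{2}}$ and hence $\|\Pi\|\leq\cos^2\tfrac{\pi}{8}$, matching the lower bound and closing the proof.
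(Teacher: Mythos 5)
Your proposal is correct and reproduces the original argument of \cite{TFKW13}, which the paper cites here without reproof. The lower-bound computation with $\ket{\beta}=\cos\tfrac{\pi}{8}\ket{0}+\sin\tfrac{\pi}{8}\ket{1}$ is exactly right, and the upper bound via the game operator $\Pi=\tfrac{1}{2}(\Pi^0+\Pi^1)$, Naimark dilation, the projector inequality $\|P+Q\|\leq 1+\|PQ\|$, and the overlap bound $\|\Pi^0\Pi^1\|\leq\tfrac{1}{\sqrt{2}}$ is the TFKW machinery. The only caveat is that the last step is sketched a bit casually --- pinning down $\|\Pi^0\Pi^1\|\leq\max_{y,y'}\|A^0_yA^1_{y'}\|$ requires more than ``Cauchy--Schwarz plus completeness''; you need to exploit that the $A^\theta_y$ are rank-one and orthogonal and that the dilated $B,C$ are projective, which TFKW handle with some care in their Lemma~3. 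Still, the route is sound.

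It is worth noting that this paper develops a genuinely different upper-bound proof, even though it only arrives in \cref{sec:sos}: the sum-of-squares certificate in \cref{eq:sos} for $\sqrt{2}-P$ immediately shows $\mfk{b}_{\ttt{TFKW}}\leq\sqrt{2}$, and hence $\mfk{w}_{\ttt{TFKW}}\leq\cos^2\tfrac{\pi}{8}$, without any operator-norm or overlap analysis. The overlap technique you chose is more matrix-analytic and gives the numerical bound efficiently, but it yields no structural information about optimal strategies. The SOS decomposition is what the paper actually needs: the same certificate that proves the bound forces any optimal state into the joint kernel of the squared terms, which is precisely how the rigidity relations \cref{eq:rels1}--\cref{eq:rels4} arise in \cref{thm:exact}. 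So for the purposes of this paper the SOS route carries far more weight, whereas yours is the shorter path if all you want is the value of $\mfk{w}_{\ttt{TFKW}}$.
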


The canonical strategy for this game is unentangled, i.e. Bob and Charlie share no entanglement: they simply provide Alice with a copy of the Breidbart state $\ket{\beta}$ and always guess measurement outcome $0$. Note that there are optimal strategies using any of the single-qubit Wiesner-Breidbart states due to the symmetries of Alice's measurement operators. The behaviour of these strategies is given in \cref{tab:strats}.

\begin{table}[h!]
	\centering
	\begin{tabular}{c||c|c}
		State & $\theta=0$ & $\theta=1$\\\hline\hline
		$\ket{\beta}$ & $0$ & $0$\\\hline 
			$Z\ket{\beta}$ & $0$ & $1$ \\\hline 
			$X\ket{\beta}$ & $1$ & $0$ \\\hline 
			$XZ\ket{\beta}$ & $1$ & $1$
	\end{tabular}
	\caption{Answers $y$ for the unentangled optimal strategies of the TFKW game. Bob and Charlie reply with the same answer, depending on both $\theta$ and the Wiesner-Breidbart state they chose.}
	\label{tab:strats}
\end{table}

A result of \cite{TFKW13} gives that a strategy for an MoE game may be assumed to be pure, i.e. the shared state is pure and Bob and Charlie's measurements are projective.

\begin{theorem}(\cite{TFKW13})\label{lem:purify}
	A strategy $\ttt{S}=\parens*{\tsf{B},\tsf{C},B,C,\rho}$ for an MoE game $\ttt{G}$ may be purified to a pure strategy $\tilde{\ttt{S}}=\parens*{\tilde{\tsf{B}},\tilde{\tsf{C}},\tilde{B},\tilde{C},\tilde{\rho}}$, where $\tilde{B}$ and $\tilde{C}$ are projective and $\tilde{\rho}=\ketbra{\psi}$ is pure, that wins with the same probability.
\end{theorem}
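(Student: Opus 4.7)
The plan is to handle the two requirements (projectivity of measurements, purity of the shared state) independently and then combine them, relying on Naimark's dilation for the measurements and the standard purification result (mentioned in the preliminaries as \cref{lem:pure-state}) for the state.

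First, I would use Naimark's dilation applied to each of Bob's POVMs $B^\theta$ to obtain, for each $\theta\in\Theta$, an auxiliary Hilbert space $\tsf{B}'_\theta$, a fixed state $\ket{0}_\theta$, and a projective measurement $\{P^\theta_y\}_y$ on $\tsf{B}\otimes\tsf{B}'_\theta$ satisfying $B^\theta_y=(\Id\otimes\bra{0}_\theta)P^\theta_y(\Id\otimes\ket{0}_\theta)$. Since different $\theta$ may call for different dilations, I would then take the single combined ancilla $\tsf{B}'=\bigotimes_{\theta\in\Theta}\tsf{B}'_\theta$, prepared in $\bigotimes_\theta\ket{0}_\theta$, and extend each $P^\theta_y$ trivially on the factors $\tsf{B}'_{\theta'}$ for $\theta'\neq\theta$; this yields projective measurements $\tilde{P}^\theta$ on $\tsf{B}\otimes\tsf{B}'$ that still reproduce $B^\theta$ after tracing out the ancilla in its prepared state. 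I would perform the completely symmetric construction for Charlie, producing $\tsf{C}'$ and projective measurements $\tilde{Q}^\theta$ on $\tsf{C}\otimes\tsf{C}'$.

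Next, I would purify the state. Applying \cref{lem:pure-state} to $\rho_{ABC}$, take a purification $\ket{\phi}\in\tsf{A}\otimes\tsf{B}\otimes\tsf{C}\otimes\tsf{R}$ with $\Tr_R\ketbra{\phi}=\rho_{ABC}$. Split (arbitrarily) $\tsf{R}=\tsf{R}_B\otimes\tsf{R}_C$ — for instance, assign all of $\tsf{R}$ to Bob by taking $\tsf{R}_C=\C$. Define Bob's and Charlie's final spaces as $\tilde{\tsf{B}}=\tsf{B}\otimes\tsf{B}'\otimes\tsf{R}_B$ and $\tilde{\tsf{C}}=\tsf{C}\otimes\tsf{C}'\otimes\tsf{R}_C$, the pure shared state as
\begin{align}
\ket{\psi}=\ket{\phi}_{A\tsf{B}\tsf{R}_B\tsf{C}\tsf{R}_C}\otimes\ket{0}_{\tsf{B}'}\otimes\ket{0}_{\tsf{C}'},
\end{align}
and the measurements as $\tilde{B}^\theta_y=\tilde{P}^\theta_y\otimes\Id_{\tsf{R}_B}$ and $\tilde{C}^\theta_y=\tilde{Q}^\theta_y\otimes\Id_{\tsf{R}_C}$, which remain projective.

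Finally, I would verify the winning probability. Since $\tilde{B}^\theta_y\otimes\tilde{C}^\theta_y$ acts as the identity on $\tsf{R}_B\otimes\tsf{R}_C$, I can trace those registers out of $\ketbra{\psi}$ to replace them by $\rho_{ABC}\otimes\ketbra{0}_{\tsf{B}'}\otimes\ketbra{0}_{\tsf{C}'}$; then the Naimark relations applied on Bob's and Charlie's sides collapse $\tilde{P}^\theta_y$ back to $B^\theta_y$ and $\tilde{Q}^\theta_y$ back to $C^\theta_y$, showing $\mfk{w}_{\ttt{G}}(\tilde{\ttt{S}})=\mfk{w}_\ttt{G}(\ttt{S})$. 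There is no substantive obstacle here — every step is a standard dilation — the only mild care needed is to use a single $\theta$-independent ancilla space so that Bob and Charlie's total Hilbert spaces $\tilde{\tsf{B}},\tilde{\tsf{C}}$ do not depend on the question, and to keep the purification register disjoint from the Naimark ancillas so that the projectivity of $\tilde{P}^\theta, \tilde{Q}^\theta$ is preserved after extending by $\Id$.
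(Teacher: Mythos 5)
Your proposal is correct and follows essentially the same route as the paper: Naimark dilation (the paper's \cref{lem:pure-pvm}) to make each of Bob's and Charlie's POVMs projective, followed by state purification (\cref{lem:pure-state}). You are somewhat more explicit than the paper about combining the per-question ancillas into a single $\theta$-independent space and about splitting the purification register between the players; the paper handles these points implicitly, and in its subsequent notion of a ``purified strategy'' it keeps the purifying register $\tsf{R}$ inaccessible to both Bob and Charlie rather than folding it into Bob's space as you do, but this is immaterial for the statement as written.
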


\begin{proof}
	Using \cref{lem:pure-pvm}, all the POVMs $B^\theta$, $C^\theta$ can be purified. This maps the state isometrically and locally to $\rho\otimes\ketbra{\mathrm{aux}}$ for some auxiliary state in the extension of Bob and Charlie's spaces. Finally, we can purify the state by appending another auxiliary register using \cref{lem:pure-state}.
\end{proof}

We called a strategy \emph{purified} if it is pure as in the above lemma, but there additionally exists an auxiliary register $R$ to which none of the players have access, such that $\ket{\psi}\in\tsf{A}\otimes\tsf{B}\otimes\tsf{C}\otimes\tsf{R}$. In this way, we may reach the state of any general strategy, up to local isometry, simply by tracing out this register, which does not affect the gameplay.

One way to construct new MoE games is using parallel repetition. Given an MoE game $\ttt{G}$, a parallel repetition is the game where $\ttt{G}$ is played some fixed number of times $n$ simultaneously. To win the parallel repetition, the adversaries must win all $n$ copies of $\ttt{G}$.

\begin{definition}
	Let $\ttt{G}=\parens*{\Theta,Y,\tsf{A},\pi,A}$ be an MoE game and let $n\in\N$. The \emph{$n$-fold parallel repetition} of $\ttt{G}$ is the MoE game $\ttt{G}^n=\parens*{\Theta^n,Y^n,\tsf{A}^{\otimes n}, \pi^n, A^n}$ where $\pi^n(\theta_1,\ldots,\theta_n)=\pi(\theta_1)\cdots\pi(\theta_n)$ and $(A^n)^{(\theta_1,\ldots,\theta_n)}_{(y_1,\ldots,y_n)}=A^{\theta_1}_{y_1}\otimes\cdots\otimes A^{\theta_n}_{y_1}$.
\end{definition}

For convenience, we write in general $\tsf{A}^{\otimes n}=\tsf{A}_1\otimes\cdots\otimes\tsf{A}_n$ where $\tsf{A}_i=\tsf{A}$, in order to be able to distinguish terms in different positions. The major result of \cite{TFKW13} is that they show that the adversaries cannot do better at the parallel-repeated TFKW game than by just playing a separate optimal strategy of the single game on each copy. This leads to an exponentially-decreasing bound on the winning probability.

\begin{theorem}[\cite{TFKW13}]
	\begin{align}
		\mfk{w}_{\ttt{TFKW}^n}=\parens*{\cos^2\parens*{\frac{\pi}{8}}}^n.
	\end{align}
\end{theorem}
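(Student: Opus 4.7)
The lower bound $\mfk{w}_{\ttt{TFKW}^n}\geq\cos^{2n}(\pi/8)$ is immediate from the product strategy: Bob and Charlie play an optimal single-copy strategy (for example, sharing $\ket{\beta}^{\otimes n}$ with Alice and always answering the all-zero string) independently in each of the $n$ rounds, so that the winning probability is the $n$-th power of $\mfk{w}_{\ttt{TFKW}}=\cos^2(\pi/8)$.

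For the upper bound, the plan is to follow the overlap technique of \cite{TFKW13}. Using \cref{lem:purify}, one may assume the strategy $\ttt{S}=(\tsf{B},\tsf{C},B,C,\ketbra{\psi})$ is pure and projective. Then for each $\vec\theta\in\Z_2^n$ the operator
$$\Pi^{\vec\theta}=\sum_{\vec y\in\Z_2^n}A^{\vec\theta}_{\vec y}\otimes B^{\vec\theta}_{\vec y}\otimes C^{\vec\theta}_{\vec y}$$
is itself a projector by outcome-orthogonality of each factor, and the winning probability satisfies
$$\mfk{w}_{\ttt{TFKW}^n}(\ttt{S})=\langle\psi|\,\mathbb{E}_{\vec\theta}\,\Pi^{\vec\theta}\,|\psi\rangle\leq \left\|\frac{1}{2^n}\sum_{\vec\theta\in\Z_2^n}\Pi^{\vec\theta}\right\|,$$
so the task reduces to bounding this operator norm. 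The technical heart is an operator-norm inequality of Schur-test type (Lemma~2 of \cite{TFKW13}) that bounds the norm of a sum of positive operators by their pairwise square-root overlaps, reducing the task to controlling $\|\Pi^{\vec\theta}\Pi^{\vec\theta'}\|$. Since each $A^{\vec\theta}_{\vec y}=|\vec y^{\vec\theta}\rangle\langle \vec y^{\vec\theta}|$ is rank one and $B,C$ are projective, one obtains
$$\|\Pi^{\vec\theta}\Pi^{\vec\theta'}\|\leq \max_{\vec y,\vec y'}|\langle \vec y^{\vec\theta}|\vec y'^{\vec\theta'}\rangle|=2^{-d(\vec\theta,\vec\theta')/2},$$
where $d$ denotes Hamming distance, because the single-qubit overlaps are $1$ at matching positions (when $y_i=y_i'$) and $1/\sqrt 2$ at non-matching positions. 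By translation-invariance of $d$, the sum over $\vec\theta'$ factorizes over the $n$ positions:
$$\frac{1}{2^n}\sum_{\vec\theta'\in\Z_2^n}2^{-d(\vec\theta,\vec\theta')/2}=\prod_{i=1}^n\frac{1}{2}\!\left(1+\frac{1}{\sqrt 2}\right)=\left(\frac{1}{2}+\frac{1}{2\sqrt 2}\right)^n=\cos^{2n}(\pi/8),$$
matching the lower bound.

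The hardest part, I expect, is establishing the overlap estimate $\|\Pi^{\vec\theta}\Pi^{\vec\theta'}\|\leq 2^{-d(\vec\theta,\vec\theta')/2}$: expanding the product produces a double sum over $\vec y,\vec y'$ in which the Bob- and Charlie-side factors $B^{\vec\theta}_{\vec y}B^{\vec\theta'}_{\vec y'}$ and $C^{\vec\theta}_{\vec y}C^{\vec\theta'}_{\vec y'}$ are not automatically small, so one must carefully exploit the rank-one structure of $A$ together with the projectivity of $B$ and $C$ to control the cross terms. Proving (and correctly applying) the Schur-type inequality that reduces $\|\mathbb{E}_{\vec\theta}\Pi^{\vec\theta}\|$ to a sum of pairwise overlaps is the other nontrivial ingredient; everything else is bookkeeping on the BB84 overlaps.
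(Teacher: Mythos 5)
The paper does not give a proof of this result --- it is stated as a theorem cited from \cite{TFKW13}, and the SOS decomposition in \cref{sec:sos} is only used to give an alternative proof of the single-copy ($n=1$) bound, not the parallel repetition. Your sketch is a faithful reconstruction of the original TFKW13 overlap argument, and the step you flag as hardest does go through cleanly. Since $A^{\vec\theta}$ is itself a PVM with rank-one outcomes $\ketbra{\vec y^{\vec\theta}}$, the operators $P^{\vec\theta}=\sum_{\vec y}A^{\vec\theta}_{\vec y}\otimes B^{\vec\theta}_{\vec y}\otimes\Id_C$ and $\tilde P^{\vec\theta}=\sum_{\vec y}A^{\vec\theta}_{\vec y}\otimes\Id_B\otimes C^{\vec\theta}_{\vec y}$ are commuting projectors with $P^{\vec\theta}\tilde P^{\vec\theta}=\Pi^{\vec\theta}$, so $\|\Pi^{\vec\theta}\Pi^{\vec\theta'}\|=\|\tilde P^{\vec\theta}P^{\vec\theta}P^{\vec\theta'}\tilde P^{\vec\theta'}\|\le\|\tilde P^{\vec\theta}P^{\vec\theta'}\|$; one then computes
\begin{align}
P^{\vec\theta'}\tilde P^{\vec\theta}P^{\vec\theta'}=\sum_{\vec y,\vec y''}\abs*{\braket*{\vec y^{\vec\theta'}}{\vec y''^{\vec\theta}}}^2\, A^{\vec\theta'}_{\vec y}\otimes B^{\vec\theta'}_{\vec y}\otimes C^{\vec\theta}_{\vec y''},
\end{align}
which is a sum of mutually orthogonal projectors weighted by the squared single-vector overlaps, so its norm is exactly $\max_{\vec y,\vec y''}\abs*{\braket*{\vec y^{\vec\theta'}}{\vec y''^{\vec\theta}}}^2 = 2^{-d(\vec\theta,\vec\theta')}$, confirming your overlap estimate. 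Feeding this into the Schur-type norm bound over the mutually orthogonal permutations $\vec\theta\mapsto\vec\theta+k$, $k\in\Z_2^n$, gives $2^{-n}\sum_{k\in\Z_2^n}2^{-|k|/2}=\parens*{\tfrac{1}{2}+\tfrac{1}{2\sqrt{2}}}^n=\cos^{2n}\parens*{\tfrac{\pi}{8}}$ as you compute, matching the product lower bound. The outline is correct and complete modulo a precise statement and proof of that Schur-type inequality, which is indeed the second lemma of \cite{TFKW13}.
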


We make use of a different notion of winning probability for parallel repeated games. Instead of considering the probability of winning all the games at the same time, we consider the probabilities of winning each of the games using the same strategy.

\begin{definition}
	Let $\ttt{G}=\parens*{\Theta,Y,\tsf{A},\pi,A}$ be an MoE game, let $i\in[n]$, and let $\ttt{S}=\parens*{\tsf{B},\tsf{C},B,C,\rho}$ be a strategy for $\ttt{G}^n$. Then, the \emph{$i$-th winning probability} of $\ttt{G}^n$ is
	\begin{align}
		\mfk{w}^i_{\ttt{G}^n}(\ttt{S})=\expec{\theta\leftarrow\pi^n}\sum_{y\in Y}\Tr\squ*{\parens*{A^\theta_{y,i}\otimes B^\theta_{y,i}\otimes C^\theta_{y,i}}\rho},
	\end{align}
	where $A^\theta_{y,i}=\sum_{\substack{x\in Y^n\\x_i=y}}(A^n)^\theta_x$ and $B^\theta_{y,i}=\sum_{\substack{x\in Y^n\\x_i=y}}B^\theta_x$ with analogous definition for $C^\theta_{y,i}$.
\end{definition}

Due to the tensor product structure of $A^n$, $A^{\theta}_{y,i}$ depends only on the $i$-th element of $\theta$. Explicitly,
\begin{align}
	A^\theta_{y,i}=\Id_1\otimes\cdots\otimes\Id_{i-1}\otimes A^{\theta_i}_{y}\otimes\Id_{i+1}\otimes\cdots\otimes\Id_n.
\end{align}

The operators $B^\theta_{y,i}$ and $C^\theta_{y,i}$ depend in general on all the elements of $\theta$. Nevertheless, some important properties of the $A^{\theta}_{y,i}$ still hold: if the adversaries' measurements are projective, the operators commute for the same value of $\theta$, i.e.
\begin{align}\label{eq:commutation}
	[B^\theta_{y,i},B^\theta_{y',j}]=0,
\end{align}
and satisfy the product relation $B^\theta_y=B^\theta_{y_1,1}B^\theta_{y_2,2}\cdots B^\theta_{y_n,n}$; these hold identically for the $C^\theta_{y,i}$. The commutation and the product relation follow directly from the definition.

\subsection{Observables, Bias, and Positivity}\label{sec:bias}

In this section, we assume that we are working with an MoE game that has only two answers, in which case we may identify $Y$ with $\Z_2$, so $\ttt{G}=\parens*{\Theta,\Z_2,\tsf{A},\pi,A}$.

Similarly to what is often done for nonlocal games \cite{CMMN20}, we transform the expression for the winning probability into an expression in terms of observables rather than measurements.

\begin{definition}
	Let $\tsf{H}$ be a Hilbert space and let $P:\Z_2\rightarrow\mc{P}(\tsf{H})$ be a POVM. Then, the \emph{observable} of this POVM is $\overline{P}=P_0-P_1$.
\end{definition}
The observable completely characterises the measurement as $P_y=\frac{1}{2}\parens*{\Id+(-1)^y\overline{P}}$; and $\overline{P}$ is unitary if and only if $P$ is projective. For the measurements of an MoE game, we write the observables $A_\theta=\overline{A^\theta}$ for simplicity, and similarly for the adversaries' observables. It is a direct calculation to express the winning probability in terms of the observables:
\begin{align}
\begin{split}
	\mathfrak{w}_{\ttt{G}}(\ttt{S})&=\frac{1}{8}\expec{\theta\leftarrow\pi}\sum_{y\in\Z_2}\Tr\squ*{\parens*{(\Id_A+(-1)^yA_\theta)\otimes (\Id_B+(-1)^yB_\theta)\otimes (\Id_C+(-1)^yC_\theta)}\rho}\\
	&=\frac{1}{4}\expec{\theta\leftarrow\pi}\Tr\squ*{\parens*{A_\theta\otimes (B_\theta\otimes\Id_C+\Id_B\otimes C_\theta)+\Id_{A}\otimes(\Id_{BC}+B_\theta\otimes C_\theta)}\rho}.
\end{split}
\end{align}
As in the case of a nonlocal game, we study the bias of a strategy rather than the winning probability, since it quantifies how much better or worse a strategy does than a random but coordinated guess.

\begin{definition}
	The \emph{bias} of a strategy $\ttt{S}$ for an MoE game $\ttt{G}$ is
	\begin{align}
		\mfk{b}_{\ttt{G}}(\ttt{S})=4\mfk{w}_{\ttt{G}}(\ttt{S})-2=\expec{\theta\leftarrow\pi}\Tr\squ*{\parens*{A_\theta\otimes (B_\theta\otimes\Id_C+\Id_B\otimes C_\theta)-\Id_{A}\otimes(\Id_{BC}-B_\theta\otimes C_\theta)}\rho}.
	\end{align}
	The optimal bias of the game is $\mfk{b}_{\ttt{G}}=\sup_{\ttt{S}}\mfk{b}_{\ttt{G}}(\ttt{S})$.
\end{definition}

The bias lives in the range [-2,2] and the bias of a strategy is $0$ if its winning probability is $\frac{1}{2}$. The optimal bias of the TFKW game is $\mfk{b}_{\ttt{TFKW}}=\sqrt{2}$.

To shorten expressions, we define $b_\theta=B_\theta\otimes\Id_C$, $c_\theta=\Id_B\otimes C_\theta$, and omit identities as much as possible, replacing them with $1$. Then, the bias is
\begin{align}
	\mfk{b}_{\ttt{G}}(\ttt{S})=\expec{\theta\leftarrow\pi}\Tr\squ*{\parens*{A_\theta\otimes(b_\theta+c_\theta)-1\otimes(1-b_\theta c_\theta)}\rho}.
\end{align}
For any strategy, we call the operator $\sum_{\theta\in\Theta}\pi(\theta)\parens*{A_\theta\otimes(b_\theta+c_\theta)-1\otimes(1-b_\theta c_\theta)}$ the \emph{game polynomial}.

For the TFKW game, the observables take the form of Pauli operators $A_0=Z$ and $A_1=X$, so the game polynomial is
\begin{align}
	\frac{1}{2}\parens*{Z\otimes(b_0+c_0)+X\otimes(b_1+c_1)-1\otimes(1-b_0c_0)-1\otimes(1-b_1c_1)}.\label{eq:game-poly}
\end{align}

A simple but powerful observation is that a value $\beta\in\R$ upper bounds the bias $\beta\geq\mfk{b}_{\ttt{G}}(\ttt{S})$ if
\begin{align}
	\beta-\expec{\theta\leftarrow\pi}\parens*{A_\theta\otimes(b_\theta+c_\theta)-1\otimes(1-b_\theta c_\theta)}\geq 0\label{eq:poly-pos}
\end{align}
as operators. It follows from consideration of the eigenvalues that the smallest value of $\beta$ for which this holds for any valid choice of the $b_\theta,c_\theta$ is the optimal bias $\mfk{b}_{\ttt{G}}$. Conversely, checking whether \cref{eq:poly-pos} holds for some fixed $\beta$ with any choice of observables provides a way to show that $\beta\geq\mfk{b}_{\ttt{G}}$.

This provides a way to upper bound the winning probability of an MoE game using a positivity argument. In particular, we consider whether a polynomial in a certain noncommutative algebra is positive under the matrix representations of the algebra. In language closer to \cite{Oza13}, the algebra we consider is the semi-pre-$\mathrm{C}^\ast$-algebra $\mc{L}(\tsf{A})\otimes\C\squ{\mbb{F}^2_{|\Theta|}\!\times\!\mbb{F}^2_{|\Theta|}}$, where $\mbb{F}_n^k$ is the free group with $n$ generators of order $k$. The first copy of the free group corresponds to Bob's observables, since the only relation we need impose on them is that they are self-inverse. Similarly, the second free group corresponds to Charlie's observables, and since Bob's observables commute with Charlie's, this is in Cartesian product with Bob's free group. The algebra is constructed as a matrix algebra over a semi-pre-$\mathrm{C}^\ast$-algebra by taking the group algebra $\C\squ{\mbb{F}^2_{|\Theta|}\!\times\!\mbb{F}^2_{|\Theta|}}$ and then extending the scalars to an algebra containing all of Alice's observables. Therefore, an element $P$ corresponding to the game polynomial belongs to this algebra, and a unitary representation where Bob and Charlie's observables are in tensor product corresponds to a strategy.

As highlighted in \cite{Oza13}, one way to approach positivity of elements in such an algebra is to use a sum-of-squares (SOS) argument. That is, if $\beta-P$, corresponding to the left hand side of \cref{eq:poly-pos}, admits a decomposition as a \emph{sum of Hermitian squares} of the form $\beta-P=\sum_iS_i^\dag S_i$, then it must be positive under any matrix representation simply because a Hermitian square is always a positive matrix. In fact, an SOS decomposition is guaranteed to exist for $\beta=\mfk{b}_{\ttt{G}}+\varepsilon$ for every $\varepsilon>0$ \cite{Oza13}. We will make use of an SOS decomposition with $\beta=\mfk{b}_{\ttt{G}}$ for the TFKW game.

\section{Rigidity of the TFKW Game}\label{sec:rigidity}

In this section, we prove the main result of the paper, that the TFKW game satisfies a rigidity condition. In \cref{sec:sos}, we give a sum-of-squares decomposition for the game polynomial of the TFKW game, which is used throughout the rigidity proofs.  We proceed progressively to show the rigidity. In \cref{sec:exact}, we show rigidity for TFKW game strategies that win optimally. In \cref{sec:robust}, we show rigidity for strategies that win nearly optimally. In \cref{sec:parallel}, we generalise the rigidity in the case of a single game to rigidity for a collection of games played in parallel that win optimally. In \cref{sec:robust-parallel}, we show rigidity for strategies for a collection of games that win nearly optimally. The main rigidity result is given by \cref{thm:rob-par}. Finally, in \cref{sec:obs-correlations}, we relate the winning probabilities of a collection of games to the winning statistics that the referee observes, which is used for applications of rigidity.

\subsection{Sum-of-Squares Decomposition}\label{sec:sos}

Let $P$ be game polynomial for $\ttt{TFKW}$. The polynomial $\mfk{b}_{\ttt{TFKKW}}-P=\sqrt{2}-P$ admits the following SOS decomposition:
\begin{align}
	\frac{1}{4\sqrt{2}}\squ*{\parens{Z\otimes b_0+X\otimes c_1-\sqrt{2}}^2+\parens{Z\otimes c_0+X\otimes b_1-\sqrt{2}}^2}+\frac{1}{4}\squ*{(b_0-c_0)^2+(b_1-c_1)^2}.\label{eq:sos}
\end{align}
The form of the decomposition takes inspiration from the SOS decomposition used to prove Tsirelson's bound for the CHSH game \cite{BaP15,CMMN20}. First, this directly implies that $\sqrt{2}$ upper bounds the bias of $\ttt{TFKW}$, giving an alternate proof of the winning probability to that of \cite{TFKW13}. Conversely, the state of an optimal strategy must be in the $0$ eigenspace of this operator, and therefore it must be in the $0$ eigenspace of each of the squared terms. We use this idea to work out the rigidity for this game.

\subsection{Exact Rigidity} \label{sec:exact}
	Before dealing with the more involved robust and eventually parallel-repeated rigidity, we can get a lot of intuition from working with the exact case, where we assume the strategy wins with exactly optimal probability.
	
	\begin{theorem}[exact rigidity]\label{thm:exact}
		Let $\ttt{S}=\parens*{\tsf{B},\tsf{C},B,C,\ketbra{\psi}}$ be a purified strategy for $\ttt{TFKW}$. If this strategy is optimal, then there exist Hilbert spaces $\tsf{B}'$, $\tsf{C}'$ and isometries $V:\tsf{B}\rightarrow\tsf{B}'$ and $W:\tsf{C}\rightarrow\tsf{C}'$ such that we have a decomposition of the state
		\begin{align}
		(V\otimes W)\ket{\psi}=\sum_{s\in\Z_2\times\Z_2}X^{s_0}Z^{s_1}\ket{\beta}\otimes\ket{\psi_s},
		\end{align}
		where the supports of the $\ket{\psi_s}\in\tsf{B}'
		\otimes\tsf{C}'\otimes\tsf{R}$ on both $\tsf{B}'$ and $\tsf{C}'$ are orthogonal; and there exist commuting operators $B_\theta'\in\mc{U}(\tsf{B}')$ and $C_\theta'\in\mc{U}(\tsf{C}')$ such that
		\begin{align}
		\begin{split}
			&VB_\theta\ket{\psi}=B_\theta'V\ket{\psi}\\
			&WC_\theta\ket{\psi}=C_\theta'W\ket{\psi},
		\end{split}\\
			&B_\theta'\ket{\psi_s}=C_\theta'\ket{\psi_s}=(-1)^{s_\theta}\ket{\psi_s}.
		\end{align}
	\end{theorem}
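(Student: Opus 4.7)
The plan is to extract the consequences of optimality from the SOS decomposition \cref{eq:sos}. Writing $P$ for the game polynomial of $\ttt{TFKW}$, optimality gives $\bra{\psi}(\sqrt{2}\,\Id - P)\ket{\psi} = 0$; expanding via \cref{eq:sos} and noting that each summand is a positive operator, each must annihilate $\ket{\psi}$, yielding
\begin{equation*}
b_\theta\ket{\psi} = c_\theta\ket{\psi} \quad (\theta \in \Z_2), \qquad (Z\otimes b_0 + X\otimes c_1)\ket{\psi} = (Z\otimes c_0 + X\otimes b_1)\ket{\psi} = \sqrt{2}\ket{\psi}.
\end{equation*}
Adding the two eigenvalue relations and substituting $c_\theta\ket{\psi}$ for $b_\theta\ket{\psi}$ (valid because $Z, X$ act only on $\tsf{A}$) gives the decoupled single-player relation $(Z\otimes b_0 + X\otimes b_1)\ket{\psi} = \sqrt{2}\ket{\psi}$, and symmetrically $(Z\otimes c_0 + X\otimes c_1)\ket{\psi} = \sqrt{2}\ket{\psi}$ on Charlie's side.

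Next, I would square this Bob-only relation. Using $\{Z,X\} = 0$ and $b_\theta^2 = \Id$, the cross-terms consolidate into $ZX\otimes[b_0,b_1]$, giving $(2\,\Id + ZX\otimes[b_0,b_1])\ket{\psi} = 2\ket{\psi}$, and hence $[b_0,b_1]\ket{\psi} = 0$ after inverting $ZX$ on $\tsf{A}$; symmetrically $[c_0,c_1]\ket{\psi} = 0$. These exact commutations let \cref{lem:z22-gh} (at $\delta = 0$) realise $\{B_0, B_1\}$ and $\{C_0, C_1\}$ as exact $\ket{\psi}$-representations of $\Z_2^2$, and \cref{thm:gowers-hatami} then furnishes isometries $V : \tsf{B} \to \tsf{B}'$, $W : \tsf{C} \to \tsf{C}'$ and commuting unitaries $B_\theta' \in \mc{U}(\tsf{B}')$, $C_\theta' \in \mc{U}(\tsf{C}')$ satisfying the required intertwining identities $VB_\theta\ket{\psi} = B_\theta'V\ket{\psi}$ and $WC_\theta\ket{\psi} = C_\theta'W\ket{\psi}$.

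Since $B_0', B_1'$ commute, they admit joint eigenprojectors $\pi_s^B$ for $s \in \Z_2^2$ with eigenvalues $((-1)^{s_0}, (-1)^{s_1})$; likewise $\pi_s^C$ on $\tsf{C}'$. Tensoring the intertwining identities, the swap relation $b_\theta\ket{\psi} = c_\theta\ket{\psi}$ dilates to $B_\theta'(V\otimes W)\ket{\psi} = C_\theta'(V\otimes W)\ket{\psi}$, and decomposing along $\pi_s^B \otimes \pi_t^C$ shows every off-diagonal $(s,t)$ projection annihilates the state, so $(V\otimes W)\ket{\psi} = \sum_{s\in\Z_2^2}(\pi_s^B \otimes \pi_s^C \otimes \Id_R)(V\otimes W)\ket{\psi}$. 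Applying the dilated Bob-only eigenvalue relation within the $s$-block forces the $\tsf{A}$-component of the summand to be the unique $+\sqrt{2}$-eigenvector of $((-1)^{s_0}Z + (-1)^{s_1}X)$, which one identifies as $X^{s_0}Z^{s_1}\ket{\beta}$ via the conjugation identity $(X^{s_0}Z^{s_1})^{-1}(Z+X)(X^{s_0}Z^{s_1}) = (-1)^{s_0}Z + (-1)^{s_1}X$ and $\ket{\beta}$ being the $+\sqrt{2}$-eigenvector of $Z+X$. Setting $\ket{\psi_s}$ to be the corresponding coefficient in $\pi_s^B\tsf{B}' \otimes \pi_s^C\tsf{C}' \otimes \tsf{R}$ delivers the claimed decomposition, with orthogonal supports on $\tsf{B}'$ and $\tsf{C}'$ inherited from orthogonality of the joint eigenspaces, and the eigenvalue conditions $B_\theta'\ket{\psi_s} = C_\theta'\ket{\psi_s} = (-1)^{s_\theta}\ket{\psi_s}$ holding by construction.

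The main obstacle is extracting each player's commutation from the SOS, since the individual squared SOS terms couple Bob's and Charlie's observables. The crucial manoeuvre is to add the two non-trivial eigenvalue SOS relations and use $b_\theta\ket{\psi} = c_\theta\ket{\psi}$ to decouple them into single-player equations, whose squaring then produces the commutation on $\ket{\psi}$ in one stroke; the remainder reduces to a direct application of Gowers-Hatami followed by joint-eigenspace decomposition.
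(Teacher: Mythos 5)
Your proposal is correct and follows essentially the same route as the paper's proof: extract the four relations from the SOS decomposition, use the swap relations $b_\theta\ket{\psi}=c_\theta\ket{\psi}$ to decouple a single-player eigenvalue relation, square it to get $[b_0,b_1]\ket{\psi}=0$, apply Lemma 2.2 and Gowers--Hatami to dilate to an exact representation of $\Z_2^2$, and then pin down the state by the joint-eigenspace decomposition together with the fact that the $+\sqrt{2}$-eigenspace of $(-1)^{s_0}Z+(-1)^{s_1}X$ is spanned by $X^{s_0}Z^{s_1}\ket{\beta}$. The only cosmetic differences are that you symmetrise by adding both eigenvalue relations before substituting (the paper substitutes directly in one of them), and you make the conjugation identity $(X^{s_0}Z^{s_1})^\dagger(Z+X)(X^{s_0}Z^{s_1})=(-1)^{s_0}Z+(-1)^{s_1}X$ explicit rather than citing the factorisation $(-1)^{s_0}Z+(-1)^{s_1}X=\sqrt{2}\,X^{s_0}Z^{s_1}HZ^{s_1}X^{s_0}$; both are immaterial.
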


	It is a straightforward computation to show that a strategy of this form wins in fact optimally. Intuitively, the result says that what the players must do in order to win optimally is to agree on a Wiesner-Breidbart state to give Alice, which they can do without communicating using the simultaneous distinguishability of their parts of the state, and then guess accordingly. The Wiesner-Breidbart states can be seen as a family of states corresponding to the conjugate-coding states rotated by a Breidbart operator, as seen in \cref{fig:bloch}.
	
	\begin{proof}
		Letting $P$ be the game polynomial (\cref{eq:game-poly}), we know $\braket{\psi}{\sqrt{2}-P}{\psi}=0$. Then, as noted in the previous section, each of the terms in the sum of squares (\cref{eq:sos}) is positive so they must all be zero, giving four relations
		\begin{align}
		\label{eq:rels1}&(Z\otimes b_0+X\otimes c_1)\ket{\psi}=\sqrt{2}\ket{\psi}\\
		\label{eq:rels2}&(Z\otimes c_0+X\otimes b_1)\ket{\psi}=\sqrt{2}\ket{\psi}\\
		\label{eq:rels3}&b_0\ket{\psi}=c_0\ket{\psi}\\
		\label{eq:rels4}&b_1\ket{\psi}=c_1\ket{\psi}.
		\end{align}
		We can combine \cref{eq:rels1} and \cref{eq:rels4} to get a relation solely in terms of Alice and Bob's observables, $(Z\otimes b_0+X\otimes b_1)\ket{\psi}=\sqrt{2}\ket{\psi}$. Squaring this
		\begin{align}
			2\ket{\psi}=(Z\otimes b_0+X\otimes b_1)^2\ket{\psi}=2\ket{\psi}+2ZX\otimes[b_0,b_1]\ket{\psi},
		\end{align}
		we get that the commutator $[b_0,b_1]\ket{\psi}=0$, that is $b_0$ and $b_1$ commute with respect to $\ket{\psi}$. The commutation means that the group generated by $B_0$ and $B_1$ is a $(0,\ket{\psi})$-representation $f$ of $\Z_2^2$. We can alternately use \cref{lem:z22-gh} with $U_0=B_0$ and $U_1=B_1$ to get that these operators generate such a representation. By the Gowers-Hatami theorem (\cref{thm:gowers-hatami}), there exists an isometry $V:\tsf{B}\rightarrow\tsf{B}'$ and a representation $g:\Z_2^2\rightarrow\mc{U}(\tsf{B}')$ such that $Vf(x)\ket{\psi}=g(x)V\ket{\psi}$. Defining $B_0'=g(01)$ and $B_1'=g(10)$, these are commuting unitaries such that $VB_\theta\ket{\psi}=B_\theta'V\ket{\psi}$. Further, as $g$ is a representation, the dilated space decomposes orthogonally as a direct sum of irreducible representations
		\begin{align}
			\tsf{B}'=\bigoplus_{s\in\Z_2^2}\tsf{B}_s,
		\end{align}
		such that the operators act as $B_\theta'=\sum_{s\in\Z_2^2}(-1)^{s_\theta}\Id_{B,s}$, where $\Id_{B,s}$ is the projection onto $\tsf{B}_s$. Following an identical line of reasoning for Charlie's observables, there exists an isometry $W:\tsf{C}\rightarrow\tsf{C}'$ and commuting unitaries $C_0',C_1'\in\mc{U}(\tsf{C}')$ such that $WC_\theta\ket{\psi}=C_\theta'W\ket{\psi}$; and the space decomposes as $\tsf{C}'=\bigoplus_{s\in\Z_2^2}\tsf{C}_s$ so that $C_\theta'=\sum_{s\in\Z_2^2}(-1)^{s_\theta}\Id_{C,s}$. Defining the dilated state $\ket{\psi'}=(V\otimes W)\ket{\psi}$, we have that  \cref{eq:rels1,eq:rels2,eq:rels3,eq:rels4} extend to the dilated spaces:
		\begin{align}
		\label{eq:2rels1}&(Z\otimes B_0'+X\otimes B_1')\ket{\psi'}=\sqrt{2}\ket{\psi'}\\
		\label{eq:2rels3}&B_0'\ket{\psi'}=C_0'\ket{\psi'}\\
		\label{eq:2rels4}&B_1'\ket{\psi'}=C_1'\ket{\psi'}.
		\end{align}
		Now, since $\ket{\psi'}\in\tsf{A}\otimes\tsf{B}'\otimes\tsf{C}'\otimes\tsf{R}=\bigoplus_{s,s'\in\Z_2^2}\tsf{A}\otimes\tsf{B}_s\otimes\tsf{C}_{s'}\otimes\tsf{R}$, we can decompose it accordingly as $\ket{\psi'}=\sum_{s,s'\in\Z_2^2}\ket*{v_{s,s'}}$. Then, \cref{eq:2rels3} gives that $\sum_{s,s'\in\Z_2^2}(-1)^{s_0}\ket{v_{s,s'}}=\sum_{s,s'\in\Z_2^2}(-1)^{s'_0}\ket{v_{s,s'}}$, so $\ket{v_{s,s'}}=0$ if $s_0\neq s_0'$. Doing the same with \cref{eq:2rels4} gives that $\ket{v_{s,s'}}=0$ if $s\neq s'$ so
		\begin{align}
			\ket{\psi'}=\sum_{s\in\Z_2^2}\ket{v_{s,s}}\in\bigoplus_{s\in\Z_2^2}\tsf{A}\otimes\tsf{B}_s\otimes\tsf{C}_s\otimes\tsf{R}.
		\end{align}
		Next, the decomposition of the spaces means that
		\begin{align}
		\parens*{Z\otimes B_0+X\otimes B_1}\otimes\Id_{CR}=\sum_{s\in\Z_2^2}\parens*{(-1)^{s_0}Z+(-1)^{s_1}X}\otimes\Id_{B,s}\otimes\Id_{CR};
		\end{align}
		and \cref{eq:2rels1} says that $\ket{\psi'}$ must belong to the $\sqrt{2}$-eigenspace of this operator. Since
		\begin{align}
			(-1)^{s_0}Z+(-1)^{s_1}X=\sqrt{2}X^{s_0}Z^{s_1}HZ^{s_1}X^{s_0},
		\end{align}
		the $\sqrt{2}$-eigenspace is simply the span of $X^{s_0}Z^{s_1}\ket{\beta}$. Thus,
		\begin{align}
			\ket{\psi'}\in\bigoplus_{s,s'\in\Z_2^2}X^{s_0}Z^{s_1}\ket{\beta}\otimes\tsf{B}_s\otimes\tsf{C}_{s'}\otimes\tsf{R}.
		\end{align}
		Taking the intersection of the spaces $\ket{\psi'}$ belongs to, we have that
		\begin{align}
		\ket{\psi'}\in\bigoplus_{s\in\Z_2^2}X^{s_0}Z^{s_1}\ket{\beta}\otimes\tsf{B}_s\otimes\tsf{C}_s\otimes\tsf{R},
		\end{align}
		which gives the result.
	\end{proof}

\subsection{Robust Rigidity} \label{sec:robust}

Now, we move on to the study of the robust rigidity, where we assume that the winning probability is in some small neighbourhood of the optimal probability. We can approach the proof in about the same way as the exact case, while keeping track of the error.

\begin{theorem}[robust rigidity]\label{thm:robust}
	Let $\ttt{S}=\parens*{\tsf{B},\tsf{C},B,C,\ketbra{\psi}}$ be a purified strategy for $\ttt{TFKW}$ that wins with probability $\mathfrak{w}_{\ttt{TFKW}}(\ttt{S})\geq\cos^2\tfrac{\pi}{8}-\varepsilon$ for some $\varepsilon\geq 0$. Then there exists a constant $K\geq 0$ and isometries $V:\tsf{B}\rightarrow\tsf{B}'$ and $W:\tsf{C}\rightarrow\tsf{C}'$ such that the distance between quantum states
	\begin{align}
		\norm[\Big]{(V\otimes W)\ket{\psi}-\sum_{s\in\Z_2^2}X^{s_0}Z^{s_1}\ket{\beta}\otimes\ket{\psi_s}}\leq K\sqrt{\varepsilon},
	\end{align}
	where the $\ket{\psi_s}\in\tsf{B}'\otimes\tsf{C}'\otimes\tsf{R}$ have orthogonal supports on both $\tsf{B}'$ and $\tsf{C}'$; and there exists a constant $L\geq 0$, and commuting observables $B_\theta'\in\mc{U}(\tsf{B}')$ and $C_\theta'\in\mc{U}(\tsf{C}')$ such that
	\begin{align}
	\begin{split}
		&\norm*{VB_\theta\ket{\psi}-B_\theta'V\ket{\psi}}\leq L\sqrt{\varepsilon}\\
		&\norm*{WC_\theta\ket{\psi}-C_\theta'W\ket{\psi}}\leq L\sqrt{\varepsilon},
	\end{split}\\
		&B_\theta'\ket{\psi_s}=C_\theta'\ket{\psi_s}=(-1)^{s_\theta}\ket{\psi_s}.
	\end{align}
\end{theorem}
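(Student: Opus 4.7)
The plan is to adapt the proof of \cref{thm:exact} quantitatively, carrying $O(\sqrt{\varepsilon})$ error terms through the SOS decomposition, Gowers--Hatami dilation, and final state decomposition. First I would convert the winning-probability hypothesis into a bias bound $\mfk{b}_{\ttt{TFKW}}(\ttt{S}) \geq \sqrt{2} - 4\varepsilon$, so that $\braket{\psi}{\sqrt{2} - P}{\psi} \leq 4\varepsilon$. Since each of the four Hermitian squares in \cref{eq:sos} is individually positive, each has expectation at most $O(\varepsilon)$ on $\ket{\psi}$, yielding approximate analogues of \cref{eq:rels1,eq:rels2,eq:rels3,eq:rels4}: each of $(Z\otimes b_0 + X\otimes c_1 - \sqrt{2})\ket{\psi}$, $(Z\otimes c_0 + X\otimes b_1 - \sqrt{2})\ket{\psi}$, $(b_0 - c_0)\ket{\psi}$, and $(b_1 - c_1)\ket{\psi}$ has norm $O(\sqrt{\varepsilon})$.

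Combining the first and fourth relations via the triangle inequality then yields $\|(Z\otimes b_0 + X\otimes b_1 - \sqrt{2})\ket{\psi}\| = O(\sqrt{\varepsilon})$. Using the operator-norm bound $\|Z\otimes b_0 + X\otimes b_1\| \leq 2$ to square this approximate eigenvalue equation gives $\|(Z\otimes b_0 + X\otimes b_1)^2 \ket{\psi} - 2\ket{\psi}\| = O(\sqrt{\varepsilon})$, and expanding the square using $XZ = -ZX$ produces $\|ZX \otimes [b_0, b_1]\ket{\psi}\| = O(\sqrt{\varepsilon})$, hence $\|[b_0, b_1]\ket{\psi}\| = O(\sqrt{\varepsilon})$. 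Feeding this into \cref{lem:z22-gh} produces an $(O(\sqrt{\varepsilon}), \ket{\psi})$-representation of $\Z_2^2$ with generators $B_0, B_1$, and \cref{thm:gowers-hatami} then supplies an isometry $V : \tsf{B} \to \tsf{B}'$ and a genuine unitary representation whose generators $B'_0, B'_1 \in \mc{U}(\tsf{B}')$ are commuting observables satisfying $\|VB_\theta \ket{\psi} - B'_\theta V \ket{\psi}\| = O(\sqrt{\varepsilon})$. The symmetric argument on Charlie's side produces $W : \tsf{C} \to \tsf{C}'$ and commuting $C'_0, C'_1$ with the analogous bound.

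I would then decompose $\tsf{B}' = \bigoplus_{s \in \Z_2^2} \tsf{B}_s$ and $\tsf{C}' = \bigoplus_{s \in \Z_2^2} \tsf{C}_s$ into joint eigenspaces on which $B'_\theta$ (resp.\ $C'_\theta$) acts as $(-1)^{s_\theta}$, expand $\ket{\psi'} = (V\otimes W)\ket{\psi} = \sum_{s,s'} \ket{v_{s,s'}}$ accordingly, and push the third and fourth SOS relations through the intertwining property of $V$ and $W$ to get $\|(B'_\theta - C'_\theta) \ket{\psi'}\| = O(\sqrt{\varepsilon})$. Since $B'_\theta - C'_\theta$ acts as $(-1)^{s_\theta} - (-1)^{s'_\theta} \in \{0, \pm 2\}$ on $\ket{v_{s,s'}}$, expanding in the direct-sum basis forces $\sum_{s \neq s'} \|\ket{v_{s,s'}}\|^2 = O(\varepsilon)$, so $\ket{\psi'}$ is $O(\sqrt{\varepsilon})$-close to $\sum_{s} \ket{v_{s,s}}$. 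Lifting the combined approximate relation $\|(Z\otimes b_0 + X\otimes b_1 - \sqrt{2})\ket{\psi}\|$ to the dilated space and evaluating on each diagonal component, the operator acts on $\ket{v_{s,s}}$ as $M_s = (-1)^{s_0} Z + (-1)^{s_1} X = \sqrt{2}\, X^{s_0} Z^{s_1} H Z^{s_1} X^{s_0}$, whose unique $\sqrt{2}$-eigenvector on $\tsf{A}$ is $X^{s_0} Z^{s_1} \ket{\beta}$; since $M_s - \sqrt{2} = -2\sqrt{2}(\Id - \Pi_s)$ with $\Pi_s$ the projector onto that vector, we obtain $\sum_s \|(\Id - \Pi_s)\ket{v_{s,s}}\|^2 = O(\varepsilon)$. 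Setting $X^{s_0}Z^{s_1}\ket{\beta} \otimes \ket{\psi_s} := \Pi_s \ket{v_{s,s}}$ then delivers the claimed decomposition, and orthogonality of the supports of the $\ket{\psi_s}$ on both $\tsf{B}'$ and $\tsf{C}'$ is inherited from the direct-sum structure.

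The main technical obstacle I expect is the careful error propagation: the approximate relations hold only when acting on $\ket{\psi}$, not as operator identities, so at each transformation --- especially squaring to extract the approximate commutator and lifting everything through $V \otimes W$ --- I must exploit operator-norm bounds together with the isometric property of $V$ and $W$ to avoid accumulating factors that spoil the $O(\sqrt{\varepsilon})$ scaling. Once that bookkeeping is done, the structural argument runs in exact parallel with \cref{thm:exact}.
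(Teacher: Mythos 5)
Your proposal is correct and follows essentially the same route as the paper's proof: bound each SOS term by $O(\varepsilon)$, factor the approximate eigenvalue relation to extract $\|[b_0,b_1]\ket{\psi}\|=O(\sqrt{\varepsilon})$, apply \cref{lem:z22-gh} and \cref{thm:gowers-hatami} to dilate, decompose $\ket{\psi'}$ in the joint eigenbasis, and use the remaining SOS relations to kill the off-diagonal and off-Breidbart components. The one step you gloss over is that $\sum_s\Pi_s\ket{v_{s,s}}$ is only subnormalized, so one must renormalize to obtain a genuine state $\ket{\phi}$ and absorb the normalization error (costing an extra factor in $K$) via the triangle inequality, as the paper does explicitly.
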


The proof below allows us to take $K=110$ and $L=18$ as the necessary constants. Note also that, as seen for the CHSH game in \cite{RUV13}, the order $\sqrt{\varepsilon}$ dependence of this upper bound is in fact necessary, though it may be possible to improve the constants: if we take an unentangled optimal strategy for $\ttt{TFKW}$ and perturb by a vector of length $\delta$ in an orthogonal direction, the winning probability decreases on the order of $\delta^2$.

\begin{proof}
	By hypothesis, $\mfk{w}_{\ttt{TFKW}}(\ttt{S})\geq\cos^2\tfrac{\pi}{8}-\varepsilon$, so the bias $\mfk{b}_{\ttt{TFKW}}(\ttt{S})\geq\sqrt{2}-4\varepsilon$, giving that $\braket{\psi}{\sqrt{2}-P}{\psi}\leq4\varepsilon$, which, using the sum-of-squares decomposition, is
	\begin{align}
	\begin{split}
	16\sqrt{2}\varepsilon\geq&\braket{\psi}{\parens{Z\otimes b_0+X\otimes c_1-\sqrt{2}}^2}{\psi}+\braket{\psi}{\parens{Z\otimes c_0+X\otimes b_1-\sqrt{2}}^2}{\psi}\\
	&+\sqrt{2}\squ*{\braket{\psi}{(b_0-c_0)^2}{\psi}+\braket{\psi}{(b_1-c_1)^2}{\psi}}.
	\end{split}
	\end{align}
	Since each of the terms is positive, we must have that $16\varepsilon\geq\braket{\psi}{(b_\theta-c_\theta)^2}{\psi}$ and $8\sqrt{2}\varepsilon\geq\min\curly*{\braket{\psi}{\parens{Z\otimes b_0+X\otimes c_1-\sqrt{2}}^2}{\psi},\braket{\psi}{\parens{Z\otimes c_0+X\otimes b_1-\sqrt{2}}^2}{\psi}}$. This can be converted to Euclidean norm conditions by taking square roots: \begin{align}
	\label{eq:axrel1}&2(8)^{1/4}\sqrt{\varepsilon}\geq\min\curly*{\norm*{\parens{Z\otimes b_0+X\otimes c_1-\sqrt{2}}\ket{\psi}},\norm*{\parens{Z\otimes c_0+X\otimes b_1-\sqrt{2}}\ket{\psi}}}\\
	\label{eq:axrel2}&4\sqrt{\varepsilon}\geq\norm*{(b_\theta-c_\theta)\ket{\psi}}.
	\end{align}
	Using \cref{eq:axrel2} in \cref{eq:axrel1}, we get
	\begin{align}
		\norm*{\parens{Z\otimes b_0+X\otimes b_1-\sqrt{2}}\ket{\psi}}\leq\norm*{\parens{Z\otimes b_0+X\otimes c_1-\sqrt{2}}\ket{\psi}}+\norm*{X\otimes(b_1-c_1)\ket{\psi}}
	\end{align}
	and
	\begin{align}
		\norm*{\parens{Z\otimes b_0+X\otimes b_1-\sqrt{2}}\ket{\psi}}\leq\norm*{\parens{Z\otimes c_0+X\otimes b_1-\sqrt{2}}\ket{\psi}}+\norm*{Z\otimes(b_0-c_0)\ket{\psi}}.
	\end{align}
	which gives $\norm*{\parens{Z\otimes b_0+X\otimes b_1-\sqrt{2}}\ket{\psi}}\leq 2(2+8^{1/4})\sqrt{\varepsilon}$.
	Noting that
	\begin{align}
	\begin{split}
		\parens{Z\otimes b_0+X\otimes b_1+\sqrt{2}}\parens{Z\otimes b_0+X\otimes b_1-\sqrt{2}}&=\parens{Z\otimes b_0+X\otimes b_1}^2-2\\
		&=ZX\otimes[b_0,b_1],
	\end{split}
	\end{align}
	we have that
	\begin{align}
	\begin{split}
	\norm{[b_0,b_1]\ket{\psi}}&=\norm*{ZX\otimes[b_0,b_1]\ket{\psi}}\\
	&\leq\norm*{Z\otimes b_0+X\otimes b_1+\sqrt{2}}\norm*{\parens{Z\otimes b_0+X\otimes b_1-\sqrt{2}}\ket{\psi}}\\
	&\leq 2(2+\sqrt{2})(2+8^{1/4})\sqrt{\varepsilon},
	\end{split}
	\end{align}
	that is, Bob's operators almost commute with respect to $\ket{\psi}$. As in the exact case, we use \cref{lem:z22-gh} with $U_0=B_0$ and $U_1=B_1$ to generate a $\parens*{\sqrt{2}(2+\sqrt{2})(2+8^{1/4})\sqrt{\varepsilon},\ket{\psi}}$-representation $f$ of $\Z_2^2$. By Gowers-Hatami, there exists an isometry $V:\tsf{B}\rightarrow\tsf{B}'$ to some Hilbert space and a representation $g:\Z_2^2\rightarrow\mc{U}(\tsf{B}')$ such that
	\begin{align}
		\norm*{\parens{Vf(x)-g(x)V}\ket{\psi}}\leq\sqrt{2}(2+\sqrt{2})(2+8^{1/4})\sqrt{\varepsilon}.
	\end{align}
	Defining $B_0'=g(01)$ and $B_1'=g(10)$, they are commuting observables such that
	\begin{align}
		\norm{\parens*{VB_\theta-B_\theta'V}\ket*{\psi}}\leq\sqrt{2}(2+\sqrt{2})(2+8^{1/4})\sqrt{\varepsilon};
	\end{align}
	and since $g$ is a representation, there exists an orthogonal decomposition $\tsf{B}'=\bigoplus_{s\in\Z_2^2}\tsf{B}_s$ where the observables decompose accordingly as $B_\theta'=\sum_{s\in\Z_2^2}(-1)^{s_\theta}\Id_{B,s}$. Applying the same reasoning for Charlie's observables gives that there exists a Hilbert space with orthogonal decomposition $\tsf{C}'=\bigoplus_{s\in\Z_2^2}\tsf{C}_s$, commuting observables $C_\theta'=\sum_{s\in\Z_2^2}(-1)^{s_\theta}\Id_{C,s}$, and an isometry $W:\tsf{C}\rightarrow\tsf{C}'$ such that $\norm*{\parens{WC_\theta-C_\theta'W}\ket*{\psi}}\leq\sqrt{2}(2+\sqrt{2})(2+8^{1/4})\sqrt{\varepsilon}$. Defining $\ket{\psi'}=(V\otimes W)\ket{\psi}$, we can extend \cref{eq:axrel1} and \cref{eq:axrel2} to the dilated spaces as
	\begin{align}
	&\label{eq:b2axrel1}\norm*{\parens{Z\otimes B_0'+X\otimes B_1'-\sqrt{2}}\ket{\psi'}}\leq2(3+2\sqrt{2})(2+8^{1/4})\sqrt{\varepsilon}\\
	&\label{eq:b2axrel2}\norm*{B_\theta'\ket{\psi'}-C_\theta'\ket{\psi'}}\leq 4\parens{(1+\sqrt{2})(2+8^{1/4})+1}\sqrt{\varepsilon}.
	\end{align}
	From the decomposition of Bob and Charlie's spaces, we have that the shared space is $\tsf{A}\otimes\tsf{B}'\otimes\tsf{C}'\otimes\tsf{R}=\bigoplus_{s,s'\in\Z_2^2}\tsf{A}\otimes\tsf{B}_s\otimes\tsf{C}_{s'}\otimes\tsf{R}$, thus the state decomposes accordingly as $\ket{\psi'}=\sum_{s,s'\in\Z_2^2}\ket{v_{s,s'}}$. Using this in \cref{eq:b2axrel2} gives
	\begin{align}
	4\parens{(1+\sqrt{2})(2+8^{1/4})+1}\sqrt{\varepsilon}\geq\norm[\Big]{\sum_{s,s'\in\Z_2^2}\parens*{(-1)^{s_\theta}-(-1)^{s'_\theta}}\ket{v_{s,s'}}}=2\norm[\Big]{\sum_{s_\theta\neq s_\theta'}\ket{v_{s,s'}}}
	\end{align}
	We write $\ket{v_0}=\sum_{s}\ket{v_{s,s}}$ and $\ket{v_1}=\sum_{s\neq s'}\ket{v_{s,s'}}$, so that $\ket{\psi'}=\ket{v_0}+\ket{v_1}$ and
	\begin{align}
		&\norm*{\ket{v_1}}\leq\norm[\Big]{\sum_{s_0\neq s_0'}\ket{v_{s,s'}}}+\norm[\Big]{\sum_{s_1\neq s_1'}\ket{v_{s,s'}}}\leq 4\parens{(1+\sqrt{2})(2+8^{1/4})+1}\sqrt{\varepsilon}.
	\end{align}
	Writing $\ket{\beta_s}=X^{s_0}Z^{s_1}\ket{\beta}$, we can decompose
	\begin{align}
	\begin{split}
		Z\otimes B_0'+X\otimes B_1'-\sqrt{2}&=\sum_{s}\parens{(-1)^{s_0}Z+(-1)^{s_1}X-\sqrt{2}}\otimes\Id_{B,s}\\
		&=2\sqrt{2}\sum_{s}(\ketbra{\beta_s}-\Id)\otimes\Id_{B,s}.
	\end{split}
	\end{align}
	Also, define the projection $\ket{v_\beta}=\sum_s(\ketbra*{\beta_s}\otimes\Id)\ket{v_{s,s}}$, so that \cref{eq:b2axrel1} implies
	\begin{align}
	\begin{split}
	\norm*{\ket{\psi'}-\ket{v_\beta}}&\leq\norm[\Big]{\sum_{s,s'}\parens*{\ketbra{\beta_s}-\Id}\ket{v_{s,s'}}}+\norm[\Big]{\sum_{s\neq s'}\parens*{\ketbra{\beta_s}\otimes\Id}\ket*{v_{s,s'}}}\\
	&\leq\frac{1}{2\sqrt{2}}\norm*{\parens{Z\otimes B_0'+X\otimes B_1'-\sqrt{2}}\ket{\psi'}}+\norm*{\ket{v_1}}\\
	&\leq (\tfrac{3}{2}\sqrt{2}+2)(2+8^{1/4})\sqrt{\varepsilon}+4\parens{(1+\sqrt{2})(2+8^{1/4})+1}\sqrt{\varepsilon}\\
	&=\squ{(6+\tfrac{11}{2}\sqrt{2})(2+8^{1/4})+4}\sqrt{\varepsilon}.
	\end{split}
	\end{align}
	Note that although $\ket*{v_\beta}$ is not necessarily normalised, it must be subnormalised and the above implies that
	\begin{align}
	\norm*{\frac{\ket{v_\beta}}{\norm*{\ket{v_\beta}}}-\ket*{v_\beta}}=1-\norm*{\ket{v_\beta}}\leq\norm*{\ket{\psi'}-\ket{v_\beta}}\leq \squ{(6+\tfrac{11}{2}\sqrt{2})(2+8^{1/4})+4}\sqrt{\varepsilon}.
	\end{align}
	Defining $\ket{\phi}=\frac{\ket{v_\beta}}{\norm*{\ket{v_\beta}}}$, we have by construction that $\ket{\phi}=\sum_s\ket{\beta_s}\otimes\ket{\psi_s}$, where
	\begin{align}
		\ket{\psi_s}=\frac{1}{\norm*{v_\beta}}(\bra{\beta_s}\otimes\Id)\ket{v_{s,s}}\in\tsf{B}_s\otimes\tsf{C}_s\otimes\tsf{R},
	\end{align}
	so simultaneously distinguishable by Bob and Charlie. Thus, to complete the proof, note that
	\begin{align}
	\norm{\ket*{\psi'}-\ket*{\phi}}\leq\norm{\ket*{\psi'}-\ket*{v_\beta}}+\norm{\ket*{v_\beta}-\ket*{\phi}}\leq 2\squ{(6+\tfrac{11}{2}\sqrt{2})(2+8^{1/4})+4}\sqrt{\varepsilon}.
	\end{align}
\end{proof}

We can use the properties of purified strategies and the trace norm to directly extend this result to a general strategy.

\begin{corollary}\label{cor:gen-rob}
	Let $\ttt{S}=\parens*{\tsf{B},\tsf{C},B,C,\rho}$ be an arbitrary strategy for $\ttt{TFKW}$ that wins with probability $\mfk{w}_{\ttt{TFKW}}(\ttt{S})\geq\cos^2\tfrac{\pi}{8}-\varepsilon$ for some $\varepsilon\geq 0$. Then there exists a constant $K\geq 0$ and isometries $V:\tsf{B}\rightarrow\tsf{B}'$ and $W:\tsf{C}\rightarrow\tsf{C}'$ such that
	\begin{align}
	\norm*{(V\otimes W)\rho(V\otimes W)^\dag-\Tr_{R}(\ketbra{\phi})}_{\Tr}\leq K\sqrt{\varepsilon},
	\end{align}
	where $R$ is an auxiliary register such that $\ket{\phi}=\sum_{s\in\Z_2^2}X^{s_0}Z^{s_1}\ket{\beta}\otimes\ket{\psi_s}$ for some vectors $\ket{\psi_s}\in\tsf{B}'\otimes\tsf{C}'\otimes\tsf{R}$ with orthogonal supports on both $\tsf{B}'$ and $\tsf{C}'$.
\end{corollary}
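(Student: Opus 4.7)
The plan is to reduce directly to the pure-strategy case via purification. By \cref{lem:purify}, there exists a purified strategy $\tilde{\ttt{S}}=(\tilde{\tsf{B}},\tilde{\tsf{C}},\tilde{B},\tilde{C},\ketbra{\psi})$ with the same winning probability as $\ttt{S}$. The construction dilates Bob's and Charlie's POVMs into PVMs on extended spaces $\tilde{\tsf{B}}=\tsf{B}\otimes\tsf{B}_{\mathrm{aux}}$ and $\tilde{\tsf{C}}=\tsf{C}\otimes\tsf{C}_{\mathrm{aux}}$ with fixed Naimark ancillas $\ket{a_B},\ket{a_C}$, and then appends an external purifying register $\tsf{R}$ so that the shared state $\ket{\psi}\in\tsf{A}\otimes\tilde{\tsf{B}}\otimes\tilde{\tsf{C}}\otimes\tsf{R}$ satisfies $\Tr_R\ketbra{\psi}=\rho\otimes\ketbra{a_B}\otimes\ketbra{a_C}$.

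Since $\mfk{w}_{\ttt{TFKW}}(\tilde{\ttt{S}})=\mfk{w}_{\ttt{TFKW}}(\ttt{S})\geq\cos^2\tfrac{\pi}{8}-\varepsilon$, I would apply \cref{thm:robust} to $\tilde{\ttt{S}}$. This produces isometries $\tilde{V}:\tilde{\tsf{B}}\to\tsf{B}'$ and $\tilde{W}:\tilde{\tsf{C}}\to\tsf{C}'$, together with a target state $\ket{\phi}=\sum_{s\in\Z_2^2}X^{s_0}Z^{s_1}\ket{\beta}\otimes\ket{\psi_s}$ of the required form on $\tsf{A}\otimes\tsf{B}'\otimes\tsf{C}'\otimes\tsf{R}$, such that $\|(\tilde{V}\otimes\tilde{W})\ket{\psi}-\ket{\phi}\|\leq K'\sqrt{\varepsilon}$ for some constant $K'$.

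Next, I absorb the Naimark ancillas into the isometries by defining $V=\tilde{V}(\Id_{\tsf{B}}\otimes\ket{a_B}):\tsf{B}\to\tsf{B}'$ and $W=\tilde{W}(\Id_{\tsf{C}}\otimes\ket{a_C}):\tsf{C}\to\tsf{C}'$, which are compositions of isometries and hence themselves isometries. A short calculation gives
\[
(V\otimes W)\rho(V\otimes W)^\dag = (\tilde{V}\otimes\tilde{W})[\rho\otimes\ketbra{a_B}\otimes\ketbra{a_C}](\tilde{V}\otimes\tilde{W})^\dag = \Tr_R\!\left[(\tilde{V}\otimes\tilde{W})\ketbra{\psi}(\tilde{V}\otimes\tilde{W})^\dag\right],
\]
since the partial trace over $\tsf{R}$ commutes with the action of isometries that do not touch $\tsf{R}$. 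Combining contractivity of the trace norm under partial trace with the standard bound $\|\ketbra{u}-\ketbra{v}\|_{\Tr}\leq 2\|\ket{u}-\ket{v}\|$ for unit vectors then gives
\[
\|(V\otimes W)\rho(V\otimes W)^\dag - \Tr_R\ketbra{\phi}\|_{\Tr} \leq 2\|(\tilde{V}\otimes\tilde{W})\ket{\psi}-\ket{\phi}\| \leq 2K'\sqrt{\varepsilon},
\]
so the corollary holds with $K=2K'$.

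The argument is essentially bookkeeping rather than substantive mathematics; the main subtlety is keeping straight the two kinds of ancillas introduced by \cref{lem:purify} (the Naimark ancillas absorbed locally into Bob's and Charlie's spaces, versus the external register $\tsf{R}$ used for the global purification), and verifying that the isometries produced by \cref{thm:robust}, which live on the extended spaces, can be composed with the canonical inclusions $\ket{v}\mapsto\ket{v}\otimes\ket{a_B}$ and $\ket{w}\mapsto\ket{w}\otimes\ket{a_C}$ to give isometries defined on the original $\tsf{B}$ and $\tsf{C}$ as demanded by the statement.
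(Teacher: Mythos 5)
Your proof is correct and takes essentially the same route as the paper, which simply remarks that the corollary follows by (i) invoking the robust rigidity theorem on the purified strategy, (ii) passing from Euclidean to trace distance, (iii) tracing out the register $R$, and (iv) noting that the POVM purification is an isometric extension that can be absorbed into $V$ and $W$; you have just written out the bookkeeping explicitly. The only small discrepancy is the factor of $2$ in your bound $\|\ketbra{u}-\ketbra{v}\|_{\Tr}\le 2\|\ket{u}-\ket{v}\|$: with the paper's convention $\|\rho-\sigma\|_{\Tr}=\tfrac12\Tr|\rho-\sigma|$ the sharper inequality $\|\ketbra{u}-\ketbra{v}\|_{\Tr}\le\|\ket{u}-\ket{v}\|$ of \cref{lem:tr-dist} holds, so $K=K'$ suffices; your extra factor is harmless but unnecessary.
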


The proof of \cref{cor:gen-rob}
follows directly by using the inequality between the Euclidean distance and the trace distance (\cref{lem:tr-dist}), tracing out the auxiliary register $R$, and finally using the fact that the purification of the measurements only requires an isometric extension of the state space (\cref{lem:pure-pvm}).

\subsection{Rigidity under Parallel Repetition}\label{sec:parallel}

	Similarly to the case of a single game, we begin with the parallel repetition in the exact case. That is, we assume $n$ copies of the TFKW game are played and the adversaries win each of the copies with optimal probability. We aim to show that, in this case, Bob and Charlie must behave as for a single game on each of the copies, i.e. they agree upon a Wiesner-Breidbart state and guess accordingly.
	
	\begin{theorem}[parallel-repeated exact rigidity]\label{thm:exact-parallel}
		Let $n\in\N$ and let $\ttt{S}=\parens*{\tsf{B},\tsf{C},B,C,\rho=\ketbra{\psi}}$ be a purified strategy for $\ttt{TFKW}^n$ that guesses each bit optimally, that is for each $i\in[n]$, $\mfk{w}_{\ttt{TFKW}^n}^i(\ttt{S})=\cos^2\parens*{\frac{\pi}{8}}$. Then, there exist Hilbert spaces $\tsf{B}'$ and $\tsf{C}'$, and isometries $V:\tsf{B}\rightarrow\tsf{B}'$ and $W:\tsf{C}\rightarrow\tsf{C}'$ such that
		\begin{align}
			&(V\otimes W)\ket{\psi}=\sum_{t\in(\Z_2^2)^n}X^{t_{10}}Z^{t_{11}}\ket{\beta}\otimes\cdots\otimes X^{t_{n0}}Z^{t_{n1}}\ket{\beta}\otimes\ket{\psi_t},
		\end{align}
		where the supports of the $\ket{\psi_t}\in\tsf{B}'\otimes\tsf{C}'\otimes\tsf{R}$ on both $\tsf{B}'$ and $\tsf{C}'$ are orthogonal.
	\end{theorem}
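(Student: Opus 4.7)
The plan is to adapt the single-game exact rigidity of \cref{thm:exact} to the parallel setting by showing that Bob's and Charlie's observables across all $n$ copies mutually commute on $\ket{\psi}$, thereby generating a joint representation of $\Z_2^{2n}\cong(\Z_2^2)^n$ to which Gowers-Hatami applies. Since by hypothesis $\mfk{w}^i_{\ttt{TFKW}^n}(\ttt{S})=\cos^2(\pi/8)$ for every $i\in[n]$, I would apply the SOS decomposition \cref{eq:sos} to each $i$-th marginal game polynomial, obtaining exactly the four single-game relations indexed by $i$:
\begin{align*}
b_{\theta,i}\ket{\psi} &= c_{\theta,i}\ket{\psi},\\
(Z_i\otimes b_{0,i}+X_i\otimes c_{1,i})\ket{\psi} &= \sqrt{2}\ket{\psi},\\
(Z_i\otimes c_{0,i}+X_i\otimes b_{1,i})\ket{\psi} &= \sqrt{2}\ket{\psi}.
\end{align*}

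The main new step is to establish $[B_{\theta,i},B_{\theta',j}]\ket{\psi}=0$ for all indices. Within a single copy ($i=j$, $\theta\neq\theta'$) this follows by squaring the Alice-involving relation as in \cref{thm:exact}; for the same question across copies ($\theta=\theta'$, $i\neq j$) it holds exactly by the PVM commutation \cref{eq:commutation}. The genuinely new case is $i\neq j$ and $\theta\neq\theta'$: to show $[b_{0,i},b_{1,j}]\ket{\psi}=0$, I would set $O_i^B:=Z_i\otimes b_{0,i}+X_i\otimes c_{1,i}$ (so $O_i^B\ket{\psi}=\sqrt{2}\ket{\psi}$) and use the swap to rewrite $b_{1,j}\ket{\psi}=c_{1,j}\ket{\psi}$. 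The key observation is that $c_{1,j}$ commutes exactly with $O_i^B$, since $[c_{1,j},b_{0,i}]=0$ (Bob and Charlie act on disjoint sides) and $[c_{1,j},c_{1,i}]=0$ (same-question Charlie marginals, by \cref{eq:commutation}). Thus $O_i^B b_{1,j}\ket{\psi}=O_i^B c_{1,j}\ket{\psi}=c_{1,j}O_i^B\ket{\psi}=\sqrt{2}b_{1,j}\ket{\psi}$. Expanding the left-hand side directly yields $O_i^B b_{1,j}\ket{\psi}-b_{1,j}O_i^B\ket{\psi}=(Z_i\otimes[b_{0,i},b_{1,j}])\ket{\psi}$, so $(Z_i\otimes[b_{0,i},b_{1,j}])\ket{\psi}=0$, and invertibility of $Z_i$ gives $[b_{0,i},b_{1,j}]\ket{\psi}=0$. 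The symmetric case $[b_{1,i},b_{0,j}]\ket{\psi}=0$ follows in the same way using $O_i^C:=Z_i\otimes c_{0,i}+X_i\otimes b_{1,i}$ and the swap $b_{0,j}\ket{\psi}=c_{0,j}\ket{\psi}$; Charlie's analogues follow by interchanging Bob and Charlie throughout.

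With all of Bob's (respectively Charlie's) observables commuting on $\ket{\psi}$, I would invoke \cref{lem:z2n-gh} with the $2n$ Bob observables as the $U_k$'s, the $2n$ Charlie observables as the auxiliary $V_k$'s, and $\delta=\epsilon=0$, concluding that the Bob operators induce a $(0,\ket{\psi})$-representation of $\Z_2^{2n}$. \cref{thm:gowers-hatami} then yields an isometry $V:\tsf{B}\rightarrow\tsf{B}'$ and commuting unitaries $B'_{\theta,i}\in\mc{U}(\tsf{B}')$ with $VB_{\theta,i}\ket{\psi}=B'_{\theta,i}V\ket{\psi}$; the analogous construction for Charlie produces $W:\tsf{C}\rightarrow\tsf{C}'$ and commuting $C'_{\theta,i}$. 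Because $\Z_2^{2n}$ is abelian, both dilated spaces decompose into joint eigenspaces $\tsf{B}'=\bigoplus_{t\in(\Z_2^2)^n}\tsf{B}_t$ and $\tsf{C}'=\bigoplus_t\tsf{C}_t$, on which $B'_{\theta,i}$ and $C'_{\theta,i}$ act as the scalar $(-1)^{t_{i\theta}}$.

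To finish, I would lift the SOS relations to the dilated state $\ket{\psi'}:=(V\otimes W)\ket{\psi}$ and decompose $\ket{\psi'}=\sum_{t,t'}\ket{v_{t,t'}}$ across the orthogonal splits. The lifted swap $B'_{\theta,i}\ket{\psi'}=C'_{\theta,i}\ket{\psi'}$ forces $t=t'$ exactly as in \cref{thm:exact}, so $\ket{\psi'}$ lives in $\bigoplus_t\tsf{A}^{\otimes n}\otimes\tsf{B}_t\otimes\tsf{C}_t\otimes\tsf{R}$. Within each block $t$, the lifted Alice-involving relation $(Z_i\otimes B'_{0,i}+X_i\otimes B'_{1,i})\ket{\psi'}=\sqrt{2}\ket{\psi'}$ pins the $i$-th Alice qubit to the $\sqrt{2}$-eigenvector of $(-1)^{t_{i0}}Z+(-1)^{t_{i1}}X$, namely $X^{t_{i0}}Z^{t_{i1}}\ket{\beta}$; iterating this over $i\in[n]$ yields the desired tensor-product form. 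I expect the main obstacle to be the cross-copy cross-question commutation of the second step, and the decisive new idea beyond the single-game proof is transferring a Bob operator to its Charlie twin via the swap, exploiting the exact same-question PVM commutation to slide it past the SOS operator $O_i^B$.
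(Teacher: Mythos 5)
There is a genuine gap in the commutation step, and it stems from the fact that the parallel-game marginal observables $B^\theta_{y,i}$ and $C^\theta_{y,i}$ depend on the \emph{full} question string $\theta\in\Z_2^n$, not merely on the marginal question $\theta_i$. Your notation $b_{\theta,i}$, $c_{\theta,i}$ elides this, but the SOS relations extracted from $\mfk{w}^i_{\ttt{TFKW}^n}(\ttt{S})=\cos^2(\pi/8)$ come with a ``background'' $\varphi$: for each $i$ and each $\varphi\in\Z_2^n$ with $\varphi_i=0$ you get $\bigl(Z_i\otimes B^{\varphi}_{\cdot,i}+X_i\otimes C^{\varphi+1^i}_{\cdot,i}-\sqrt{2}\bigr)\ket{\psi}=0$ and $B^\varphi_{\cdot,i}\ket{\psi}=C^\varphi_{\cdot,i}\ket{\psi}$, and the operators in these relations are a priori different for different $\varphi$. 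Your key step --- ``$[c_{1,j},c_{1,i}]=0$ by same-question PVM commutation'' --- requires the two Charlie observables to come from the \emph{same} full question string. Tracking this carefully, your argument establishes $[B^{\chi}_{\cdot,i},B^{\chi+1^i}_{\cdot,j}]\ket{\psi}=0$ (backgrounds differing only at position $i$, where $\chi$ is the background used in $O^B_i$), together with the exact commutation $[B^{\chi}_{\cdot,i},B^{\chi}_{\cdot,j}]=0$. But when you go to apply \cref{lem:z2n-gh} you must fix a concrete family of $2n$ Bob observables $U_{i\theta}=B^{\varphi_{i\theta}}_{\cdot,i}$ and establish $[U_{i\theta},U_{j\theta'}]\ket{\psi}=0$ for \emph{every} pair. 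For any fixed choice of backgrounds $\varphi_{i\theta}$, some pairs will have backgrounds differing at two or more positions, or at the wrong position, and your argument does not reach those pairs. The obvious repair --- showing $B^{\chi}_{\cdot,i}\ket{\psi}=B^{\chi'}_{\cdot,i}\ket{\psi}$ whenever $\chi_i=\chi'_i$ --- is not a consequence of the SOS relations alone: this is precisely what the paper's \cref{lem:2-strat} proves, via a nontrivial comparison of the rigidity decompositions of the two optimal single-game strategies obtained from different backgrounds. That lemma is the technical heart of the parallel argument, and your proposal bypasses it without substitute.

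Concretely, the paper's route is: (1) first apply the single-game exact rigidity (\cref{thm:exact}) to each of the $n2^{n-1}$ induced optimal strategies $^{\varphi,i}\ttt{S}$, obtaining dilated observables $^{\varphi,i}B'_\theta$; (2) use \cref{lem:2-strat} to show the resulting rigidity decompositions coincide across $\varphi$, so $^{\varphi,i}B'_\theta\ket{\psi'}$ is independent of $\varphi$; (3) only then can one freely swap backgrounds to reduce any cross-copy commutator to one with matching full question strings, and the Gowers-Hatami step goes through. Once you have (2), the remainder of your outline (forming a $(0,\ket{\psi})$-representation of $(\Z_2^2)^n$, dilating, decomposing the shared state across joint eigenspaces, and pinning Alice's qubits to $X^{s_0}Z^{s_1}\ket{\beta}$) matches the paper. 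So the missing ingredient is exactly the constancy-over-background lemma; the rest of your plan is sound.
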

	Note that we are writing strings $t\in(\Z_2^2)^n$ as $t=t_{10}t_{11}t_{20}t_{21}\ldots t_{n0}t_{n1}$. To prove this theorem, we want to reduce to the single-game case as much as possible and use the rigidity we know there. As such, we extract a collection of optimal strategies for a single TFKW game. In fact, we may express the $i$-th winning probability as
	\begin{align}
		\mfk{w}^i_{\ttt{TFKW}^n}(\ttt{S})&=\expec{\substack{\varphi\in\Z_2^n\\\varphi_i=0}}\parens*{\frac{1}{2}\sum_{\substack{\theta\in\Z_2^n\\\theta_j=\varphi_j\forall j\neq i}}\sum_{y\in\Z_2}\Tr\squ*{\parens*{(A^n)^\theta_{y,i}\otimes B^\theta_{y,i}\otimes C^\theta_{y,i}}\rho}}.
	\end{align}
	In order for this average to be $\cos^2\parens*{\tfrac{\pi}{8}}$, each of the inner terms must also be $\cos^2\parens*{\tfrac{\pi}{8}}$, and thus they must correspond to an optimal strategy of $\ttt{TFKW}$. Then we get $n2^{n-1}$ optimal strategies: for every $i\in[n]$ and $\varphi\in\Z_2^n$ such that $\varphi_i=0$, the strategy ${^{\varphi,i}\ttt{S}}=\parens*{\tsf{B},\tsf{C},{^{\varphi,i\!}B},{^{\varphi,i}C},\rho}$ where ${^{\varphi,i\!}B}^{\theta}_y=B^{\varphi+\theta1^i}_{y,i}$ and ${^{\varphi,i}C}^{\theta}_y=C^{\varphi+\theta1^i}_{y,i}$ is an optimal strategy for $\ttt{TFKW}$, assuming that Alice measures on her $i$-th qubit, i.e. $A^\theta_y=(A^n)^{\theta1^i}_{y,i}$. Before going ahead to the proof, we prove an important lemma that allows us to relate strategies of this form.
	
	\begin{lemma}\label{lem:2-strat}
		Let ${^{0}\ttt{S}}$ and ${^{1}\ttt{S}}$ be two purified optimal strategies for $\ttt{TFKW}$. Suppose their shared states are equal, $\ket{\psi}=\ket{{^{0}\psi}}=\ket{{^{1}\psi}}$. Then we can choose that the local dilation operations be the same for both strategies and, in that case, the rigidity decompositions of the two states must be identical.
	\end{lemma}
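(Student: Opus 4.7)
The plan is to apply exact rigidity (\cref{thm:exact}) to ${^0\ttt{S}}$ and then show that the resulting isometries already dilate ${^1\ttt{S}}$ to an identical decomposition. Applying \cref{thm:exact} to ${^0\ttt{S}}$ yields isometries $V\colon\tsf{B}\to\tsf{B}'$, $W\colon\tsf{C}\to\tsf{C}'$, dilated observables with joint eigenspace decompositions $\tsf{B}'=\bigoplus_s\tsf{B}_s$ and $\tsf{C}'=\bigoplus_s\tsf{C}_s$, and the state decomposition $(V\otimes W)\ket{\psi}=\sum_s X^{s_0}Z^{s_1}\ket{\beta}\otimes\ket{^0\psi_s}$ with $\ket{^0\psi_s}\in\tsf{B}_s\otimes\tsf{C}_s\otimes\tsf{R}$.

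The central claim is that the two strategies' observables act identically on $\ket{\psi}$: ${^0B_\theta}\ket{\psi}={^1B_\theta}\ket{\psi}$ and ${^0C_\theta}\ket{\psi}={^1C_\theta}\ket{\psi}$. Both being optimal, each satisfies the SOS-derived relations \cref{eq:rels1,eq:rels2,eq:rels3,eq:rels4} together with the combined equation $(Z\otimes b_0+X\otimes b_1)\ket{\psi}=\sqrt{2}\ket{\psi}$ and the commutation $[b_0,b_1]\ket{\psi}=0$ established in the proof of \cref{thm:exact}. Writing $\ket{\psi}=\ket{\beta}\otimes\ket{u}+\ket{\beta^\perp}\otimes\ket{v}$ in Alice's Breidbart basis and introducing $b_\pm=b_1\pm b_0$, the combined eigenvalue equation becomes the linear system $(2I-b_+)\ket{u}=b_-\ket{v}$ and $(2I+b_+)\ket{v}=b_-\ket{u}$. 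The self-inverse condition $b_\theta^2=I$ gives $b_+^2+b_-^2=4I$, and $[b_0,b_1]\ket{\psi}=0$ reduces to $b_+b_-\ket{\psi}=0$. A short calculation combining these constraints determines $b_\theta\ket{u}$ and $b_\theta\ket{v}$ uniquely in terms of $\ket{u},\ket{v}$ alone; the same argument applies to $c_\theta$.

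Given this identity of actions, the Gowers-Hatami construction in the proof of \cref{thm:gowers-hatami} yields $V_0\ket{\psi}=V_1\ket{\psi}$ and $W_0\ket{\psi}=W_1\ket{\psi}$, since the image of any $\ket{v}$ under the constructed isometry depends on $f$ only through the values $\{f(g)\ket{v}\}_{g\in G}$, which at $\ket{v}=\ket{\psi}$ coincide for the two strategies. Modifying the dilations off the support of $\ket{\psi}$ by an isometric completion, we may take $V_0=V_1=:V$ and $W_0=W_1=:W$ as full isometries. Under this common dilation the dilated state is a single vector, and since the dilated observables are determined on its support by $B_\theta'V\ket{\psi}=VB_\theta\ket{\psi}$, both strategies' dilated observables agree there. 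Consequently, the joint-eigenspace decompositions and the resulting components $\ket{\psi_s}$ of the rigidity decomposition coincide: ${^0\psi_s}={^1\psi_s}$ for every $s$.

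The hardest part will be verifying the uniqueness-of-action claim above, which requires carefully combining the Breidbart-basis form of the eigenvalue relation with the self-inverse and commutation identities. Everything else is bookkeeping: tracking how the Gowers-Hatami image depends only on $f(g)\ket{\psi}$, and extending partial isometries to full ones.
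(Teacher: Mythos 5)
Your high-level plan is reasonable, and two of its ingredients are correct: (a) the Gowers--Hatami construction in \cref{thm:gowers-hatami} produces $V\ket{\psi}$ through $\hat f(\gamma)(\ket{\psi}\otimes\ket{i}_\gamma)=\tfrac{1}{|G|}\sum_g f(g)\ket{\psi}\otimes\gamma(g)\ket{i}_\gamma$, so it indeed depends on $f$ only through the vectors $f(g)\ket{\psi}$; and (b) the output space $\bigoplus_\gamma\tsf{V}\otimes\tsf{V}_\gamma\otimes\tsf{V}_\gamma$ is the same for both strategies, so the two dilated states are directly comparable. If you knew ${^0B_\theta}\ket{\psi}={^1B_\theta}\ket{\psi}$ and ${^0C_\theta}\ket{\psi}={^1C_\theta}\ket{\psi}$, the rest of your argument would go through (one also needs to check $f_0(11)\ket{\psi}=f_1(11)\ket{\psi}$, which works by moving ${^1B_1}$ to Charlie via \cref{eq:rels4}, swapping ${^0B_0}$ for ${^1B_0}$, and moving back).

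The genuine gap is the central claim itself. You assert that $(2I-b_+)\ket{u}=b_-\ket{v}$, $(2I+b_+)\ket{v}=b_-\ket{u}$, $b_+^2+b_-^2=4I$, and $b_+b_-\ket{\psi}=0$ ``determine $b_\theta\ket{u}$ and $b_\theta\ket{v}$ uniquely in terms of $\ket{u},\ket{v}$ alone.'' This is not what the calculation yields. Applying $b_+$ to the first equation and using $b_+b_-\ket{v}=0$ gives only $b_+^2\ket{u}=2b_+\ket{u}$; spectrally this says $\ket{u}=P_0\ket{u}+P_2\ket{u}$ and $b_+\ket{u}=2P_2\ket{u}$, where $P_\lambda$ are the eigenprojectors \emph{of $b_+$ itself}. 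Similarly $b_+\ket{v}=-2P_{-2}\ket{v}$, $b_-\ket{u}=2P_0\ket{v}$, $b_-\ket{v}=2P_0\ket{u}$. All further manipulations with $b_+^2+b_-^2=4$ and the anticommutation $\{b_+,b_-\}=0$ just reproduce (1) and (2). Nothing in this list pins down the splitting of $\ket{u}$ into $P_0\ket{u}$ and $P_2\ket{u}$ without already knowing $b_+$ --- which is exactly what is unknown. The missing ingredient is the interplay with Charlie: both $b_\theta\ket{\psi}=c_\theta\ket{\psi}$ \emph{and} the orthogonality on $\tsf{C}$ of the eigenspace components are what force the equations to decouple. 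That orthogonality is precisely the conclusion of \cref{thm:exact}, so it is not available at the point where you invoke it, and certainly not from the relations you list. Concretely, with $\ket{\psi}=\tfrac12\sum_s\ket{\beta_s}\ket{s}_B\ket{s}_C$, the equations only determine $b_\pm$ once one projects onto a basis of $\tsf{C}$ --- a step that uses both that $b_\theta$ acts trivially on $\tsf{C}$ and the support structure coming from the other player.

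The paper's proof instead first embeds both dilations into a common larger space $\tsf{B}'={^0\tsf{B}'}\oplus{^1\tsf{B}'}$ and uses that the images of ${^0V}$ and ${^1V}$ have the same dimension to take $V:={^0U}{^0V}={^1U}{^1V}$ (and analogously $W$), giving a single dilated state with \emph{two} orthogonal decompositions. It then applies the commuting projectors ${^0\Pi^B_s}\otimes{^1\Pi^C_{s'}}$ (Bob-projector from one decomposition, Charlie-projector from the other) to the expansion in the $\{\ket{\beta_{00}},\ket{\beta_{11}}\}$ basis, deducing that all cross-terms vanish and then using a norm-sum argument to show $\ket{\psi_s^0}=\ket{\psi_s^1}$. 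This is where the Charlie orthogonality enters and does real work. To salvage your proof you would have to supply the argument that the two strategies' observables act identically on $\ket{\psi}$, and the natural way to do so is exactly the projector argument the paper carries out in the dilated picture --- at which point you have essentially reproduced the paper's proof rather than shortcut it.
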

	
	As before, we write $\ket{\beta_s}=X^{s_0}Z^{s_1}\ket{\beta}$.
	
	\begin{proof}
		Using \cref{thm:exact}, for each $i=0,1$ there exist Hilbert spaces with orthogonal decompositions ${^i\tsf{B}'}=\bigoplus_{s\in\Z_2^2}{^i\tsf{B}_s}$ and ${^i\tsf{C}'}=\bigoplus_{s\in\Z_2^2}{^i\tsf{C}_s}$; isometries ${^iV}:{^i\tsf{B}}\rightarrow{^i\tsf{B}'}$ and ${^iW}:{^i\tsf{C}}\rightarrow{^i\tsf{C}'}$; and for each $s\in\Z_2^2$ vectors $\ket{\psi_s^i}\in{^i\tsf{B}_s}\otimes{^i\tsf{C}_s}\otimes\tsf{R}$ such that
		\begin{align}
		({^iV}\otimes{^iW})\ket{\psi}=\sum_{s\in\Z_2^2}\ket{\beta_s}\otimes\ket{\psi_s^i}.
		\end{align}
		Further, again following from the exact rigidity, for each $\theta\in\Z_2$ there exist PVMs ${^{i}\!B'}^\theta:\Z_2\rightarrow\mc{P}(^i\tsf{B}')$ and ${^{i}C'}^\theta:\Z_2\rightarrow\mc{P}(^i\tsf{C}')$ such that ${^iV}{\,^i\!B}^\theta_y\ket{\psi}={^i\!B'}^\theta_y{^iV}\ket{\psi}$, ${^i\!B'}^\theta_y\ket{\psi_s^i}=\delta_{y,s_\theta}\ket{\psi_s^i}$, and $[{^i\!B'}^0_y,{^i\!B'}^1_{y'}]=0$; and identically for the ${^iC'}^\theta_y$. First we show that the dilation unitaries can be constructed so that they are identical for $i=0,1$. Let $\tsf{B}'={^0\tsf{B}'}\oplus{^1\tsf{B}'}$ and $\tsf{C}'={^0\tsf{C}'}\oplus{^1\tsf{C}'}$, so the isometries $^iV$ and $^iW$ can be seen as isometries into $\tsf{B}'$ and $\tsf{C}'$ respectively. Since the images of $^0V$ and $^1V$ have the same dimension in $\tsf{B}'$, there exist unitaries $^0U,{^1U}\in\mc{U}(\tsf{B}')$ such that ${^0U}{^0V}={^1U}{^1V}=:V$. Thus, we may redefine $^0V$ and $^1V$ to be this. Doing the same for $C$, to get $W$, we may assume that the dilation operators are the same for both strategies.
		
		Define $\ket{\psi'}=(V\otimes W)\ket{\psi}$, and $^{i}\Pi^B_{s}={^i\!B'}^0_{s_0}{^i\!B'}^1_{s_1}$ and $^{i}\Pi^C_{s}={^iC'}^0_{s_0}{^iC'}^1_{s_1}$, the projectors onto ${^i\tsf{B}_s}$ and ${^i\tsf{C}_s}$, respectively. Expanding the two expressions of $\ket{\psi'}$ in the basis $\{\ket{\beta_{00}},\ket{\beta_{11}}\}$ of $\tsf{A}$, we get the relations
		\begin{align}
		\begin{split}
		&\ket{\psi_{00}^0}+\tfrac{1}{\sqrt{2}}\parens*{\ket{\psi_{01}^0}+\ket{\psi_{10}^0}}=\ket{\psi_{00}^1}+\tfrac{1}{\sqrt{2}}\parens*{\ket{\psi_{01}^1}+\ket{\psi_{10}^1}}\\
		&\ket{\psi_{11}^0}+\tfrac{1}{\sqrt{2}}\parens*{\ket{\psi_{01}^0}-\ket{\psi_{10}^0}}=\ket{\psi_{11}^1}-\tfrac{1}{\sqrt{2}}\parens*{\ket{\psi_{01}^1}+\ket{\psi_{10}^1}}.
		\end{split}
		\end{align}
		Projecting the second relation onto ${^0\tsf{B}_{00}}\otimes\tsf{C}'$ gives $0={^0\Pi^B_{00}}\ket{\psi_{11}^1}-\tfrac{1}{\sqrt{2}}\parens*{{^0\Pi^B_{00}}\ket{\psi_{01}^1}+{^0\Pi^B_{00}}\ket{\psi_{10}^1}}$, and projecting this onto ${^0\tsf{B}_{00}}\otimes{^1\tsf{C}_{s}}$ for $s=01,10,11$ gives ${^0\Pi^B_{00}}\ket{\psi_s^1}=0$. Thus, projecting the first relation onto ${^0\tsf{B}_{00}}\otimes\tsf{C}'$ gives $\ket{\psi_{00}^0}={^0\Pi^B_{00}}\ket{\psi_{00}^1}$. Repeating a similar procedure for each $s\in\Z_2^2$ gives $\ket{\psi_s^0}={^0\Pi^B_{s}}\ket{\psi_s^1}$. It remains to show that the projectors act as the identity on these states. Suppose there exists $s$ such that ${^0\Pi^B_{s}}$ does not preserve $\ket{\psi_s^1}$. Then, $\braket{\psi_s^0}=\braket{\psi_s^1}{{^0\Pi^B_{s}}}{\psi_s^1}<\braket{\psi_s^1}$. However, we have then
		\begin{align}
			1=\braket{\psi'}=\sum_{t\in\Z_2^2}\braket{\psi_t^0}<\sum_{t\in\Z_2^2}\braket{\psi_t^1}=1,
		\end{align}
		which is a contradiction. Thus, $\ket{\psi_s^0}=\ket{\psi_s^1}$, so the rigidity decompositions are identical.
	\end{proof}

	\begin{proof}[Proof of \cref{thm:exact-parallel}]
		Knowing that the strategies $^{\varphi,i}\ttt{S}$ are optimal, we can use the exact rigidity of \cref{thm:exact} to get that there exist Hilbert spaces with orthogonal decompositions $^{\varphi,i}\tsf{B}'=\bigoplus_{s\in\Z_2^2}{^{\varphi,i}\tsf{B}_s}$ and $^{\varphi,i}\tsf{C}'=\bigoplus_{s\in\Z_2^2}{^{\varphi,i}\tsf{C}_s}$; isometries $^{\varphi,i}V:\tsf{B}\rightarrow{^{\varphi,i}\tsf{B}'}$ and $^{\varphi,i}W:\tsf{C}\rightarrow{^{\varphi,i}\tsf{C}'}$; and vectors $\ket{\psi^{\varphi,i}_s}\in\tsf{A}_1\otimes\cdots\otimes\tsf{A}_{i-1}\otimes\tsf{A}_{i+1}\otimes\cdots\otimes\tsf{A}_n\otimes{^{\varphi,i}\tsf{B}}_s\otimes{^{\varphi,i}\tsf{C}}_s\otimes\tsf{R}$ such that
		\begin{align}
		({^{\varphi,i}V}\otimes{^{\varphi,i}W})\ket{\psi}=\sum_{s\in\Z_2^2}\ket{\beta_s}_i\otimes\ket{\psi^{\varphi,i}_s},
		\end{align}
		where we use the subscript $i$ to indicate that the state lives in the space $\tsf{A}_i$. Using the same construction as the first part of \cref{lem:2-strat}, we can assume that the spaces $^{\varphi,i}\tsf{B}'=\tsf{B}'$ and the isometries $^{\varphi,i}V=V$ for all $\varphi,i$; and similarly for Charlie's. Then, by the lemma again, $\ket{\psi^{\varphi,i}_s}$ is constant over all values of $\varphi$, so we write $\ket*{\psi^i_s}$ for this state. Still using the rigidity, there exist unitary observables $^{\varphi,i\!}B_{\theta}'\in\mc{U}(\tsf{B'})$ and $^{\varphi,i}C_{\theta}'\in\mc{U}(\tsf{C'})$ such that
		\begin{align}
			&V\;{^{\varphi,i\!}B}_\theta\ket{\psi}={^{\varphi,i\!}B'_{\theta}}V\ket{\psi}\\ &{^{\varphi,i\!}B_{\theta}'}\ket{\psi^i_s}=(-1)^{s_{\theta}}\ket{\psi^i_s}\\
			&[{^{\varphi,i\!}B_{\theta}'},{^{\varphi,i\!}B_{\theta+1}'}]=0;
		\end{align}
		and identically for Charlie's observables. Note that, writing $\ket{\psi'}=(V\otimes W)\ket{\psi}$, these relations imply that like the original observables
		\begin{align}
			{^{\varphi,i\!}B'_\theta}\ket{\psi'}={^{\varphi,i}C'_\theta}\ket{\psi'}.
		\end{align}
		The rigidity relations also imply that
		\begin{align}
			{^{\varphi,i}B_\theta'}\ket{\psi'}={^{\varphi',i}B_\theta'}\ket{\psi'}
		\end{align}
		for valid values of $\varphi,\varphi'$.
		
		The first goal is to show that all of the ${^{\varphi,i\!}B'_\theta}$ commute with respect to $\ket{\psi'}$. For $\chi\in\Z_2^n$, write ${^{i\!}B_\chi}={^{\chi+\chi_i1^i,i\!}B_{\chi_i}}$ and similarly for the dilated operators, which simplifies the work a bit. From \cref{eq:commutation} in \cref{sec:moeg-defs}, the operator ${^{i\!}B_\chi}$ commutes with ${^{j\!}B_{\chi}}$. This extends directly to the dilated operators as
		\begin{align}
		\begin{split}
			{^{i\!}B'_\chi}{^{j\!}B'_{\chi}}\ket{\psi'}&={^{i\!}B'_\chi}\otimes{^{j}C'_{\chi}}\ket{\psi'}=(V\otimes W){^{i\!}B_\chi}\otimes{^{j}C_{\chi}}\ket{\psi}\\
			&=(V\otimes W){^{i\!}B_\chi}{^{j\!}B_{\chi}}\ket{\psi}=(V\otimes W){^{j\!}B_{\chi}}{^{i\!}B_{\chi}}\ket{\psi}\\
			&={^{j\!}B_{\chi}'}{^{i\!}B_{\chi}'}\ket{\psi'},
		\end{split}
		\end{align}
		so ${^{i\!}B'_\chi}$ and ${^{j}B'_{\chi}}$ commute with respect to $\ket{\psi'}$. We extend this to all the observables using \cref{lem:2-strat}. Take any $i,j\in[n]$, $\chi,\chi'\in\Z_2^n$. If $i=j$, then let $\xi=\chi+(\chi_i+\chi'_i)1^i$. We have that ${^iB'_\chi}$ and ${^iB'_{\xi}}$ commute as they are the observables from the same game and ${^{i\!}B'_{\chi'}}\ket*{\psi'}={^{i\!}B'_\xi}\ket*{\psi'}$ as they are equal on the $i$-th bit, so
		\begin{align}
			{^{i\!}B'_\chi}{^{i\!}B'_{\chi'}}\ket{\psi'}={^{i\!}B'_\chi}{^{i\!}B'_{\xi}}\ket{\psi'}={^{i\!}B'_{\xi}}{^{i\!}B'_\chi}\ket{\psi'}={^{i\!}B'_{\xi}}\otimes{^iC'_\chi}\ket{\psi'}={^{i\!}B'_{	\chi'}}\otimes{^iC'_\chi}\ket{\psi'}={^{i\!}B'_{\chi'}}{^{i\!}B'_\chi}\ket{\psi'}.
		\end{align}
		If $i\neq j$, there exists a $\xi$ such that $\xi_i=\chi_i$ and $\xi_j=\chi'_j$. Then, we have that
		\begin{align}
			{^{i\!}B'_\chi}{^{j\!}B'_{\chi'}}\ket{\psi'}={^{i\!}B'_\chi}\otimes{^{j\!}C'_{\chi'}}\ket{\psi'}={^{i\!}B'_{\xi}}\otimes{^{j\!}C'_{\xi}}\ket{\psi'}={^{i\!}B'_{\xi}}{^{j\!}B'_{\xi}}\ket{\psi'}={^{j\!}B'_{\xi}}{^{i\!}B'_{\xi}}\ket*{\psi'}={^{j\!}B'_{\chi'}}{^{i\!}B'_{\chi}}\ket{\psi'}.
		\end{align}
		Thus, all of the observables commute.
		
		Consider the group generated by the observables ${^{i\!}B_{\theta1^i}'}$ for $\theta\in\Z_2$ and $i\in[n]$. The commutation implies that this is a $(0,\ket{\psi})$-representation $f$ of $(\Z_2^2)^n$. This holds in the same way for Charlie's observables. Applying Gowers-Hatami, there exist Hilbert spaces with orthogonal decompositions $\tsf{B}''=\bigoplus_{t\in(\Z_2^2)^n}\tsf{B}_t$ and $\tsf{C}''=\bigoplus_{t\in(\Z_2^2)^n}\tsf{C}_t$; isometries $V':\tsf{B}'\rightarrow\tsf{B}''$ and $W':\tsf{C}'\rightarrow\tsf{C}''$; and observables that align with the decomposition ${^{i\!}B_{\theta}''}=\sum_{t\in(\Z_2^2)^n}(-1)^{t_{i\theta}}\Id_{B,t}$ and ${^{i\!}C_{\theta}''}=\sum_{t\in(\Z_2^2)^n}(-1)^{t_{i\theta}}\Id_{C,t}$ such that $V'\,{^{i\!}B_{\theta1^i}'}\ket{\psi'}={^{i\!}B''_{\theta}}V'\ket{\psi'}$ and $W'\,{^iC_{\theta1^i}'}\ket{\psi'}={^iC''_{\theta}}W'\ket{\psi'}$. By construction, ${^{i\!}B''_{\theta}}V'\ket{\psi^i_s}=(-1)^{s_\theta}V'\ket{\psi^i_s}$. Thus, the support of $V'\ket{\psi^i_s}$ on $\tsf{B}''$ is contained in the span of the subspaces $\tsf{B}_t$ such that $t_i=s$. Since an analogous inclusion holds for Charlie's space, we get that
		\begin{align}
			(V'\otimes W')\ket{\psi^i_s}\in\bigoplus_{t_i={t_i}'=s}\tsf{A}_1\otimes\cdots\otimes\tsf{A}_{i-1}\otimes\tsf{A}_{i+1}\otimes\cdots\otimes\tsf{A}_n\otimes\tsf{B}_t\otimes\tsf{C}_{t'}\otimes\tsf{R}.
		\end{align}
		Defining $\ket*{\psi''}=(V'\otimes W')\ket{\psi}$, this gives that
		\begin{align}
			\ket{\psi''}\in\bigoplus_{\substack{t,t'\in(\Z_2^2)^n\\t_i={t_i}'}}\tsf{A}_1\otimes\cdots\otimes\tsf{A}_{i-1}\otimes\ket{\beta_{t_i}}\otimes\tsf{A}_{i+1}\otimes\cdots\otimes\tsf{A}_n\otimes\tsf{B}_t\otimes\tsf{C}_{t'}\otimes\tsf{R}
		\end{align}
		for all $i$. Taking the intersection of all these spaces, which is easy as the $\tsf{B}_t$ and $\tsf{C}_{t'}$ are orthogonal, we end up with
		\begin{align}
		\ket{\psi''}\in\bigoplus_{t\in(\Z_2^2)^n}\ket{\beta_{t_1}}\otimes\cdots\otimes \ket{\beta_{t_n}}\otimes\tsf{B}_t\otimes\tsf{C}_{t}\otimes\tsf{R}
		\end{align}
	\end{proof}
	
	We see also from the proof that ${^{\varphi,i\!}B_\theta'}\ket{\psi_t}={^{\varphi,i}C_\theta'}\ket{\psi_t}=(-1)^{t_{i\theta}}\ket{\psi_t}$.

\subsection{Parallel-Repeated Robust Rigidity}\label{sec:robust-parallel}

	Now, we consider the robust rigidity of the parallel-repeated game. We want to approach it in about the same way in the exact case, so first we need a generalisation of \cref{lem:2-strat} to the approximate case.
	
	\begin{lemma}\label{lem:2-strat-rob}
		Let ${^0\ttt{S}}$ and ${^1\ttt{S}}$ be purified strategies for $\ttt{TFKW}$ that both win with probability $\mfk{w}_{\ttt{TFKW}}({^i\ttt{S}})\geq\cos^2\parens*{\frac{\pi}{8}}-\delta$ for some \mbox{$\delta\geq 0$}. If we suppose their shared states are equal, $\ket{\psi}=\ket{{^0\psi}}=\ket{{^1\psi}}$, then there is a constant $Q\geq 0$ such that for every $\theta\in\Z_2$,
		\begin{align}
		\begin{split}
			&\norm*{{^{0\!}B_\theta}\ket{\psi}-{^{1\!}B_\theta}\ket{\psi}}\leq Q\sqrt{\delta}\\
			&\norm*{{^0C_\theta}\ket{\psi}-{^1C_\theta}\ket{\psi}}\leq Q\sqrt{\delta}.
		\end{split}
		\end{align}
	\end{lemma}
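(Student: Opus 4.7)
The plan is to extend the sum-of-squares analysis used in the proof of \cref{thm:robust}, applied simultaneously to both strategies, and exploit a cancellation that arises when the two SOS bounds are compared. The key observation is that subtracting the corresponding inequalities for the two strategies eliminates the constant term $\sqrt{2}$, leaving an operator of the form $Z \otimes \beta + X \otimes \gamma$ with $\beta$ acting only on $\tsf{B}$ and $\gamma$ only on $\tsf{C}$; such an operator has a remarkably simple square due to the Pauli anticommutation. This route avoids going through a pair of dilations, which would be the more obvious approach but is delicate in the robust case.

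First I would apply the hypothesis $\mfk{w}_{\ttt{TFKW}}({^i\ttt{S}}) \geq \cos^2\tfrac{\pi}{8} - \delta$ to each strategy's game polynomial. As in the proof of \cref{thm:robust}, the SOS decomposition \cref{eq:sos} yields, for each $i \in \{0,1\}$ and some absolute constant $C$,
\begin{align*}
\norm*{\parens*{Z \otimes {^iB_0} + X \otimes {^iC_1} - \sqrt{2}}\ket{\psi}} &\leq C\sqrt{\delta}, \\
\norm*{\parens*{Z \otimes {^iC_0} + X \otimes {^iB_1} - \sqrt{2}}\ket{\psi}} &\leq C\sqrt{\delta},
\end{align*}
because each squared term in the SOS is individually bounded by the overall inequality. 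Subtracting the $i=0$ and $i=1$ versions of the first line, the constant $\sqrt{2}$ cancels and I obtain
\begin{align*}
\norm*{\parens*{Z \otimes \Delta^B_0 + X \otimes \Delta^C_1}\ket{\psi}} \leq 2C\sqrt{\delta},
\end{align*}
where $\Delta^B_\theta := {^0B_\theta} - {^1B_\theta}$ is a Hermitian operator on $\tsf{B}$ and $\Delta^C_\theta := {^0C_\theta} - {^1C_\theta}$ is Hermitian on $\tsf{C}$.

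The crucial algebraic step is to square this operator. Since $\Delta^B_0$ and $\Delta^C_1$ act on disjoint tensor factors they commute, and the cross terms in the expansion of $\parens*{Z \otimes \Delta^B_0 + X \otimes \Delta^C_1}^2$ combine as $(ZX + XZ) \otimes \Delta^B_0 \Delta^C_1 = 0$ by the Pauli anticommutation $\{X,Z\} = 0$. Hence
\begin{align*}
\parens*{Z \otimes \Delta^B_0 + X \otimes \Delta^C_1}^2 = \Id \otimes \parens*{(\Delta^B_0)^2 + (\Delta^C_1)^2},
\end{align*}
and taking the expectation value in $\ket{\psi}$ gives $\norm*{\Delta^B_0 \ket{\psi}}^2 + \norm*{\Delta^C_1 \ket{\psi}}^2 \leq 4C^2\delta$, so both quantities are at most $2C\sqrt{\delta}$. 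Repeating the argument with the second SOS inequality analogously bounds $\norm*{\Delta^C_0 \ket{\psi}}$ and $\norm*{\Delta^B_1 \ket{\psi}}$, covering all four cases and yielding the claim with a constant $Q$ proportional to $C$.

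The main obstacle I anticipate is recognising that this direct SOS manipulation is the right path. The tempting alternative---invoke \cref{thm:robust} twice, align the two dilation isometries as in the opening of \cref{lem:2-strat}, and then compare the two rigidity decompositions $\ket{\phi^i} = \sum_s \ket{\beta_s} \otimes \ket{\psi_s^i}$ componentwise---is more delicate in the robust setting, because the orthogonal subspaces ${^i\tsf{B}_s}$ produced by the two dilations can differ, so direct comparison of $\ket{\psi_s^0}$ with $\ket{\psi_s^1}$ does not immediately give the bound. Going through the SOS identity, together with the tensor-factor cancellation, sidesteps this subtlety entirely.
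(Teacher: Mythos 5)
Your proof is correct, and it takes a genuinely different and substantially simpler route than the paper. The paper's argument applies \cref{thm:robust} to each strategy, aligns the dilation isometries as in \cref{lem:2-strat}, and then laboriously compares the two rigidity decompositions $\ket{\phi^i}=\sum_s\ket{\beta_s}\otimes\ket{\psi^i_s}$ using the projectors ${^i\Pi^B_s}$; this yields $Q\approx6300$. You instead bypass the dilation machinery entirely: you work directly from the SOS inequalities for the two strategies, subtract so that the $\sqrt{2}$ cancels, and observe that the square of $Z\otimes\Delta^B_0+X\otimes\Delta^C_1$ collapses to $\Id\otimes\bigl[(\Delta^B_0)^2+(\Delta^C_1)^2\bigr]$ thanks to the Pauli anticommutation $\{X,Z\}=0$ and the fact that the two $\Delta$'s act on disjoint tensor factors. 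Since $Z\otimes\Delta^B_0+X\otimes\Delta^C_1$ is Hermitian, taking the expectation in $\ket{\psi}$ converts the bound on its norm into a bound on $\norm{\Delta^B_0\ket{\psi}}^2+\norm{\Delta^C_1\ket{\psi}}^2$, and the other SOS term handles the remaining two differences. The only point worth spelling out explicitly is that each of the two Hermitian-square terms in the SOS is \emph{individually} bounded by $16\sqrt{2}\delta$ (not merely their minimum, as the paper states), so one may take $C=\sqrt{16\sqrt{2}}=4\cdot2^{1/4}$ and hence $Q=2C=8\cdot2^{1/4}\approx9.5$. Your approach is both shorter and yields a vastly better constant, and your diagnosis of why the decomposition-comparison route is delicate — the subspaces ${^i\tsf{B}_s}$ may differ between the two dilations — correctly identifies the source of the paper's extra work.
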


	The proof below lets us take $Q=6300$.

	\begin{proof}
		For each of the strategies, we use robust rigidity \cref{thm:robust} where by the method of \cref{lem:2-strat} we may assume that the dilation operators are equal. Then, there exist constants $K,L\geq 0$; Hilbert spaces with two orthogonal decompositions $\tsf{B}'=\bigoplus_{s\in\Z_2^2}{^i\tsf{B}_s}$ and $\tsf{C}'=\bigoplus_{s\in\Z_2^2}{^i\tsf{C}_s}$; isometries $V:\tsf{B}\rightarrow\tsf{B}'$ and $W:\tsf{C}\rightarrow\tsf{C}'$; and for each $s\in\Z_2^2$ vectors $\ket*{\psi^i_s}\in{^i\tsf{B}_s}\otimes{^i\tsf{C}_s}\otimes\tsf{R}$ such that
		\begin{align}
			\norm[\Big]{(V\otimes W)\ket{\psi}-\sum_{s\in\Z_2^2}\ket{\beta_s}\otimes\ket{\psi^i_s}}\leq K\sqrt{\delta}.
		\end{align}
		Further, for each $\theta\in\Z_2$, there exist unitary observables (and related PVMs) ${^{i\!}B'_\theta}\in\mc{U}(\tsf{B}')$ and ${^iC'_\theta}\in\mc{U}(\tsf{C}')$ such that
		\begin{align}
		\begin{split}
			&\norm*{(V{^{i\!}B_\theta}-{^{i\!}B_\theta'}V)\ket*{\psi}}\leq L\sqrt{\delta},\\ &\norm*{(W\,{^iC_\theta}-{^iC_\theta'}W)\ket{\psi}}\leq L\sqrt{\delta},\\ &{^{i\!}B'_\theta}\ket{\psi^i_s}={^iC'_\theta}\ket{\psi^i_s}=(-1)^{s_\theta}\ket{\psi^i_s},\\ &[{^{i\!}B'_0},{^{i\!}B'_1}]=0,\\
			&[{^iC'_0},{^iC'_1}]=0.
		\end{split}
		\end{align}
		First, using the triangle inequality, the distance between the two rigidity decompositions~is
		\begin{align}
			\norm[\Big]{\sum_{s\in\Z_2^2}\ket{\beta_s}\otimes\ket{\psi^0_s}-\sum_{s\in\Z_2^2}\ket{\beta_s}\otimes\ket{\psi^1_s}}\leq 2K\sqrt{\delta}.
		\end{align}
		Expanding $\tsf{A}$ in the basis $\{\ket*{\beta_{00}},\ket*{\beta_{11}}\}$, this gives that both
		\begin{align}
		\begin{split}
		&\norm*{\parens*{\ket{\psi^0_{00}}+\tfrac{1}{\sqrt{2}}\parens{\ket{\psi^0_{01}}+\ket{\psi^0_{10}}}}-\parens*{\ket{\psi^1_{00}}+\tfrac{1}{\sqrt{2}}\parens{\ket{\psi^1_{01}}+\ket{\psi^1_{10}}}}}\leq2K\sqrt{\delta}\\
		&\norm*{\parens*{\ket{\psi^0_{11}}+\tfrac{1}{\sqrt{2}}\parens{\ket{\psi^0_{01}}-\ket{\psi^0_{10}}}}-\parens*{\ket*{\psi^1_{11}}-\tfrac{1}{\sqrt{2}}\parens{\ket{\psi^1_{01}}-\ket{\psi^1_{10}}}}}\leq2K\sqrt{\delta}.
		\end{split}\label{eq:2-ineqs}
		\end{align}
		Again as in \cref{lem:2-strat}, we act by projectors of the form ${^i\Pi^B_s}={^{i\!}B'}^0_{s_0}{^{i\!}B'}^1_{s_1}$ and ${^i\Pi^C_s}={^iC'}^0_{s_0}{^iC'}^1_{s_1}$. Since the action of a projector cannot increase the norm, acting by ${^0\Pi^B_{00}}\otimes {^1\Pi^C_s}$ on the second inequality of \cref{eq:2-ineqs} gives
		\begin{align}
			\norm{{^0\Pi^B_{00}}\ket{\psi^1_{11}}}\leq 2K\sqrt{\delta}\text{ and }\norm{{^0\Pi^B_{00}}\ket{\psi^1_{01}}},\norm{{^0\Pi^B_{00}}\ket{\psi^1_{10}}}\leq 2\sqrt{2}K\sqrt{\delta}.
		\end{align}
		Then, acting by ${^0\Pi^B_{00}}$ on the first inequality of \cref{eq:2-ineqs} leads to $\norm{\ket{\psi^0_{00}}-{^0\Pi^B_{00}}\ket{\psi^1_{00}}}\leq6K\sqrt{\delta}$. Similar is true for other values of $s$ for the projectors, so
		\begin{align}
			\norm*{\ket{\psi^0_{00}}-\ket{\psi^1_{00}}}\leq\norm*{\ket{\psi^0_{00}}-{^0\Pi^B_{00}}\ket{\psi^1_{00}}}+\sum_{s\neq00}\norm*{{^0\Pi^B_{s}}\ket{\psi^1_{00}}}\leq 4(2+\sqrt{2})K\sqrt{\delta}.
		\end{align}
		The same thing holds in the same way for the other values of $s$ in the ket. Then
		\begin{align}
		\begin{split}
			\norm*{({^{0\!}B_\theta}-{^{1\!}B_\theta})\ket{\psi}}&=\norm*{(V\otimes W)({^{0\!}B_\theta}-{^{1\!}B_\theta})\ket{\psi}}\\
			&\leq\norm*{({^{0\!}B'_\theta}-{^{1\!}B'_\theta})(V\otimes W)\ket{\psi}}+2L\sqrt{\delta}\\
			&\leq\norm[\Big]{{^0B'_\theta}\sum_{s\in\Z_2^2}\ket{\beta_s}\otimes\ket{\psi^0_s}-{^1B'_\theta}\sum_{s\in\Z_2^2}\ket{\beta_s}\otimes\ket{\psi^1_s}}+2K\sqrt{\delta}+2L\sqrt{\delta}\\
			&=\norm[\Big]{\sum_{s\in\Z_2^2}(-1)^{s_\theta}\ket{\beta_s}\otimes(\ket{\psi^0_s}-\ket{\psi^1_s})}+2K\sqrt{\delta}+2L\sqrt{\delta}\\
			&\leq2\squ*{(17+8\sqrt{2})K+L}\sqrt{\delta}
		\end{split}
		\end{align}
		We can do the same with Charlie's observables.
	\end{proof}
	
	\begin{theorem}[robust parallel-repeated rigidity]\label{thm:rob-par}
		Let $n\in\N$ and let $\ttt{S}=\parens*{\tsf{B},\tsf{C},B,C,\rho=\ketbra{\psi}}$ be a purified strategy for $\ttt{TFKW}^n$. Suppose that for some $\varepsilon\geq 0$, for each $i\in[n]$, the $i$-th game wins with probability $\mfk{w}_{\ttt{TFKW}^n}^i(\ttt{S})\geq\cos^2\tfrac{\pi}{8}-\varepsilon$. Then, there exists a constant $K\geq 0$, Hilbert spaces $\tsf{B}'$ and $\tsf{C}'$, and isometries $V:\tsf{B}\rightarrow\tsf{B}'$ and $W:\tsf{C}\rightarrow\tsf{C}'$ such that the distance between quantum states
		\begin{align}
			\norm[\Big]{(V\otimes W)\ket{\psi}-\sum_{t\in(\Z_2^2)^n}X^{t_{10}}Z^{t_{11}}\ket{\beta}\otimes\cdots\otimes X^{t_{n0}}Z^{t_{n1}}\ket{\beta}\otimes\ket{\psi_t}}\leq Kn^3\sqrt{\varepsilon},
		\end{align}
		where the $\ket{\psi_t}\in\tsf{B}'\otimes\tsf{C}'\otimes\tsf{R}$ have orthogonal supports on both $\tsf{B}'$ and $\tsf{C}'$; and there exists a constant $L\geq 0$ and commuting observables ${^{\varphi,i\!}B_\theta'}\in\mc{U}(\tsf{B}')$ and ${^{\varphi,i}C_\theta'}\in\mc{U}(\tsf{C}')$ such that
		\begin{align}
		\begin{split}
			&\norm*{V\,{^{\varphi,i\!}B_\theta}\ket{\psi}-{^{\varphi,i\!}B_\theta'}V\ket{\psi}}\leq Ln^2\sqrt{\varepsilon}\\
			&\norm*{W\,{^{\varphi,i}C_\theta}\ket{\psi}-{^{\varphi,i}C_\theta'}W\ket{\psi}}\leq Ln^2\sqrt{\varepsilon}
		\end{split}\\
			&{^{\varphi,i\!}B_\theta'}\ket{\psi_t}={^{\varphi,i}C_\theta'}\ket{\psi_t}=(-1)^{t_{i\theta}}\ket{\psi_t},
		\end{align}
		for at least one value of $\varphi$ for each $i$.
	\end{theorem}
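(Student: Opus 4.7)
The plan is to follow the scaffolding of the exact parallel-repeated rigidity proof (\cref{thm:exact-parallel}) while tracking errors throughout, using \cref{thm:robust}, \cref{lem:2-strat-rob}, and \cref{lem:z2n-gh} as the robust counterparts of the tools used there. First I would extract near-optimal single-game strategies: since $\mfk{w}^i_{\ttt{TFKW}^n}(\ttt{S})$ is the average over $\varphi$ with $\varphi_i = 0$ of $\mfk{w}_{\ttt{TFKW}}({^{\varphi,i}\ttt{S}})$, and each of these is capped at $\cos^2\tfrac{\pi}{8}$, a Markov-type argument (the technique of \cite{Col17}) guarantees for each $i \in [n]$ at least one $\varphi^{(i)}$ with $\mfk{w}_{\ttt{TFKW}}({^{\varphi^{(i)},i}\ttt{S}}) \geq \cos^2\tfrac{\pi}{8} - 2\varepsilon$. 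Applying \cref{thm:robust} to each of these $n$ near-optimal strategies, and merging dilations as in the construction at the start of \cref{lem:2-strat}, I would obtain a common pair of isometries $V : \tsf{B} \to \tsf{B}'$, $W : \tsf{C} \to \tsf{C}'$ valid for all $n$ of them.

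Next I would verify the three hypotheses of \cref{lem:z2n-gh} in the dilated space, taking the $U_i$'s to be Bob's dilated observables ${^{\varphi^{(i)},i}B_\theta'}$ and the $V_i$'s to be Charlie's ${^{\varphi^{(i)},i}C_\theta'}$, applied to $\ket{\psi'} = (V \otimes W)\ket{\psi}$. Exact commutation $[U_i, V_j] = 0$ is immediate as $B$ and $C$ act on disjoint tensor factors. Approximate equality $\norm*{(U_i - V_i)\ket{\psi'}} = O(\sqrt{\varepsilon})$ comes from the per-game SOS relation $\norm*{({^{\varphi^{(i)},i}B_\theta} - {^{\varphi^{(i)},i}C_\theta})\ket{\psi}} = O(\sqrt{\varepsilon})$ combined with the robust intertwining provided by \cref{thm:robust}. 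For approximate commutation of distinct $U_i, U_j$, I would mimic \cref{thm:exact-parallel}: \cref{eq:commutation} together with the robust $B \leftrightarrow C$ swap handles observables from different games, while \cref{lem:2-strat-rob} handles same-game different-$\varphi$ pairs up to $O(\sqrt{\varepsilon})$. \cref{lem:z2n-gh} then produces an $O(n^2\sqrt{\varepsilon}, \ket{\psi'})$-representation of $\Z_2^{2n} \cong (\Z_2^2)^n$, and \cref{thm:gowers-hatami} further dilates it to a genuine representation on some $\tsf{B}''$ with the same $O(n^2\sqrt{\varepsilon})$ intertwining error; the analogous construction is carried out for Charlie.

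For the conclusion, decompose along eigenspaces of the dilated representation: $\tsf{B}''$ splits orthogonally as $\bigoplus_{t \in (\Z_2^2)^n} \tsf{B}_t$, with ${^{i}B_\theta''}$ acting as $(-1)^{t_{i\theta}}$ on $\tsf{B}_t$, and similarly for $\tsf{C}''$. For each game $i$, the per-game analogue of \cref{eq:b2axrel1,eq:b2axrel2} forces the $\tsf{A}_i$ component of the dilated state on each $\tsf{B}_t \otimes \tsf{C}_t$ summand to be near $\ket{\beta_{t_i}}$, up to error $O(n^2\sqrt{\varepsilon})$. Intersecting these conditions across all $n$ games contributes another factor of $n$, yielding the claimed $O(n^3\sqrt{\varepsilon})$-closeness to $\sum_t X^{t_{10}}Z^{t_{11}}\ket{\beta} \otimes \cdots \otimes X^{t_{n0}}Z^{t_{n1}}\ket{\beta} \otimes \ket{\psi_t}$. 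The main obstacle I anticipate is the arithmetic bookkeeping rather than any conceptual novelty: I must verify that each $\varphi \leftrightarrow \varphi'$ swap via \cref{lem:2-strat-rob}, each $B \leftrightarrow C$ swap via the SOS, and each operator rearrangement inside \cref{lem:z2n-gh} each contribute at most linearly in $n$, so that the intrinsic $n^2$ bound of that lemma combines with the $n$-fold intersection of per-game constraints to produce exactly the stated cubic dependence on $n$.
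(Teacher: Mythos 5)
Your proposal gets the overall shape right (extract near-optimal single-game strategies, establish approximate commutation among the observables, invoke \cref{lem:z2n-gh} and Gowers--Hatami, then localize the Alice components), but there is a genuine gap at the step you treat most casually: showing that $U_i={^{\varphi^{(i)},i}B_\theta}$ and $U_j={^{\varphi^{(j)},j}B_{\theta'}}$ approximately commute with respect to $\ket\psi$ when $i\neq j$.

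You invoke ``\cref{eq:commutation} together with the robust $B\leftrightarrow C$ swap'' for observables from different games, plus \cref{lem:2-strat-rob} for same-game different-$\varphi$ pairs. But \cref{eq:commutation} only gives exact commutation between $B^\theta_{y,i}$ and $B^\theta_{y',j}$ for the \emph{same} full question vector $\theta$. Your $\varphi^{(i)}$ and $\varphi^{(j)}$ are in general different, so you must first move each to some common $\chi\in\Z_2^n$ via \cref{lem:2-strat-rob}, and that lemma requires \emph{both} strategies involved to win near-optimally. Having chosen only a single good $\varphi^{(i)}$ per game (your ``Markov-type argument guarantees for each $i\in[n]$ at least one $\varphi^{(i)}$''), you have no guarantee that the target $\chi$---the one that is simultaneously compatible with game $i$ and game $j$---corresponds to a near-optimal strategy for either game. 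Without that, \cref{lem:2-strat-rob} does not apply and the chain of swaps is blocked. Re-quoting the Colbeck technique does not help here: extracting a single good index per game is a plain averaging statement, whereas the technique's real content is a lower bound on the \emph{size} of the good set.

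The paper's proof resolves this precisely by defining, for each $i$, the good set $G_i=\{\varphi:\varphi_i=0,\ \varepsilon_{\varphi,i}\leq 5\varepsilon\}$ and showing $|G_i|\geq 2^{n-2}+2^{n-3}+1$, i.e.\ more than $3/4$ of the eligible $\varphi$'s. Then for any $i\neq i'$ and any targets $\theta,\theta'$, a pigeonhole/union argument on the shifted sets $(G_i+\theta 1^i)$ and $(G_{i'}+\theta'1^{i'})$ produces a common $\chi$ that is good for both games. Only then does \cref{lem:2-strat-rob} apply on both sides, allowing the replacement of $\varphi^{(i)}$ and $\varphi^{(j)}$ by $\chi+\theta 1^i$ and $\chi+\theta'1^{i'}$, after which \cref{eq:commutation} closes the argument. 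To repair your proof you should replace ``at least one $\varphi^{(i)}$'' with a quantitative statement that the set of good $\varphi$'s is a strict majority (with the $5\varepsilon$ relaxation, or another suitable constant), and insert the pigeonhole step before any cross-game commutator estimate. Once that is in place, the remaining bookkeeping matches the paper's and your $O(n^3\sqrt\varepsilon)$ accounting is reasonable, with the caveat that the paper applies \cref{lem:z2n-gh} and Gowers--Hatami directly to the original observables rather than first dilating each single-game strategy via \cref{thm:robust}; your extra dilation layer is not wrong but is an unnecessary complication.
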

	
	The proof below gives that we may take values $L=230\;000$, and for large enough $n$, $K=320\;000$.
	
	We make use of the fact that, as in the exact case, the $i$-th winning probability is $\mfk{w}^i_{\ttt{TFKW}^n}(\ttt{S})=\mbb{E}_{\substack{\varphi\in\Z_2^n\\\varphi_i=0}}\mfk{w}_{\ttt{TFKW}}({^{\varphi,i}\ttt{S}})$. However, since the $i$-th winning probability is not quite optimal, showing that the ${^{\varphi,i}\ttt{S}}$ win near-optimally proves to be an obstacle. To get past this, we adapt a technique of \cite{Col17} for parallel repetition of CHSH games. It guarantees that there is a ``good set'' of strategies that win with only slightly relaxed probability, and the set is large enough to continue the proof as for the exact case.
	
	\begin{proof}
		Define $\varepsilon_{\varphi,i}\geq 0$ such that $\mfk{w}_{\ttt{TFKW}}({^{\varphi,i}\ttt{S}})=\cos^2{\tfrac{\pi}{8}}-\varepsilon_{\varphi,i}$. Then, we have that, for each $i$, $\varepsilon\geq \mbb{E}_{\substack{\varphi\in\Z_2^n\\\varphi_i=0}}\varepsilon_{\varphi,i}$ We want to collect a large enough number of terms where $\varepsilon_{\varphi,i}$ is not too large with respect to $\varepsilon$. To that effect, define the set of good values of $\varphi$ for $i$ as
		\begin{align}
		G_i=\set*{\varphi\in\Z_2^n,\varphi_i=0}{\varepsilon_{\varphi,i}\leq 5\varepsilon}.
		\end{align}
		As in \cite{Col17}, we claim that $|G_i|\geq 2^{n-2}+2^{n-3}+1$. In fact, suppose $|G_i|<2^{n-2}+2^{n-3}+1$. Then, there are at least $2^{n-3}$ values of $\varphi$ where $\varepsilon_{\varphi,i}>5\varepsilon$. This gives however that
		\begin{align}
			\varepsilon\geq\frac{1}{2^{n-1}}\sum_{\varphi}\varepsilon_{\varphi,i}>\frac{1}{2^{n-1}}2^{n-3}(5\varepsilon)=\frac{5}{4}\varepsilon>\varepsilon,
		\end{align}
		which is a contradiction. Now, as for the case of a single game, for $\varphi\in G_i$, the SOS decomposition implies
		\begin{align}
			&\norm*{(Z_i\otimes{^{\varphi,i\!}B_0}+X_i\otimes{^{\varphi,i\!}B_1}-\sqrt{2})\ket{\psi}}\leq2\sqrt{5}(2+8^{1/4})\sqrt{\varepsilon}\\
			&\norm*{{^{\varphi,i\!}B_\theta}\ket{\psi}-{^{\varphi,i}C_\theta}\ket{\psi}}\leq 4\sqrt{5}\sqrt{\varepsilon}.
		\end{align}
		
		This gives the commutation of ${^{\varphi,i\!}B_0}$ and ${^{\varphi,i\!}B_1}$ with respect to $\ket*{\psi}$ as
		\begin{align}
			&\norm*{[{^{\varphi,i\!}B_0},{^{\varphi,i\!}B_1}]\ket{\psi}}\leq 2\sqrt{5}(2+\sqrt{2})(2+8^{1/4})\sqrt{\varepsilon}=:K_0\sqrt{\varepsilon}.
		\end{align}
		Now, we need commutation between operators for different values of $i$. Let $i\neq i'\in[n]$, $\theta,\theta'\in\Z_2$ and $\varphi,\varphi'\in\Z_2^n$ such that $\varphi_i=\varphi'_{i'}=0$. By the pigeonhole principle, there exists a $\chi\in (G_i+\theta1^{i})\cap(G_{i'}+\theta'1^{i'})$, so using \cref{lem:2-strat-rob} with $\delta=5\varepsilon$, there exists $Q\geq 0$ such that
		\begin{align}
		\begin{split}
			&\norm*{({^{\varphi,i\!}B_\theta}-{^{\chi+\theta1^i,i\!}B_\theta})\ket{\psi}}\leq\sqrt{5}Q\sqrt{\varepsilon}\\
			&\norm*{({^{\varphi',i'\!}B_{\theta'}}-{^{\chi+\theta'1^{i'},i'\!}B_{\theta'}})\ket{\psi}}\leq\sqrt{5}Q\sqrt{\varepsilon},
		\end{split}
		\end{align}
		and identically for Charlie's observables. Thus, knowing $\norm*{[{^{\chi+\theta1^i,i\!}B_\theta},{^{\chi+\theta'1^{i'},i'\!}B_{\theta'}}]\ket{\psi}}\leq K_0\sqrt{\varepsilon}$, we have
		\begin{align}
		\begin{split}
			\norm*{[{^{\varphi,i\!}B_{\theta}},{^{\varphi',i'\!}B_{\theta'}}]\ket*{\psi}}&\leq\norm*{\parens{{^{\varphi,i\!}B_{\theta}}\otimes{^{\varphi',i'}C_{\theta'}}-{^{\varphi',i'\!}B_{\theta'}}\otimes{^{\varphi,i}C_{\theta}}}\ket*{\psi}}+8\sqrt{5}\sqrt{\varepsilon}\\
			&\leq\norm*{\parens{{^{\chi+\theta1^{i},i}B_{\theta}}\otimes{^{\chi+\theta'1^{i'},i'}C_{\theta'}}-{^{\chi+\theta'1^{i'},i'}B_{\theta'}}\otimes{^{\chi+\theta1^{i},i}C_{\theta}}}\ket*{\psi}}\\
		&\qquad+(4\sqrt{5}Q+8\sqrt{5})\sqrt{\varepsilon}\\
		&\leq(4\sqrt{5}(Q+4)+K_0)\sqrt{\varepsilon}.
		\end{split}
		\end{align}
		Now, for any $i$, we may pick some $\varphi\in G_i$, and define ${^{i\!}B_\theta}:={^{\varphi,i\!}B_\theta}$. We have
		\begin{align}
			\norm{[{^{i\!}B_\theta},{^{i'\!}B_{\theta'}}]\ket*{\psi}}\leq(4\sqrt{5}(Q+4)+K_0)\sqrt{\varepsilon}.
		\end{align}
		Then, we use \cref{lem:z2n-gh} with $U_{i\theta}={^{i\!}B_\theta}$ and $V_{i\theta}={^{i}C_\theta}$ so $\epsilon=4\sqrt{5}\sqrt{\varepsilon}$ and $\delta=(4\sqrt{5}(Q+4)+K_0)\sqrt{\varepsilon}$ to generate an $\parens*{Ln^2\sqrt{\varepsilon},\ket*{\psi}}$-representation of $(\Z_2^2)^n$, where $L=4(4\sqrt{5}(Q+7)+K_0)$. The same holds in the same way for Charlie's observables.
		
		So, this puts us in the right place to use the Gowers-Hatami theorem again. There exist Hilbert spaces with orthogonal decompositions $\tsf{B}'=\bigoplus_{t\in(\Z_2^2)^n}\tsf{B}_t$ and $\tsf{C}'=\bigoplus_{t\in(\Z_2^2)^n}\tsf{C}_t$; isometries $V:\tsf{B}\rightarrow\tsf{B}'$ and $W:\tsf{C}\rightarrow\tsf{C}'$; and unitary observables ${^{i\!}B_\theta'}=\sum_{t\in(\Z_2^2)^n}(-1)^{t_{i\theta}}\Id_{B,t}\in\mc{U}(\tsf{B}')$ and ${^{i}C_\theta'}=\sum_{t\in(\Z_2^2)^n}(-1)^{t_{i\theta}}\Id_{C,t}\in\mc{U}(\tsf{C}')$ such that
		\begin{align}
		\begin{split}
			&\norm*{(V\,{^{i\!}B_\theta}-{^{i\!}B'_\theta}V)\ket{\psi}}=Ln^2\sqrt{\varepsilon}\\
			&\norm*{(W\,{^iC_\theta}-{^iC'_\theta}W)\ket{\psi}}=Ln^2\sqrt{\varepsilon}.
		\end{split}
		\end{align}
		
		Let $\ket*{\psi'}=(V\otimes W)\ket*{\psi}$. We can put these observables back into the original inequalities to get
		\begin{align}
		&\norm*{(Z_i\otimes{^{i\!}B'_0}+X_i\otimes{^{i\!}B'_1}-\sqrt{2})\ket{\psi'}}\leq (2Ln^2+2\sqrt{5}(2+8^{1/4}))\sqrt{\varepsilon}\\
		&\norm*{({^{i\!}B'_\theta}-{^iC'_\theta})\ket{\psi'}}\leq(2Ln^2+4\sqrt{5})\sqrt{\varepsilon}.\label{eq:type-rob-par-prime}
		\end{align}
		Since the quantum state $\ket{\psi'}\in\bigoplus_{t,t'\in(\Z_2^2)^n}\tsf{A}_1\otimes\cdots\otimes \tsf{A}_n\otimes\tsf{B}_t\otimes\tsf{C}_{t'}\otimes\tsf{R}$, we can write it as $\ket{\psi'}=\sum_{t,t'\in(\Z_2^2)^n}\ket*{v_{t,t'}}$. Using
		\begin{align}
			Z_i\otimes {^{i\!}B_0'}+X_i\otimes {^{i\!}B_1'}-\sqrt{2}=2\sqrt{2}\sum_{t}(\ketbra*{\beta_{t_i}}_i-\Id)\otimes\Id_{B,t}
		\end{align}
		and defining $\ket{v_{\beta,i}}=\sum_{t,t'\in(\Z_2^2)^n}(\ketbra*{\beta_{t_1}}_1\otimes\cdots\otimes\ketbra{\beta_{t_i}}_i)\ket*{v_{t,t'}}$, we have
		\begin{align}
		\begin{split}
			\norm*{\ket{\psi'}-\ket{v_{\beta,n}}}&\leq\sum_{i=1}^{n-1}\norm*{\ket{v_{\beta,i}}-\ket{v_{\beta,i+1}}}\\
			&\leq\frac{1}{2\sqrt{2}}\sum_{i=1}^{n-1}\norm*{(Z_i\otimes{^{i\!}B'_0}+X_i\otimes{^{i\!}B'_1}-\sqrt{2})\ket{\psi'}}\\
			&\leq \tfrac{n}{\sqrt{2}}(Ln^2+\sqrt{5}(2+8^{1/4}))\sqrt{\varepsilon}.
		\end{split}
		\end{align}
		
		On the other hand, \cref{eq:type-rob-par-prime} implies
		\begin{align}
		(2Ln^2+4\sqrt{5})\sqrt{\varepsilon}&\geq\norm[\Big]{\sum_{t,t'}((-1)^{t_{i\theta}}-(-1)^{t_{i\theta}'})\ket{v_{t,t'}}}=2\norm[\Big]{\sum_{t_{i\theta}\neq t_{i\theta}'}\ket{v_{t,t'}}}
		\end{align}
		Writing $\ket{\psi'}=\ket{v_0}+\ket{v_1}$ where $\ket{v_0}=\sum_{t\in(\Z_2^2)^n}\ket{v_{t,t}}$ and $\ket{v_1}=\sum_{t\neq t'}\ket{v_{t,t'}}$. Then,
		\begin{align}
		\norm*{\ket{v_1}}^2&=\sum_{t\neq t'}\braket*{v_{t,t'}}\leq\sum_\theta\sum_{i=1}^n\sum_{t_{i\theta}\neq t'_{i\theta}}\braket{v_{t,t'}}\leq2n\squ*{(Ln^2+2\sqrt{5})\sqrt{\varepsilon}}^2.
		\end{align}
		Now, let $\ket{v_\beta}=\sum_{t\in(\Z_2^2)^n}(\ketbra{\beta_{t_1}}\otimes\cdots\ketbra{\beta_{t_n}})\ket{v_{t,t}}$, then $\norm*{\ket*{v_{\beta}}-\ket*{v_{\beta,n}}}\leq\norm{\ket*{v_1}},$
		so
		\begin{align}
		\begin{split}
		\norm{\ket*{\psi'}-\ket*{v_\beta}}&\leq\norm{\ket*{\psi'}-\ket*{v_{\beta,n}}}+\norm{\ket*{v_{\beta,n}}-\ket*{v_{\beta}}}\\
		&\leq\squ*{\tfrac{n}{\sqrt{2}}(Ln^2+\sqrt{5}(2+8^{1/4}))+\sqrt{2n}(Ln^2+2\sqrt{5})}\sqrt{\varepsilon}.
		\end{split}
		\end{align}
		Now, $\ket{v_\beta}$ has the form we want, but it may not be normalised. Define $\ket{\phi}=\frac{\ket{v_\beta}}{\norm*{\ket{v_\beta}}}$. We have
		\begin{align}
			\norm*{\ket{\phi}-\ket{v_\beta}}=\norm*{\ket{\psi'}}-\norm*{\ket{v_\beta}}\leq\norm*{\ket{\psi'}-\ket{v_\beta}},
		\end{align}
		giving
		\begin{align}
		\norm{\ket*{\psi'}-\ket*{\phi}}\leq 2\norm{\ket*{\psi'}-\ket*{v_\beta}}\leq\sqrt{2}\squ*{n(Ln^2+\sqrt{5}(2+8^{1/4}))+2\sqrt{n}(Ln^2+2\sqrt{5})}\sqrt{\varepsilon}
		\end{align}
	\end{proof}
	We can, as in the single-round case, generalise this result slightly to a general strategy.
	
	\begin{corollary} \label{cor:gen-rob-par}
		Let $n\in\N$ and let $\ttt{S}=\parens*{\tsf{B},\tsf{C},B,C,\rho}$ be an arbitrary strategy for $\ttt{TFKW}^n$. Suppose that for some $\varepsilon\geq 0$, for each $i\in[n]$, the $i$-th game wins with probability $\mfk{w}_{\ttt{TFKW}^n}^i(\ttt{S})\geq\cos^2\tfrac{\pi}{8}-\varepsilon$. Then there exists a constant $K\geq 0$ and isometries $V:\tsf{B}\rightarrow\tsf{B}'$ and $W:\tsf{C}\rightarrow\tsf{C}'$ such that
		\begin{align}
		\norm*{(V\otimes W)\rho(V\otimes W)^\dag-\Tr_{R}(\ketbra{\phi})}_{\Tr}\leq Kn^3\sqrt{\varepsilon},
		\end{align}
		where $R$ is an auxiliary register such that \begin{align}
		\ket{\phi}=\sum_{t\in(\Z_2^2)^n}X^{t_{10}}Z^{t_{11}}\ket{\beta}\otimes\cdots\otimes X^{t_{n0}}Z^{t_{n1}}\ket{\beta}\otimes\ket{\psi_t}
		\end{align}
		for some vectors $\ket{\psi_t}\in\tsf{B}'\otimes\tsf{C}'\otimes\tsf{R}$ with orthogonal supports on both $\tsf{B}'$ and $\tsf{C}'$.
	\end{corollary}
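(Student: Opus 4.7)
The plan is to reduce the general case to the purified case already handled by Theorem \ref{thm:rob-par}, exactly as was done for the single-round Corollary \ref{cor:gen-rob}. First I would invoke Lemma \ref{lem:purify} to purify the strategy $\ttt{S}$: this replaces $B^\theta$, $C^\theta$ by projective measurements $\tilde B^\theta$, $\tilde C^\theta$ on isometrically-extended spaces $\tilde{\tsf{B}} \supseteq \tsf{B}$, $\tilde{\tsf{C}} \supseteq \tsf{C}$ (via Lemma \ref{lem:pure-pvm}), and replaces $\rho$ by a pure state $\ketbra{\psi}$ on $\tsf{A} \otimes \tilde{\tsf{B}} \otimes \tilde{\tsf{C}} \otimes \tsf{R}_0$ for some auxiliary purifying register $\tsf{R}_0$. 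Since purification does not change the marginal on $\tsf{A}\otimes\tsf{B}\otimes\tsf{C}$ or the measurement statistics, the purified strategy $\tilde{\ttt{S}}$ has the same $i$-th winning probabilities, in particular $\mfk{w}^i_{\ttt{TFKW}^n}(\tilde{\ttt{S}}) \geq \cos^2\tfrac{\pi}{8} - \varepsilon$ for every $i\in[n]$.

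Next I would apply Theorem \ref{thm:rob-par} to $\tilde{\ttt{S}}$, which yields isometries $\tilde V:\tilde{\tsf{B}}\rightarrow \tsf{B}'$ and $\tilde W:\tilde{\tsf{C}}\rightarrow \tsf{C}'$, a vector
\begin{align*}
\ket{\phi} = \sum_{t\in(\Z_2^2)^n} X^{t_{10}}Z^{t_{11}}\ket{\beta}\otimes\cdots\otimes X^{t_{n0}}Z^{t_{n1}}\ket{\beta}\otimes\ket{\psi_t}
\end{align*}
with $\ket{\psi_t}\in \tsf{B}'\otimes\tsf{C}'\otimes\tsf{R}$ of orthogonal support on both $\tsf{B}'$ and $\tsf{C}'$, and a constant $K_0\geq 0$ such that $\norm{(\tilde V\otimes \tilde W)\ket{\psi} - \ket{\phi}}\leq K_0 n^3\sqrt{\varepsilon}$. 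I would then compose with the purification isometries to define $V:\tsf{B}\rightarrow\tsf{B}'$ and $W:\tsf{C}\rightarrow\tsf{C}'$ as $\tilde V$ and $\tilde W$ restricted to the original factors (equivalently, the composition of the isometric embedding from Lemma \ref{lem:pure-pvm} with $\tilde V$ and $\tilde W$).

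Finally I would pass from Euclidean distance on pure states to trace distance on mixed states. By the standard inequality relating the two norms (Lemma \ref{lem:tr-dist}),
\begin{align*}
\norm{(\tilde V\otimes \tilde W)\ketbra{\psi}(\tilde V\otimes \tilde W)^\dag - \ketbra{\phi}}_{\Tr}\leq K_0 n^3\sqrt{\varepsilon}.
\end{align*}
Tracing out the purifying register $\tsf{R}_0$ is a CPTP map and therefore can only decrease trace distance; on the left this yields $(V\otimes W)\rho(V\otimes W)^\dag$, while on the right it yields $\Tr_{\tsf{R}_0}\ketbra{\phi}$, which is still of the form $\Tr_R\ketbra{\phi'}$ after suitable relabelling of the auxiliary register. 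Choosing $K = K_0$ completes the argument. There is no real obstacle here; the only point to check is that the purification in Lemma \ref{lem:purify} acts by isometries on Bob's and Charlie's sides (so that composing with $\tilde V,\tilde W$ still gives isometries on the original $\tsf{B},\tsf{C}$) and that the added register can be absorbed into $\tsf{R}$.
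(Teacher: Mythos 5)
Your proposal is correct and follows essentially the same route as the paper: the paper simply states that the proof follows the same method as \cref{cor:gen-rob}, which consists of purifying the strategy via \cref{lem:purify}, applying the purified-case theorem, converting the Euclidean bound to a trace-distance bound via \cref{lem:tr-dist}, and tracing out the auxiliary register. Your more explicit write-up correctly identifies all the points that need checking, in particular that the purification of measurements is by local isometries on Bob's and Charlie's spaces so that composing with $\tilde V,\tilde W$ still gives isometries from the original $\tsf{B},\tsf{C}$.
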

	The proof follows the same method as \cref{cor:gen-rob}.

\subsection{Observed Statistics} \label{sec:obs-correlations}

	In any self-testing scenario, the referee cannot actually query the winning probability of the adversaries' strategy. To get around this, she may play many rounds of the game in parallel and use the players' winning statistics to approximate their winning probability. The difficulty that arises, however, is that the players' strategies need not be independent for the different rounds of the game, and therefore the information Alice receives might not be meaningful. A technique of \cite{RUV13} allows us to get around this: first, we bound the probability of winning too many of the games if enough are too far from optimal, and then find good values of the bounding constants, depending on the application, so Alice may extract information about the state.
	
	\begin{lemma}\label{lem:obs}
		Let $0<\varepsilon,\eta<1$ and let $\delta\in\R$ such that $\delta\leq\eta\varepsilon$. Let $\ttt{S}$ be a strategy for $\ttt{TFKW}^n$. Let $E\subseteq\{0,\ldots,n\}$ be the set of rounds $i$ such that $\mfk{w}_{\ttt{TFKW}^n}^i(\ttt{S})\geq \cos^2\tfrac{\pi}{8}-\varepsilon$, and let $W\in\{0,\ldots,n\}$ be the number of rounds the adversaries win. Then, if $|E|<(1-\eta)n$,
		\begin{align}
			\Pr\parens*{W\geq(\cos^2\tfrac{\pi}{8}-\delta)n}\leq e^{-2n(\eta\varepsilon-\delta)^2}.
		\end{align}
	\end{lemma}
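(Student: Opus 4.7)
My plan is to upper-bound the expected number of wins using the hypothesis on $E$, and then transfer concentration via a coupling to independent Bernoulli variables. First, I split
$$\mathbb{E}[W]=\sum_{i=1}^n\mfk{w}^i_{\ttt{TFKW}^n}(\ttt{S})$$
into contributions from $i\in E$, where $\mfk{w}^i\leq\cos^2\tfrac{\pi}{8}$ by the optimal winning probability of $\ttt{TFKW}$, and $i\notin E$, where $\mfk{w}^i<\cos^2\tfrac{\pi}{8}-\varepsilon$ by definition of $E$. Since $|E|<(1-\eta)n$ implies $|E^c|>\eta n$, this yields
$$\mathbb{E}[W]<|E|\cos^2\tfrac{\pi}{8}+|E^c|(\cos^2\tfrac{\pi}{8}-\varepsilon)<n\cos^2\tfrac{\pi}{8}-\eta n\varepsilon.$$

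Next, following the technique of \cite{RUV13}, I would construct a family $\{\Gamma_i\}_{i\in[n]}$ of independent Bernoulli random variables on an enlarged probability space such that $W_i\leq\Gamma_i$ almost surely and $\mathbb{E}[\Gamma_i]$ is the corresponding upper bound on $\mfk{w}^i$ (namely $\cos^2\tfrac{\pi}{8}$ for $i\in E$ and $\cos^2\tfrac{\pi}{8}-\varepsilon$ for $i\notin E$). The coupling is built sequentially: at round $i$, using fresh independent randomness, set $\Gamma_i=1$ whenever $W_i=1$ and otherwise set $\Gamma_i=1$ with probability calibrated (against the conditional law of $W_i$ given the past) so that the target marginal $p_i$ is reached and $\Gamma_i$ is Bernoulli$(p_i)$ independently of the history. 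Writing $\Gamma=\sum_i\Gamma_i$, we have $W\leq\Gamma$ pointwise and $\mathbb{E}[\Gamma]\leq n(\cos^2\tfrac{\pi}{8}-\eta\varepsilon)$.

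Finally, Hoeffding's inequality applied to $\Gamma$, which is a sum of $n$ independent $[0,1]$-valued random variables, gives $\Pr(\Gamma-\mathbb{E}[\Gamma]\geq t)\leq e^{-2t^2/n}$. Setting
$$t=(\cos^2\tfrac{\pi}{8}-\delta)n-\mathbb{E}[\Gamma]\geq n(\eta\varepsilon-\delta)\geq 0,$$
where the last inequality uses the hypothesis $\delta\leq\eta\varepsilon$, and using the pointwise dominance $\Pr(W\geq(\cos^2\tfrac{\pi}{8}-\delta)n)\leq\Pr(\Gamma\geq(\cos^2\tfrac{\pi}{8}-\delta)n)$, the desired bound $e^{-2n(\eta\varepsilon-\delta)^2}$ follows.

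The main obstacle is the coupling step. The indicators $W_i$ come from joint quantum measurements on a shared state and are in general highly correlated, so one cannot simply declare independent $\Gamma_i$ with a blind domination. The sequential construction must be set up so that the auxiliary fresh coins decouple $\Gamma_i$ from the entire past $(W_{<i},\Gamma_{<i})$, and one must verify that the chosen target marginals $p_i$ are always at least the conditional probabilities $\Pr(W_i=1\mid W_{<i})$ that appear along the way — if not, the upper bounds $p_i$ must be inflated to $1$ on the rare bad histories while remaining at the desired values in expectation, which would still suffice for the Hoeffding step.
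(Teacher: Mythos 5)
Your outline closely mirrors the paper's proof: bound $\mathbb{E}[W]$ term by term, couple each $W_i$ to an independent Bernoulli $\Gamma_i$ (with success probability $\cos^2\tfrac{\pi}{8}$ for $i\in E$ and $\cos^2\tfrac{\pi}{8}-\varepsilon$ for $i\notin E$) so that $W_i\le\Gamma_i$ pointwise, and apply Hoeffding to $\Gamma=\sum_i\Gamma_i$. The paper simply asserts ``we can couple them to the $W_i$'' with no further discussion, so you are right to flag this step as the crux. The difficulty is real: when the $W_i$ are positively correlated, the joint law of $(W_1,\ldots,W_n)$ need not be stochastically dominated by the product of Bernoullis with the marginal upper bounds. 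Concretely, if $W_1=W_2\sim\mathrm{Ber}(p)$ with $0<p<1$, any pair of independent $\Gamma_1,\Gamma_2\sim\mathrm{Ber}(p)$ has $\Pr(\Gamma_1=\Gamma_2=1)=p^2<p=\Pr(W_1=W_2=1)$, so no coupling with $W_i\le\Gamma_i$ a.s. can exist (this is exactly what Strassen's theorem on stochastic domination rules out). The \cite{RUV13} argument the paper cites is for \emph{sequential} repetition, where the conditional probability of winning round $i$ given the transcript is controlled by a game-value bound; that conditional control is what makes the round-by-round coupling go through. In the parallel setting the hypothesis only controls the unconditional marginals $\Pr(W_i=1)$, and nothing in either your argument or the paper's establishes the corresponding conditional bound $\Pr(W_i=1\mid W_{<i})\leq p_i$.

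Your proposed patch does not close this gap: if you force $\Gamma_i=1$ on histories where the conditional probability exceeds $p_i$, then $\Gamma_i$ depends on $(W_{<i},\Gamma_{<i})$, so the $\Gamma_i$ are no longer mutually independent, and Hoeffding's inequality for sums of independent variables no longer applies; and ``remaining at the desired values in expectation'' is not preserved without compensating on the good histories, which is just the coupling you cannot build. A genuine repair needs either a conditional winning-probability bound specific to the quantum parallel TFKW game, or an Azuma-type argument controlling the random sum $\sum_i\Pr(W_i=1\mid W_{<i})$ --- neither is present in your proposal, and neither appears in the paper's proof. So while your attempt is faithful to the paper's strategy, the obstacle you identified is a real gap in the argument as written, not just a detail to be smoothed over.
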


	We can make use of this in contrapositive. That is, other than with small probability, if the adversaries win at least $(\cos^2\tfrac{\pi}{8}-\delta)n$ games, then at least $(1-\eta)n$ of the games win with near-optimal winning probability. The proof proceeds in the same way as a similar result for sequentially repeated games in \cite{RUV13}.
	
	\begin{proof}
		Write $w^\ast=\cos^2\tfrac{\pi}{8}$ for convenience. Let $W_i$ be the random variable that is $1$ if the adversaries won round $i$ and $0$ if they lost. Then, we have that $W$ is the random variable $W=\sum_{i=1}^nW_i$. Since $\Pr(W_i=1)=\mfk{w}_{\ttt{TFKW}^n}^i(\ttt{S})$, if $i\in E$, we know that $\Pr(W_i=1)\leq w^\ast$, and if $i\notin E$, $\Pr(W_i=1)\leq w^\ast-\varepsilon.$ Let $\Gamma_1,\ldots,\Gamma_n,\Lambda_1,\ldots,\Lambda_n$ be independent Bernoulli variables such that the $\Pr(\Gamma_i=1)=w^\ast$ and $\Pr(\Lambda_i=1)=w^\ast-\varepsilon$. By the above, we can couple them to the $W_i$ so that $W_i\leq\Gamma_i$ if $i\in E$ and $W_i\leq\Lambda_i$ is $i\notin E$. This implies directly that $W\leq\sum_{i\in E}\Gamma_i+\sum_{i\notin E}\Lambda_i$, so
		\begin{align}
			\Pr(W\geq (w^\ast-\delta)n)\leq\Pr\parens[\Big]{\sum_{i\in E}\Gamma_i+\sum_{i\notin E}\Lambda_i\geq(w^\ast-\delta)n}.
		\end{align}
		Since $\mathbb{E}\parens[\Big]{\sum_{i\in E}\Gamma_i+\sum_{i\notin E}\Lambda_i}=|E|w^\ast+(n-|E|)(w^\ast-\varepsilon)$, Hoeffding's inequality implies
		\begin{align}
			Pr(W\geq (w^\ast-\delta)n)\leq e^{-\frac{2}{n}((n-|E|)\varepsilon-\delta n)^2}\leq e^{-2n(\eta\varepsilon-\delta)^2}
		\end{align}
	\end{proof}

	A simple canonical choice of variables for large $n$ is $\varepsilon=\eta=\delta=\frac{1}{n^{1/4}}$, which allows Alice to test sets of $k\sim n^{1/25}$ parallel rounds, where she is able say that the state is $\sim\frac{1}{k^{1/8}}$ near-optimal for those rounds with probability exponentially close to $1$ in $k$.
	
	In view of applications, we give in \cref{thm:prob-gen-rob-par} a version of \cref{cor:gen-rob-par} where Alice has less information about the winning probabilities of the strategy. Rather than assuming that she knows that they win each round near-optimally, we will assume that Alice only knows with high probability that each round wins near-optimally. Then, we are able to ascertain the behaviour of the shared state in expectation. This will allow us to directly apply the result of \cref{lem:obs} to get conclusions about the rigidity of the state.
	
	\begin{theorem}\label{thm:prob-gen-rob-par}
		Let $n\in\N$ and let $\ttt{S}=\parens*{\tsf{B},\tsf{C},B,C,\rho}$ be a strategy for $\ttt{TFKW}^n$. Suppose that for some $\varepsilon,\eta\in[0,1]$, for each $i\in[n]$, there is a probability $1-\eta$ that the $i$-th game wins with probability $\mfk{w}_{\ttt{TFKW}^n}^i(\ttt{S})\geq\cos^2\tfrac{\pi}{8}-\varepsilon$. Then, there exists a constant $K\geq 0$, Hilbert spaces $\tsf{B}'$ and $\tsf{C}'$, and isometries $V:\tsf{B}\rightarrow\tsf{B}'$ and $W:\tsf{C}\rightarrow\tsf{C}'$ such that the expected value of the distance between quantum states
		\begin{align}
		\mathbb{E}\norm*{(V\otimes W)\rho(V\otimes W)^\dag-\Tr_{R}(\ketbra{\phi})}\leq Kn^3\sqrt{\varepsilon}+n\eta,
		\end{align}
		where $\ket{\phi}=\sum_{t\in(\Z_2^2)^n}X^{t_{10}}Z^{t_{11}}\ket{\beta}\otimes\cdots\otimes X^{t_{n0}}Z^{t_{n1}}\ket{\beta}\otimes\ket{\psi_t}$ for some auxiliary register $R$ and $\ket{\psi_t}\in\tsf{B}'\otimes\tsf{C}'\otimes\tsf{R}$ with orthogonal supports on both $\tsf{B}'$ and $\tsf{C}'$.
	\end{theorem}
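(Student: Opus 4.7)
The plan is to reduce this statement to the deterministic parallel-repeated robust rigidity result (\cref{cor:gen-rob-par}) via a union bound, separating the cases of ``all winning probabilities near-optimal'' and its complement. For each $i \in [n]$, let $E_i$ be the event that $\mfk{w}_{\ttt{TFKW}^n}^i(\ttt{S}) \geq \cos^2\tfrac{\pi}{8} - \varepsilon$. By hypothesis, $\Pr(E_i) \geq 1 - \eta$, so a union bound over the $n$ games gives that the joint event $E = \bigcap_{i\in[n]} E_i$, under which every round is simultaneously near-optimal, satisfies $\Pr(E^c) \leq n\eta$.

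Conditional on $E$, the hypothesis of \cref{cor:gen-rob-par} is met exactly, so there exist isometries $V:\tsf{B}\to\tsf{B}'$ and $W:\tsf{C}\to\tsf{C}'$ (a priori depending on the outcome of the random source) and a state $\ket{\phi}$ of the stipulated Wiesner-Breidbart form such that
\begin{align}
\norm*{(V\otimes W)\rho(V\otimes W)^\dag - \Tr_R(\ketbra{\phi})}_{\Tr} \leq Kn^3\sqrt{\varepsilon}.
\end{align}
Conditional on $E^c$, we extend $V$ and $W$ arbitrarily (e.g.\ fix some reference isometry into the same target spaces) and pick any $\ket{\phi}$ of the required form; since the trace distance between any two density operators is bounded by $1$, the distance is at most $1$ on this event.

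Applying the law of total expectation and using $\Pr(E)\leq 1$ and $\Pr(E^c) \leq n\eta$ yields
\begin{align}
\mathbb{E}\norm*{(V\otimes W)\rho(V\otimes W)^\dag - \Tr_R(\ketbra{\phi})} \leq \Pr(E)\cdot Kn^3\sqrt{\varepsilon} + \Pr(E^c)\cdot 1 \leq Kn^3\sqrt{\varepsilon} + n\eta,
\end{align}
which is the desired bound. The main subtlety is that the isometries $V,W$ (and the auxiliary state $\ket{\phi}$) furnished by \cref{cor:gen-rob-par} depend on the particular strategy, and hence on the outcome of the randomness that conditions the winning-probability guarantees; this is handled by reading the ``there exist isometries such that the expectation is bounded'' as allowing $V,W$ to be measurable functions of the random outcome, or equivalently by embedding all the per-outcome dilated spaces into fixed common spaces $\tsf{B}',\tsf{C}'$ before taking expectations. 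Once this bookkeeping is set up, the union bound and the trivial trace-distance bound suffice, with no further rigidity machinery required beyond \cref{cor:gen-rob-par}.
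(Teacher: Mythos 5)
Your proof is correct and yields the stated bound, but takes a genuinely different route from the paper's. The paper does not use a union bound: it lets $H\subseteq[n]$ be the random set of near-optimal rounds, applies \cref{cor:gen-rob-par} to the strategy restricted to $H$ (absorbing $\tsf{A}_L$ for $L=[n]\setminus H$ into the purifying register), which produces a state with Wiesner-Breidbart structure on $A_H$ only, and then tensors $\ketbra{\beta}^{\otimes L}_{A_L}$ onto the remaining rounds. The extension cost is bounded by $n-|H|$ per instance, and linearity of expectation gives $\mathbb{E}[n-|H|]=n\eta$. You instead condition on the event $E$ that every round is simultaneously near-optimal, bound $\Pr(E^c)\leq n\eta$ by union bound, apply full rigidity on $E$, and use the trivial trace-distance bound of $1$ on $E^c$. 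The paper's route keeps partial rigidity structure even when only a few rounds are bad; yours discards all structure once any round fails, but since trace distance is capped at $1$ the quantity $\mathbb{E}[n-|H|]$ is never actually tighter than your $\Pr(E^c)=\mathbb{E}[\min(1,n-|H|)]$, so no quantitative precision is lost. Both routes require the isometries $V,W$ and the state $\ket{\phi}$ to be measurable functions of the external randomness, embedded into common target spaces; you correctly flag this as the one point of care.
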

	
	\begin{proof}
		For each $i\in[n]$, let $H_i$ be the random variable indicating if $\mfk{w}_{\ttt{TFKW}^n}^i(\ttt{S})\geq\cos^2\tfrac{\pi}{8}-\varepsilon$. We have $\Pr(H_i=1)=1-\eta$. Let $H\subseteq[n]$ be the register-valued random variable such that $i\in H$ if and only if $H_i=1$; let $L=[n]\backslash H$ be the complement. Since for any round in $H$, $\mfk{w}_{\ttt{TFKW}^n}^i(\ttt{S})\geq\cos^2\tfrac{\pi}{8}-\varepsilon$, we can apply the rigidity of \cref{cor:gen-rob-par} to those rounds. Then, there exists a constant $K\geq 0$, Hilbert spaces $\tsf{B}'$ and $\tsf{C}'$, isometries $V:\tsf{B}\rightarrow\tsf{B}'$ and $W:\tsf{C}\rightarrow\tsf{C}'$, and a state $\ket{\phi}\in\tsf{A}_H\otimes\tsf{A}_L\otimes\tsf{B}'\otimes\tsf{C}'\otimes\tsf{R}$ of the form $\ket{\phi}=\sum_{t\in(\Z_2^2)^{|H|}}\ket{\beta_t}_{A_H}\otimes\ket{\psi_t}\in\tsf{A}\otimes\tsf{B}'\otimes\tsf{C}'\otimes\tsf{R}$
		where the supports of the $\ket{\psi_t}\in\tsf{A}_L\otimes\tsf{B}'\otimes\tsf{C}'\otimes\tsf{R}$ on both $\tsf{B}'$ and $\tsf{C}'$ are orthogonal such that
		\begin{align}
			\norm*{(V\otimes W)\rho(V\otimes W)^\dag-\Tr_R(\ketbra{\phi})}_{\Tr}\leq K|H|^3\sqrt{\varepsilon}\leq Kn^3\sqrt{\varepsilon}.
		\end{align}
		Let $\sigma=\ketbra{\beta}^{\otimes L}_{A_L}\otimes\Tr_{A_LR}(\ketbra{\phi})$. Then, $\sigma$ has the form we want and $\norm{\Tr_R(\ketbra{\phi})-\sigma}\leq n-|H|$, giving that $\mathbb{E}\norm{\Tr_R(\ketbra{\phi})-\sigma}\leq n-\mathbb\sum_iH_i=n\eta$. Using the triangle inequality, we get the wanted result.
	\end{proof}
	
	We give an example of the use of the results of this section by considering an explicit choice of parameters.
	
	\begin{example}\label{ex:numbers}
		Fix some large $n\in\N$. Take $\varepsilon=n^{-8}$, $\eta=n^{-2}$, and $\delta=\frac{1}{2}n^{-10}$. Suppose Alice plays $N=n^{21}$ rounds of the TFKW game in parallel with Bob and Charlie, and that the players are able to win at least $\cos^2\frac{\pi}{8}N-\frac{1}{2}n^{11}$ of them. Then \cref{lem:obs} implies that, other than with probability $e^{-\frac{n}{2}}$, there are at least $(1-n^{-2})N$ rounds that won with probability $\mfk{w}_{\ttt{TFKW}^n}^i(\ttt{S})\geq\cos^2\tfrac{\pi}{8}-n^{-8}$. Then Alice can check the rigidity on $n$ rounds chosen uniformly at random: call this register $A'$. Due to the uniform randomness, each of the $n$ has probability $1-n^{-2}$ of being within $n^{-8}$ of optimal. Then, we can use the rigidity of \cref{thm:prob-gen-rob-par} to say that there exists a constant $K\geq 0$, Hilbert spaces $\tsf{B}'$ and $\tsf{C}'$, and isometries $V:\tsf{B}\rightarrow\tsf{B}'$ and $W:\tsf{C}\rightarrow\tsf{C}'$ such that the expected value of the distance between quantum states
		\begin{align}
		\mathbb{E}\norm*{(V\otimes W)\rho_{A'BCR}(V\otimes W)^\dag-\Tr_{R}(\ketbra{\phi})}\leq\frac{K+1}{n},
		\end{align}
		where $\ket{\phi}=\sum_{t\in(\Z_2^2)^n}X^{t_{10}}Z^{t_{11}}\ket{\beta}\otimes\cdots\otimes X^{t_{n0}}Z^{t_{n1}}\ket{\beta}\otimes\ket{t}_{BCR}$ for some auxiliary register $R$ and $\ket{t}_{BCR}\in\tsf{B}'\otimes\tsf{C}'\otimes\tsf{R}$ with orthogonal supports on both $\tsf{B}'$ and $\tsf{C}'$.
	\end{example}

\section{Applications}
\label{sec:application}

In this section, we present applications of our rigidity result. In \cref{sec:appl-prel}, we introduce further definitions and techniques we will need in this section. In \cref{sec:wse}, we construct a three-party weak string erasure scheme. In \cref{sec:bc}, we discuss bit commitment constructed from this weak string erasure scheme, and contrast our model with prev0ious three-party models. In \cref{sec:rand}, we construct a everlasting randomness expansion protocol in a model closely following the model for MoE games, that requires temporary computational assumptions but no entanglement.

\subsection{Preliminaries and Notation}\label{sec:appl-prel}

A \emph{classical-quantum state} (cq) is a state $\rho_{XH}\in\mc{D}(\tsf{X}\otimes\tsf{H})$ that takes the form
\begin{align}
	\rho_{XH}=\sum_{x\in X}p_x\ketbra{x}_X\otimes\rho^x_H,
\end{align}
where $p_x\geq 0$, $\sum_xp_x=1$, and the $\rho^x_H\in\mc{D}(\tsf{H})$ are quantum  states. As the part of the state on the register $X$ is diagonal in the canonical basis, we consider that as a classical register. Since classical information may be cloned, we may write the ccq state with the classical part duplicated as
\begin{align}
	\rho_{XXH}=\sum_{x\in X}p_x\ketbra{x}\otimes\ketbra{x}\otimes\rho^x_H\in\mc{D}(\tsf{X}\otimes\tsf{X}\otimes\tsf{H}).
\end{align}
If a quantum register decomposes as a product $H=H_1\times \cdots\times H_n$, for any set $\iota=\{i_1,\ldots,i_k\}\subseteq[n]$ with $i_1<\ldots<i_k$ and state $\rho_H\in\mc{D}(\tsf{H})$, write \begin{align}
	\rho_{H_\iota}=\rho_{H_{i_1}\cdots H_{i_k}}\in\mc{D}(\tsf{H}_\iota)=\mc{D}(\tsf{H}_{i_1}\otimes\cdots\otimes\tsf{H}_{i_k}).
\end{align}
Finally, for a cq state $\rho_{IH}\in\mc{D}(\tsf{I}\otimes\tsf{H})$, where $I=P([n])$ the power set, write
\begin{align}
	\rho_{IH_I}=\sum_{\iota\in I}p_\iota\ketbra{\iota}\otimes\rho^{\iota}_{H_\iota}\in\mc{D}\parens*{\bigoplus_{\iota\in I}\ket{\iota}\otimes\tsf{H}_\iota}\subseteq\mc{D}(\tsf{I}\otimes\tsf{H}).
\end{align}

We also want to be able to estimate the uncertainty of a register given another. This is done using the \emph{conditional min-entropy}: for $\rho_{HK}\in\mc{D}(\tsf{H}\otimes\tsf{K})$, the uncertainty of $H$ knowing $K$ is
\begin{align}
	H_{\mathrm{min}}(H|K)_\rho=-\lg\inf\set*{\Tr(\sigma_K)}{\rho_{HK}\leq\Id_H\otimes\sigma_K,\sigma_K\in\mc{P}(\tsf{K})}.
\end{align}
Importantly, if $\rho$ is classical on $\tsf{H}$, $2^{-H_{\mathrm{min}}(H|K)_\rho}$ corresponds exactly to the probability of guessing $H$ when holding $K$. The robust version of this entropy is the \emph{smooth min-entropy}. For $\varepsilon>0$,
\begin{align}
	H^\varepsilon_{\mathrm{min}}(H|K)_\rho=\sup\set*{H_{\mathrm{min}}(H|K)_\sigma}{\sigma\in\mc{P}(\tsf{H}\otimes\tsf{K}),\norm{\rho-\sigma}_{\Tr}\leq\Tr(\rho)\varepsilon,\Tr(\sigma)\leq\Tr(\rho)}.
\end{align}
For more information, see \cite{Tom16}.

Generally, the evolution of a quantum system is given by a quantum channel, which subsumes both measurements and unitary evolution. A quantum channel is represented by a completely positive trace-preserving (CPTP) map, which is a linear map $\Phi:\mc{L}(\tsf{H})\rightarrow\mc{L}(\tsf{K})$ such that, for any Hilbert space $\tsf{Z}$ and any $P\in\mc{P}(\tsf{Z}\otimes\tsf{H})$, $(\Id_Z\otimes\Phi)(P)\geq0$; and $\Tr(\Phi(L))=\Tr(L)$ for all $L\in\mc{L}(\tsf{H})$. The partial trace provides an example of a quantum channel.

We write the orthonormal basis that diagonalises the Pauli $Y$ matrix as $\ket{0_\circlearrowleft}=\frac{1}{\sqrt{2}}(\ket{0}+i\ket{1})$ and $\ket{1_\circlearrowleft}=\frac{1}{\sqrt{2}}(\ket{0}-i\ket{1})$. Then, $Y=\ketbra{0_\circlearrowleft}-\ketbra{1_\circlearrowleft}$.

A function $f:\N\rightarrow[0,\infty)$ is called \emph{negligible} if, for every polynomial $p$, $p(n)f(n)\rightarrow 0$ as $n\rightarrow\infty$. Write $\ttt{negl}(n)$ for the set of negligible functions in $n$; abusing notation a bit, we will also write $\ttt{negl}(n)$ to represent some function taken from the set.

\subsection{Weak String Erasure}\label{sec:wse}

Weak string erasure (WSE) is a fundamental cryptographic primitive, introduced in \cite{KWW12}. It is a simple yet powerful way to share partial information between a sender Alice and a receiver Bob. A WSE protocol provides Alice with a random string $x\in\Z_2^n$, and Bob with a string $\hat{x}\in\Z_2^n$ and a subset $\iota\subseteq [n]$ such that $|\iota|$ is on average $n/2$. The strings satisfy the property that they are equal on the positions indexed by the elements of $\iota$: $x_\iota=\hat{x}_\iota$. Security for such a scheme consists of Alice being unable to guess which substring of $x$ Bob knows, while Bob is unable to guess the remaining bits of $x$, \emph{i.e.}~the substring $x_{\iota^c}$. WSE was used in \cite{KWW12} to create bit commitment and oblivious transfer schemes. In their construction of a BC scheme, the roles of the sender and the receiver are preserved: the string Alice commits to is the image of her WSE output~$x$ by a randomness extractor, and in the reveal phase, Bob uses the part of the string he knows $x_\iota$ to verify that Alice had in fact committed to this string. Hence, however, since bit commitment is impossible with no additional assumptions \cite{BS16}, WSE needs some assumptions about the model to hold. Accordingly, \cite{KWW12} used a quantum noisy-storage model to achieve it, generalising results on bounded quantum storage used to achieve oblivious transfer \cite{DFSS08}. We formally define security for WSE in the original two-party model.

\begin{definition}[\cite{KWW12}]\label{def:wse}
	A \emph{$(n,\lambda,\varepsilon)$-weak string erasure (WSE) scheme} is a protocol between two parties, Alice and Bob, that creates a state $\rho_{XAI\hat{X}B}$, where $X$, $I$, and $\hat{X}$ are classical registers such that $X=\Z_2^n$ holds string $x$, $\hat{X}=\Z_2^n$ holds Bob's guess of $x$, and $I=P([n])$ holds $\iota$; and $A$ and $B$ are optional quantum registers corresponding to Alice and Bob's remaining quantum states. The scheme must satisfy \emph{correctness}, and \emph{security} for both Alice and Bob:
	
	\textbf{Correctness}: If both Alice and Bob are honest then $\rho_{XI\hat{X}_I}=\rho_{XIX_I}$ and $\rho_{XI}=\mu_{X}\otimes\mu_{I}$.
	
	\textbf{Security for Alice}: If Alice is honest $H^\varepsilon_{\mathrm{min}}(X|B)_\rho\geq\lambda n$.
	
	\textbf{Security for Bob}: If Bob is honest, $\rho_{AI}=\rho_{A}\otimes\mu_{I}$ in the event that Alice does not abort.
\end{definition}

\noindent
We say that a protocol is a \emph{$(n,\lambda,\varepsilon)$-WSE scheme that fails with probability $p$} if any one of the three conditions does not hold with probability at most $p$.

Here, we show WSE in an alternative model. Instead of resorting to limitations on the quantum devices of parties, we add an additional dishonest prover, Charlie, who colludes with the receiver.\footnote{Note that oblivious transfer in yet another three-party model has been considered before. In \cite{YXTZ14}, they consider a model where an untrusted third party prepares entangled states for Alice and Bob to use. However, they make use of much stronger assumptions: the third party produces each state identically and independently, Alice and Bob need to cooperate to verify that these states are correct before running the protocol, and the third party does not collude with any of the other parties.} Instead of the storage limitation, we place restrictions on the communications: the prover is not allowed to communicate with the receiver, and the sender is required to communicate by publicly broadcasting. The former restriction helps an honest sender constrain the action of a dishonest receiver; the latter condition blocks a subtle cheating method of a dishonest sender, where she attempts to extract different information from the receiver and the prover.

\begin{definition}
	A WSE scheme in the \emph{three-party model} consists of a sender, Alice, a receiver, Bob, and a prover, Charlie. It satisfies the following:
	\begin{itemize}
		\item Charlie is dishonest if and only if Bob is dishonest.
		
		\item Alice communicates by publicly broadcasting.
		
		\item Bob and Charlie are isolated from each other once Alice starts broadcasting.
	\end{itemize}
\end{definition}

In this model, there is an additional prover Charlie, so the state takes the form $\rho_{XAI\hat{X}BC}$, where~$C$ is an additional register held by Charlie. If he is dishonest, he should not be able to get more information out of the protocol than his collaborator, Bob. Thus, we require that the security for Alice from \cref{def:wse} is satisfied with respect to either Bob or Charlie's registers. We state this formally.

\begin{definition}
	A \emph{$(n,\lambda,\varepsilon)$-WSE scheme in the three-party model} is a protocol that produces a shared state $\rho_{XAI\hat{X}BC}$ such that it is a two-party scheme for Alice and Bob, and the security for Alice is symmetric:
	
	\textbf{Two-party WSE:} $\rho_{XAI\hat{X}B}$, the state with Charlie's register traced out, satisfies \cref{def:wse}.
	
	\textbf{Symmetric security for Alice}: If Alice is honest $H^\varepsilon_{\mathrm{min}}(X|C)\geq\lambda n$.
\end{definition}

For our protocol in this model, as Bob colludes with Charlie but may not communicate with him, an honest sender exploits the rigidity of the TFKW game to constrain their actions. Note that Charlie needs to remain out of the reach of Bob's communications for as long as Bob is using his output data in order for it to stay secure. Since Alice must broadcast publicly, Bob and Charlie will receive the same TFKW game questions even if she is dishonest.

Now, we formally present our protocol.

\begin{mdframed}
\begin{protocol}[three-party weak string erasure]\label{prot:wse}\hphantom{ }
	\begin{enumerate}[1.]
		\item Bob prepares the shared state $\be^{\otimes N}\ket{x^\varphi}_A\otimes\ket{x\varphi}_B\otimes\ket{x\varphi}_C$ for $x,\varphi\in\Z_2^N$ chosen uniformly at random. Bob and Charlie are then no longer allowed to communicate.
		
		\item Alice chooses a set of $n$ indices $J\subseteq[N]$ and a string $\theta\in\Z_2^N$ uniformly at random. She measures each of her qubits $1\leq i\leq N$ in basis $\{\ket{0^{\theta_i}},\ket{1^{\theta_i}}\}$ if $i\notin J$ and in basis $\{\be\ket{0^{\theta_i}},\be\ket{1^{\theta_i}}\}$ if $i\in J$. This produces a string $y\in\Z_2^N$ that she keeps; and she broadcasts $J$ and $\theta$.
		
		\item Bob and Charlie, without communicating, each measure their subspaces to get strings corresponding to their optimal guess at the TFKW game on $J^c$, $y^B=y^C=x+1^{J^c}\land\theta\land\varphi\in\Z_2^N$, and they then send $y^B_{J^c}$ and $y^C_{J^c}$, respectively, to Alice.
		
		\item  Alice checks if her string everywhere but the index set, $y_{J^c}$, matches $y^B_{J^c}$ and $y^C_{J^c}$ simultaneously on at least $(\cos^2\frac{\pi}{8}-\delta)N$ bits. If it does not, she aborts.
		
		\item Alice takes as output $y_J$, and Bob takes as output the set $\iota(\theta,\varphi)\subseteq J$ where the bits of $\theta$ and $\varphi$ match, and the string $y^B_J$.
	\end{enumerate}
\end{protocol}
\end{mdframed}

In \cref{fig:wse}, we give a setup for our WSE scheme, where we see that the single round of communication makes it possible to devise a  way to run the protocol relativistically.

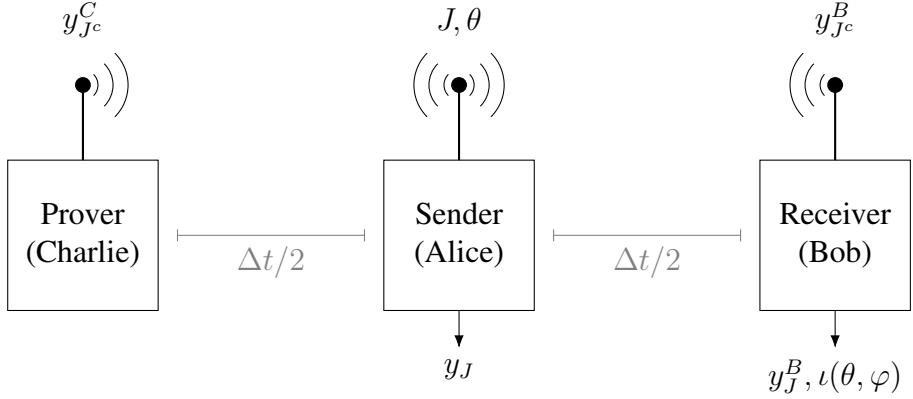
\begin{figure}
	\centering
	\begin{tikzpicture}
	\draw (-1,-1) rectangle (1,1);
	\node[align=center] at (0,0){Sender \\ (Alice)};
	\draw[thick] (0,1) -- (0,2);
	\fill (0,2) circle (3pt);
	\draw (0.2,2) arc (0:45:0.2);
	\draw (0.2,2) arc (0:-45:0.2);
	\draw (0.4,2) arc (0:45:0.4);
	\draw (0.4,2) arc (0:-45:0.4);
	\draw (0.6,2) arc (0:45:0.6);
	\draw (0.6,2) arc (0:-45:0.6);
	\draw (-0.2,2) arc (180:135:0.2);
	\draw (-0.2,2) arc (180:225:0.2);
	\draw (-0.4,2) arc (180:135:0.4);
	\draw (-0.4,2) arc (180:225:0.4);
	\draw (-0.6,2) arc (180:135:0.6);
	\draw (-0.6,2) arc (180:225:0.6);
	\node[above] at (0,2.5){$J,\theta$};
	
	\draw (-6,-1) rectangle (-4,1);
	\node[align=center] at (-5,0){Prover \\ (Charlie)};
	\draw[thick] (-5,1) -- (-5,2);
	\fill (-5,2) circle (3pt);
	\draw (-4.8,2) arc (0:45:0.2);
	\draw (-4.8,2) arc (0:-45:0.2);
	\draw (-4.6,2) arc (0:45:0.4);
	\draw (-4.6,2) arc (0:-45:0.4);
	\draw (-4.4,2) arc (0:45:0.6);
	\draw (-4.4,2) arc (0:-45:0.6);
	\node[above] at (-5,2.5){$y^C_{J^c}$};
	
	\draw (4,-1) rectangle (6,1);
	\node[align=center] at (5,0){Receiver \\ (Bob)};
	\draw[thick] (5,1) -- (5,2);
	\fill (5,2) circle (3pt);
	\draw (4.8,2) arc (180:135:0.2);
	\draw (4.8,2) arc (180:225:0.2);
	\draw (4.6,2) arc (180:135:0.4);
	\draw (4.6,2) arc (180:225:0.4);
	\draw (4.4,2) arc (180:135:0.6);
	\draw (4.4,2) arc (180:225:0.6);
	\node[above] at (5,2.5){$y^B_{J^c}$};
	
	\draw[-Latex] (0,-1) -- (0,-1.5) node[below]{$y_J$};
	\draw[-Latex] (5,-1) -- (5,-1.5) node[below]{$y^B_J,\iota(\theta,\varphi)$};
	
	\draw[gray, |-|] (-3.75, 0) -- (-2.5,0) node[below]{$\Delta t/2$} -- (-1.25,0);
	\draw[gray, |-|] (3.75, 0) -- (2.5,0) node[below]{$\Delta t/2$} -- (1.25,0);
	
	\end{tikzpicture}
	\caption{A setup for the three-party weak string erasure scheme \cref{prot:wse}. The parties communicate by broadcasting publicly, and as there is only one round of communication, the spacelike separation between Bob and Charlie ensures they do not communicate in the time $\Delta t$ needed to run the protocol.}
	\label{fig:wse}
\end{figure}

Note that the protocol requires no quantum storage to run honestly by considering it in a prepare-and-measure way. That is, Alice may come up with the random $\theta$ and $J$ before Bob prepares the state, measure her register one qubit at a time as soon she receives it from Bob, and only reveal $\theta$ and $J$ once she knows that Bob and Charlie are no longer communicating. Since her measurements are local, this has the same effect on the state as if she waited until Bob and Charlie finish communicating to make her measurements.

To illustrate the protocol, we consider in more detail the case where the players are honest. Bob should prepare some unentangled optimal state for the TFKW game uniformly at random:
\begin{align}
\rho_{ABC}&=\expec{x,\varphi\in\Z_2^{N}}\be^{\otimes N}\ketbra{x^\varphi}_A\be^{\otimes N}\otimes\ketbra{x\varphi}_B\otimes\ketbra{x\varphi}_C.
\end{align}
Alice then comes up with uniformly random $\theta$ and $J$ and makes her measurements, so the state becomes
\begin{align}
\rho_{YBC}&=\expec{x,\varphi,\theta,J}\sum_{y}\abs*{\braket{y^\theta}{\be^{1^{J^c}}}{x^\varphi}}^2\ketbra{y}_Y\otimes\ketbra{x\varphi\theta J}_B\otimes\ketbra{x\varphi\theta J}_C.
\end{align}
Now, the honest Bob measures his register in the computational basis and gets full information about $x$ and $\varphi$. Knowing $J$, Bob sends $y^B_{J^c}=x_{J^c}+\theta_{J^c}\land\varphi_{J^c}$ corresponding to his best guess of $y$ in the TFKW game, and keeps $y^B_{J}=x_J$ to himself. Charlie does the same and sends $y^C_{J^c}=y^B_{J^c}$ to Alice. Bob and Charlie win at each copy of $\ttt{TFKW}$ with probability exactly $\cos^2\frac{\pi}{8}$, so with overwhelming probability for large $N$, they do not cause Alice to abort. Assuming the protocol does not abort, Bob defines $\iota(\theta,\varphi)\subseteq(\Z_2^N)_J=\Z_2^n$ as the set $\iota(\theta,\varphi)=\set{i\in J}{\theta_i=\varphi_i}$. Alice and Bob have no use for $x_{J^c}$, $\varphi$, $y_{J^c}$, $\theta$, and $J$ and may forget them. Alice calls her remaining register~$X$ and Bob calls his remaining registers $I$ and $\hat{X}$, so the state becomes
\begin{align}
\rho_{XI\hat{X}C}=\expec{x,\varphi,\theta,J}\sum_{y_J}\abs*{\braket{y_J}{H^{\theta_J+\varphi_J}}{x_J}}^2\ketbra{y_J}_X\otimes\ketbra{\iota(\theta,\varphi)}_I\otimes\ketbra{x_J}_{\hat{X}}\otimes\rho_C^{x,\varphi,\theta,J},
\end{align}
where each $\rho_C^{x,\varphi,\theta,J}$ is a quantum state representing what Charlie continues to hold, but the structure of this state is unimportant. From the coefficients $\abs*{\braket{y_J}{H^{\theta_J+\varphi_J}}{x_J}}^2=\abs*{\braket{y_J}{H^{1^{\iota(\theta,\varphi)^c}}}{x_J}}^2$, we see that $(y_J)_i=(x_J)_i$ for $i\in\iota(\theta,\varphi)$ while for $i\notin\iota(\theta,\varphi)$, $(x_J)_i$ is uniformly random with respect to $(y_J)_i$. Therefore, Alice holds the string $y_J$, Bob has the substring $(y_J)_{\iota(\theta,\varphi)}$ and full information about where in the string they are found, but Bob has no information about the remaining bits. Formally, this gives correctness of the protocol.


\begin{lemma}
	\cref{prot:wse} is correct as a WSE scheme.
\end{lemma}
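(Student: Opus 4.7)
The plan is to trace the state through an honest run of Protocol~\ref{prot:wse} position by position and verify the two defining equations of \cref{def:wse} by direct computation, treating the abort check as a separate concentration argument. After Step~1, Alice's register holds the product state $\bigotimes_{i \in [N]} \be\ket{x_i^{\varphi_i}}$, while Bob and Charlie each hold a classical copy of $(x, \varphi)$. Since they no longer communicate once Alice starts broadcasting, in Step~3 they simply measure in the computational basis to recover $(x, \varphi)$ and then announce $y^B_{J^c} = y^C_{J^c} = x_{J^c} \oplus (\theta \land \varphi)_{J^c}$ as specified. For each $i \in J$, Alice measures $\be\ket{x_i^{\varphi_i}}$ in the basis $\{\be\ket{0^{\theta_i}}, \be\ket{1^{\theta_i}}\}$, which by $\be^\dag\be = \Id$ reduces to the BB84 statistics $|\braket{y_i^{\theta_i}}{x_i^{\varphi_i}}|^2$.

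Writing $\iota := \iota(\theta,\varphi) = \{i \in J : \theta_i = \varphi_i\}$, Alice's outcome $y_i$ equals $x_i$ with probability $1$ for $i \in \iota$ (matched bases) and is uniform and independent of $x_i$ for $i \in J \setminus \iota$ (mutually unbiased bases). Identifying $X = y_J$, $\hat X = x_J = y^B_J$, and $I = \iota$, the matched-basis statement is exactly $\rho_{XI\hat X_I} = \rho_{XIX_I}$. For the uniformity condition $\rho_{XI} = \mu_X \otimes \mu_I$, I would marginalize over the remaining registers: since $x$ is uniform, $(y_J)_\iota = x_\iota$ is uniform, and since the conjugate-coding measurement on the unmatched positions yields a uniform bit independent of $x_i$, $(y_J)_{\iota^c}$ is also uniform; together $y_J$ is uniform on $\Z_2^n$. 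Because $\theta$ and $\varphi$ are independent uniform bit strings, each $i \in J$ lies in $\iota$ independently with probability $\tfrac12$, so $\iota$ is uniform on $P(J)$, identified with $P([n])$ via the canonical order-preserving bijection. Independence of $X$ and $I$ follows because the randomness of $\theta \oplus \varphi$ that fixes $\iota$ is separate from the combination of $x$ and the measurement outcome that determines $y_J$ on $\iota$, while $y_J$ on $\iota^c$ is freshly uniform.

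For the abort step, the same Born-rule computation on $J^c$ shows that the probability of $y_i = y^B_i$ at each position $i \in J^c$ is exactly the honest TFKW winning probability $\cos^2\tfrac{\pi}{8}$, and these events are independent across $i \in J^c$ by the tensor-product structure of Bob's preparation and of Alice's measurement. Hoeffding's inequality applied to these i.i.d.\ $\mathrm{Bernoulli}(\cos^2\tfrac{\pi}{8})$ match indicators yields an abort probability at most $e^{-2\delta^2 (N-n)}$, which is exponentially small in $N$. There is no deep obstacle here since the protocol is essentially a prepare-and-measure scheme on product states; the only real subtlety is bookkeeping: keeping the $J$-bits used for the output separate from the $J^c$-bits used for verification, cancelling the $\be$ operators on Alice's side via $\be^\dag\be = \Id$, and identifying the uniformly random $\iota \subseteq J$ with a uniformly random subset of $[n]$ to match the signature of \cref{def:wse}.
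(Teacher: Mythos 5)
Your proposal is correct and follows essentially the same route as the paper's own proof: reduce Alice's $\be$-rotated measurement to BB84 statistics via $\be^\dag\be=\Id$, split $J$ into the matched set $\iota(\theta,\varphi)$ (where $y_i=x_i$ deterministically, giving $\rho_{XI\hat X_I}=\rho_{XIX_I}$) and its complement (where the outcome is freshly uniform), and deduce $\rho_{XI}=\mu_X\otimes\mu_I$ from uniformity of $x_J$ and of $\theta_J\oplus\varphi_J$ together with the conditional uniformity of $y_J$ given $\iota$. The paper carries this out by writing down the density matrix $\rho_{XI\hat X}$ explicitly with Born-rule coefficients $\abs{\braket{y_J}{H^{\theta_J+\varphi_J}}{x_J}}^2$ and then tracing and substituting, which is a formal bookkeeping version of exactly the probabilistic narrative you give. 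One small scope difference: your Hoeffding argument for the abort probability is sound (and slightly stronger than needed since the check threshold is $(\cos^2\tfrac{\pi}{8}-\delta)N$ over only $N-n$ positions), but the paper does not include it in the correctness lemma — correctness there is stated for the non-aborting outcome, and the abort probability is folded into the failure probability $e^{-2N(\eta\varepsilon-\delta)^2}$ of \cref{thm:wse-sec}.
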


\begin{proof}

We need to show that $\rho_{XI}=\mu_X\otimes\mu_I$ and $\rho_{XIX_I}=\rho_{XI\hat{X}_I}$ for honest Alice and Bob. By the above argument,
\begin{align}
	\rho_{XI\hat{X}}=\expec{x,\varphi,\theta,J}\sum_{\substack{y_J\\(y_J)_{\iota(\theta,\varphi)}=\\(x_J)_{\iota(\theta,\varphi)}}}\frac{1}{2^{|\theta_J+\varphi_J|}}\ketbra{y_J}_X\otimes\ketbra{\iota(\theta,\varphi)}_I\otimes\ketbra{x_J}_{\hat{X}},
\end{align}
and therefore
\begin{align}
\rho_{XI}=\expec{y_J,\varphi,\theta,J}\ketbra{y_J}_X\otimes\ketbra{\iota(\theta,\varphi)}_I=\mu_X\otimes\mu_I.
\end{align}
This gives that the bit string $x$ and the subset $\iota$ are both uniformly random. We also want Bob's substring of $x$ to be correct. For this, padding Bob's space implicitly to keep every term the same dimension,
\begin{align}
\begin{split}
\rho_{XI\hat{X}_I}&=\expec{x,\varphi,\theta,J}\sum_{\substack{y_J\\(y_J)_{\iota(\theta,\varphi)}=\\(x_J)_{\iota(\theta,\varphi)}}}\frac{1}{2^{|\theta_J+\varphi_J|}}\ketbra{y_J}_X\otimes\ketbra{\iota(\theta,\varphi)}_I\otimes\ketbra{(x_J)_{\iota(\theta,\varphi)}}_{\hat{X}_I}\\
&=\expec{x,\varphi,\theta,J}\sum_{\substack{y_J\\(y_J)_{\iota(\theta,\varphi)}=\\(x_J)_{\iota(\theta,\varphi)}}}\frac{1}{2^{|\theta_J+\varphi_J|}}\ketbra{y_J}_X\otimes\ketbra{\iota(\theta,\varphi)}_I\otimes\ketbra{(y_J)_{\iota(\theta,\varphi)}}_{\hat{X}_I}\\
&=\rho_{XIX_I}. \qedhere
\end{split}
\end{align}
\end{proof}

We now show security.

\begin{theorem} \label{thm:wse-sec}
	Let $K$ be the constant from \cref{thm:prob-gen-rob-par}. For any $N,n\in\N$ and $\varepsilon,\eta,\delta\in(0,1)$ such that $\eta\varepsilon>\delta$, \cref{prot:wse} is a $(n,\lg(\frac{4}{3}),Kn^3\sqrt{\varepsilon}+n\eta)$-WSE scheme that fails with probability $e^{-2N(\eta\varepsilon-\delta)^2}$.
\end{theorem}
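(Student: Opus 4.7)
The plan is to verify, in addition to the correctness already proven, the two security properties: security for Bob (unconditional, against dishonest Alice) and symmetric security for Alice (holding except with probability $e^{-2N(\eta\varepsilon-\delta)^2}$, against dishonest Bob and Charlie colluding). For security for Bob, note that $\iota(\theta,\varphi)=\{i\in J:\theta_i=\varphi_i\}$ depends nontrivially only on $\varphi_J$, so it suffices to show that a dishonest Alice's complete view is independent of $\varphi_J$. Tracing Bob's preparation $\be^{\otimes N}\ketbra{x^\varphi}\be^{\otimes N}$ over uniformly random $x$ yields the maximally mixed state, independently of $\varphi$, so Alice's quantum register $A$ carries no information about $\varphi_J$. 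The single classical message $y^B_{J^c}=x_{J^c}+\theta_{J^c}\wedge\varphi_{J^c}$ Alice receives depends only on $\varphi_{J^c}$ and is in any case perfectly masked by the uniform $x_{J^c}$. Alice's abort decision is thus independent of $\varphi_J$, and conditioning on non-abort preserves the uniformity of $\varphi_J$, and hence of $\iota$.

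For security for Alice, the plan is to combine \cref{lem:obs} with the parallel-repeated robust rigidity of \cref{thm:prob-gen-rob-par}. The adversaries' shared preparation and measurements induce a strategy for $\ttt{TFKW}^N$ with per-round winning probabilities $\mfk{w}^i$ for $i\in[N]$; Alice's non-abort condition guarantees that this strategy wins at least $(\cos^2\tfrac{\pi}{8}-\delta)N$ of the games played on the positions in $J^c$. By the contrapositive of \cref{lem:obs} (applied with the hypothesis $\eta\varepsilon>\delta$), except with probability at most $e^{-2N(\eta\varepsilon-\delta)^2}$, at most an $\eta$ fraction of all rounds are ``bad'' in the sense that $\mfk{w}^i<\cos^2\tfrac{\pi}{8}-\varepsilon$. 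Since $J$ is drawn uniformly at random, by symmetry each position in $J$ is bad with probability at most $\eta$, matching the hypothesis of \cref{thm:prob-gen-rob-par}. The theorem then produces isometries $V:\tsf{B}\rightarrow\tsf{B}'$ and $W:\tsf{C}\rightarrow\tsf{C}'$ and an ideal state $\sigma$ of the form $\sum_t \ketbra{\beta_{t_1}}\otimes\cdots\otimes\ketbra{\beta_{t_n}}\otimes\sigma^t_{BC}$, where the $\sigma^t_{BC}$ have orthogonal support on both $\tsf{B}'$ and $\tsf{C}'$, within expected trace distance $Kn^3\sqrt{\varepsilon}+n\eta$ of the dilated real state on the $J$ registers.

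The final step is to bound the min-entropy of Alice's Wiesner-Breidbart measurements against this ideal state. Having measured to learn the label $t$ through the orthogonal supports on $\tsf{B}'$, Bob views each of Alice's qubits as being in the state $X^{t_{i,0}}Z^{t_{i,1}}\ket{\beta}$, measured in the basis $\{\ket{\beta},ZX\ket{\beta}\}$ or $\{Z\ket{\beta},X\ket{\beta}\}$ according to uniformly random $\theta_i$. For one value of $\theta_i$ the basis diagonalizes the state and the outcome is fully predictable by Bob; for the other, the two bases are mutually unbiased since $|\braket{\beta}{Z\beta}|^2=\tfrac{1}{2}$, so the outcome is uniformly random. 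Bob's per-bit guessing probability is therefore $\tfrac{1}{2}\cdot 1+\tfrac{1}{2}\cdot\tfrac{1}{2}=\tfrac{3}{4}$, yielding $H_{\mathrm{min}}(X|B)_\sigma\geq n\lg\tfrac{4}{3}$; the identical analysis, invoked through the orthogonality of the $\sigma^t_{BC}$ on $\tsf{C}'$, gives $H_{\mathrm{min}}(X|C)_\sigma\geq n\lg\tfrac{4}{3}$. Translating the expected trace distance bound into a smooth min-entropy bound then yields that both $H^{Kn^3\sqrt{\varepsilon}+n\eta}_{\mathrm{min}}(X|B)_\rho$ and $H^{Kn^3\sqrt{\varepsilon}+n\eta}_{\mathrm{min}}(X|C)_\rho$ are at least $n\lg\tfrac{4}{3}$. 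The principal technical subtlety is this last step: rigidity gives closeness to the ideal state only in \emph{expected} trace distance (over Alice's random $J$ and $\theta$), whereas smooth min-entropy is defined via worst-case trace distance, so some care is required to extract the desired bound (for instance via Markov's inequality or by directly invoking the convexity properties of min-entropy).
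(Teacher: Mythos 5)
Your proof is correct and follows essentially the same approach as the paper: security for Bob via the independence of Alice's view from $\varphi_J$, and security for Alice via \cref{lem:obs}, the rigidity of \cref{thm:prob-gen-rob-par}, and the $(3/4)^n$ guessing-probability calculation yielding $H_{\mathrm{min}}(X|B)_\sigma\geq n\lg\frac{4}{3}$. The subtlety you flag at the end is a genuine bookkeeping point, but the intended resolution is neither Markov's inequality nor a convexity argument: since $J$ (and $\theta$) are publicly broadcast and therefore held as classical registers on Bob's and Charlie's side, the expected per-$J$ trace distance \emph{is} the trace distance of the global cq state (trace distance between states classical and identically distributed on $J$ decomposes as an expectation), so $\sigma$ genuinely lies in the $Kn^3\sqrt{\varepsilon}+n\eta$-ball around the dilated $\rho$ and the smooth min-entropy bound follows directly from the definition.
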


For example, taking the parameter values from \cref{ex:numbers} gives exponentially small failure probability, and requires only polynomially many qubits to run.

\begin{proof}
	First, we show security for Bob. This is essentially because an honest Bob provides Alice no information about any of his strings on $J$. Bob, as he is honest, prepares the shared state
	\begin{align}
		\rho_{ABC}=\expec{x,\varphi\in\Z_2^{N}}\be^{\otimes N}\ketbra{x^\varphi}_A\be^{\otimes N}\otimes\ketbra{x\varphi}_B\otimes\ketbra{x\varphi}_C.
	\end{align}
	Alice can do anything to her state but she must send Bob and Charlie $J$ and $\theta$. Note that Bob and Charlie must both receive the same pair by hypothesis. Therefore, as Alice must get $y,\theta,J$ by some channel $\Phi:\mc{L}(\tsf{A})\rightarrow\mc{L}(\tsf{Y}\otimes\tsf{A}')$,
	\begin{align}
	\begin{split}
		\rho_{YA'BC}=\expec{x,\varphi}\sum_{y,\theta,J}&\ketbra{y\theta J}_Y\otimes\braket{y\theta J}{_Y\Phi(\be^{\otimes N}\ketbra{x^\varphi}_A\be^{\otimes N})_{YA'}}{y\theta J}_Y\\
		&\otimes\ketbra{x\varphi\theta J}_B\otimes\ketbra{x\varphi\theta J}_C.
	\end{split}
	\end{align}
	Since Bob is honest, Alice knows that he must provide her with $y_{J^c}^B=x_{J^c}+\theta_{J^c}\land\varphi_{J^c}$ and Charlie provides her with the same. If Alice chooses not to abort, Bob produces $\iota(\theta,\varphi)$ so the state becomes
	\begin{align}
	\begin{split}
		\rho_{Y'A'I\hat{X}C}=\expec{x,\varphi}\sum_{y,\theta,J}&\ketbra{y\theta J y_{J^c}^B}_{Y'}\otimes\braket{y\theta J}{_Y\Phi(\be^{\otimes N}\ketbra{x^\varphi}_A\be^{\otimes N})_{YA'}}{y\theta J}_Y\\
		&\otimes\ketbra{\iota(\theta,\varphi)}_I\otimes\ketbra{x_J}_{\hat{X}}\otimes\rho^{x,\varphi,\theta,J}_C.
	\end{split}
	\end{align}
	From this state and the definition of $\iota(\theta,\varphi)$ as the set of indices where $\theta_J$ and $\varphi_J$ match for honest Bob, in order for Alice to guess $\iota$, she needs to guess $\varphi_J$. Since she has no information about $x_J$ she may not do better than uniformly random. Formally,
	\begin{align}
	\begin{split}
	\rho_{Y'A'I}=\expec{x,\varphi}\sum_{y,\theta,J}&\ketbra{y\theta J y^B_{J^c}}_{Y'}\otimes\braket{y\theta J}{_Y\Phi(\be^{\otimes N}\ketbra{x^\varphi}_A\be^{\otimes N})_{YA'}}{y\theta J}_Y\\
	&\otimes\ketbra{\iota(\theta,\varphi)}_I\\
	=\sum_{y,\theta,J}\expec{x_{J^c},\varphi_{J^c}}&\ketbra{y\theta J y^B_{J^c}}_{Y'}\otimes\braket{y\theta J}{_Y\Phi(\be^{\otimes|J^c|}\ketbra{x_{J^c}^{\varphi_{J^c}}}_{A_{J^c}}\be^{\otimes|J^c|}\otimes\mu_{A_J})_{YA'}}{y\theta J}_Y\\
	&\otimes\expec{\varphi_J}\ketbra{\iota(\theta,\varphi)}_I\\
	&=\rho_{Y'A'}\otimes\mu_I,
	\end{split}
	\end{align}
	which implies that, since Alice's actions are local, any action she may do on her space gives rise to an uncorrelated final state $\rho_{AI}=\rho_{A}\otimes\mu_{I}$.
	
	Now, we study security for Alice. That is, Alice is honest but Bob and Charlie are dishonest and colluding. We want to show that $H^{Kn^3\sqrt{\varepsilon}+n\eta}_{\mathrm{min}}(X|B)_\rho\geq-\ln\parens*{\frac{3}{4}}n$. As Bob is dishonest, for the first step of the protocol, he may produce any shared state $\rho_{ABC}$. The next three steps of the protocol consist of Alice playing $N-n$ TFKW games in parallel with Bob and Charlie, and verifying the rigidity condition. Therefore, if Alice does not abort, she knows by \cref{lem:obs} that, with probability $1-e^{-2N(\eta\varepsilon-\delta)^2}$, there are at least $(1-\eta)N$ games that win with probability greater than $\cos^2\frac{\pi}{8}-\varepsilon$. We can apply the rigidity from \cref{thm:prob-gen-rob-par} to get that there exists a constant $K\geq 0$, isometries $V:\tsf{B}\rightarrow\tsf{B}'$ and $W:\tsf{C}\rightarrow\tsf{C}'$, an auxiliary register $R$, and a state $\ket{\phi}=\sum_{x,\varphi\in\Z_2^n}\be^{\otimes n}\ket{x^\varphi}\otimes\ket{x,\varphi}_{BCR}$ where the $\ket{x,\varphi}_{BCR}\in\tsf{B}'\otimes\tsf{C}'\otimes\tsf{R}$ have orthogonal support on both $\tsf{B}'$ and $\tsf{C}'$ such that
	\begin{align}
		\expec{J}\norm*{(V\otimes W)\rho_{A_JBC}(V\otimes W)^\dag-\Tr_R(\ketbra{\phi})}_{\Tr}\leq Kn^3\sqrt{\varepsilon}+n\eta.
	\end{align}
	Let $\sigma_{A_JBC}=\Tr_R(\ketbra{\phi})$ and we study first what happens if the shared state is $\sigma$. Since Bob and Charlie may not communicate and Charlie provides no additional information in the protocol, we may safely trace out Charlie's state. However, we must include the copy of $\theta$ that Bob gets during the protocol. By the orthogonality of Charlie's state's support from the rigidity theorem,
	\begin{align}
		\sigma_{A_J\Theta B}=\expec{\theta}\sum_{x,\varphi}\be^{\otimes n}\ketbra{x^\varphi}\be^{\otimes n}\otimes\ketbra{\theta}_\Theta\otimes\Tr_{CR}(\ketbra{x,\varphi})_B.
	\end{align}
	Alice's measurement gives her $X$ and makes the state
	\begin{align}
	\begin{split}
		\sigma_{X\Theta B}&=\expec{\theta}\sum_{x,y,\varphi}\abs*{\braket{y}{H^{\theta_J+\varphi}}{x}}^2\ketbra{y}_X\otimes\ketbra{\theta}_\Theta\otimes\Tr_{CR}(\ketbra{x,\varphi})_B\\
		&=\expec{\theta}\sum_{\substack{x,y,\varphi\\x_{\iota(\theta,\varphi)=y_{\iota(\theta,\varphi)}}}}\frac{1}{2^{|\theta_J+\varphi|}}\ketbra{y}_X\otimes\ketbra{\theta}_\Theta\otimes\Tr_{CR}(\ketbra{x,\varphi})_B,
	\end{split}
	\end{align}
	where $\iota(\theta,\varphi)$ is defined as before. Noting that Bob's register is uncorrelated with part of Alice's register $X_{\iota(\theta,\varphi)^c}$, that gives that Bob's probability of guessing any bit in that register is $\frac{1}{2}$. So, for fixed $\theta,\varphi$, Bob's probability of guessing $X$ is $\frac{1}{2^{|\iota(\theta,\varphi)^c|}}=\frac{1}{2^{|\theta_J+\varphi|}}$. Since $\theta_J$ is uniformly random, Bob's average-case probability of guessing $X$ for fixed $\varphi$ is,
	\begin{align}
		\expec{\theta}\frac{1}{2^{|\theta_J+\varphi|}}=\frac{1}{2^n}\sum_{k=0}^n\binom{n}{k}\frac{1}{2^k}=\parens*{\frac{3}{4}}^n.
	\end{align}
	Since this has no dependence on $\varphi$, we see that this is Bob's probability of guessing $y$, and so the min-entropy is $H_{\mathrm{min}}(X|B)_\sigma\geq -\lg(\frac{3}{4})n$, where we consider $\Theta$ as part of Bob's register $B$. Now we relate this to the smoothed min-entropy of $\rho$. Since $V\otimes W$ is an isometry, $H^{Kn^3\sqrt{\varepsilon}+n\eta}_{\mathrm{min}}(X|B)_\rho=H^{Kn^3\sqrt{\varepsilon}+n\eta}_{\mathrm{min}}(X|B)_{(V\otimes W)\rho(V\otimes W)^\dag}$, and $\sigma$ belongs to a $Kn^3\sqrt{\varepsilon}+n\eta$-ball around $(V\otimes W)\rho(V\otimes W)^\dag$, so
	\begin{align}
		H^{Kn^3\sqrt{\varepsilon}+n\eta}_{\mathrm{min}}(X|B)_\rho\geq H_{\mathrm{min}}(X|B)_\sigma\geq-\lg\parens*{\frac{3}{4}}n.
	\end{align}
	Note that this holds in the same way for Charlie, so he cannot extract any more information that Bob can if he is dishonest.
\end{proof}

\subsection{Bit Commitment from WSE}\label{sec:bc}

Bit commitment (BC) is a cryptographic primitive where a sender, Alice, sends an encoded bit (or more generally a bit string) to a receiver, Bob, and may choose to reveal it at a later time. Accordingly, a scheme for BC consists of a $\ttt{commit}$ protocol and a $\ttt{reveal}$ protocol. Ideally, the scheme should be \emph{hiding} --- Bob is unable to learn the bit until Alice chooses to reveal it --- and \emph{binding} --- Alice must reveal the same bit that she originally chose. In \cite{KWW12}, they consider a BC scheme where Alice commits to a random bit string rather then one she chooses freely. We formally define such a scheme in essentially the same way they do.

\begin{definition}
	A \emph{$(\ell,\varepsilon)$-randomised bit string commitment (RBC) scheme} is a pair of protocols between two parties Alice and Bob: a protocol $\ttt{commit}$ that creates a state $\rho_{YAB}$ and a protocol $\ttt{reveal}$ that creates from this a state $\rho_{YA'\hat{Y}FB'}$. Here, $Y=\Z_2^\ell$ holds Alice's committed string; $\hat{Y}=\Z_2^\ell$ holds the string Alice reveals; $F=\Z_2$ indicates whether Bob accepts (1) or rejects (0) the reveal; and $A,A'$ and $B,B'$ are additional quantum registers for Alice and Bob, respectively. The scheme must be \emph{correct}, \emph{$\varepsilon$-hiding}, and \emph{$\varepsilon$-binding}:
	
	\textbf{Correctness}: If Alice and Bob are honest, for $\sigma_{YF}=\mu_{Y}\otimes\ketbra{1}$, $\norm{\rho_{Y\hat{Y}F}-\sigma_{YYF}}_{\Tr}\leq\varepsilon$.
	
	\textbf{$\varepsilon$-hiding}: If Alice is honest, $\norm{\rho_{YB}-\mu_Y\otimes\rho_B}\leq\varepsilon$.
	
	\textbf{$\varepsilon$-binding}: If Bob is honest, there exists a state $\sigma_{YAB}$ where $\norm{\rho_{YAB}-\sigma_{YAB}}_{\Tr}\leq\varepsilon$ such that, applying $\ttt{reveal}$ to it to get $\sigma_{YA'\hat{Y}FB'}$, $\Pr\squ{Y\neq\hat{Y}\land F=1}\leq\varepsilon$.
\end{definition}
\noindent As for WSE, we say this scheme \emph{fails with probability $p$} if any one of these conditions does not hold with probability at most $p$.

In \cite{KWW12}, they provide a way to construct an RBC scheme using a weak string erasure scheme and a linear code. The roles of the sender and the receiver from the WSE scheme are preserved. In particular, with an $(n,\lambda,\varepsilon)$-WSE scheme and an $(n,k,d)$-linear code, they construct a $\parens*{\lambda n-(n-k)-d,2\varepsilon+2^{-d/2}}$-randomised bit string commitment scheme. Using this recipe, our WSE scheme \cref{prot:wse} gives a form of bit commitment in a model with two receivers. In this model, Alice is a sender who is required to broadcast, and Bob and Charlie are colluding receivers who are isolated from each other. Similarly to WSE, we only require that Bob be able to read the revealed bit string, rather than both receivers. We call this the \emph{two-receiver model}.

\begin{corollary}
	Let $K,N,n,\varepsilon,\eta,\delta$ be constants that satisfy \cref{thm:wse-sec}, and let $k,d\in\N$ such that there exists an $(n,k,d)$-linear code. Then, for $\ell=(\lg\tfrac{4}{3})n-(n-k)-d$ and $\omega=2Kn^3\sqrt{\varepsilon}+2n\eta+2^{-d/2}$, in the two-receiver model, there exists a $\parens*{\ell,\omega}$-randomised bit string commitment scheme that fails with probability $e^{-2N(\eta\varepsilon-\delta)^2}$.
\end{corollary}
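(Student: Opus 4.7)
The approach will be to use \cref{thm:wse-sec} as a drop-in WSE primitive inside the reduction of \cite{KWW12}, which combines a WSE scheme with an $(n,k,d)$-linear code to produce a randomised bit string commitment. Our WSE has $\lambda=\lg\tfrac{4}{3}$, smoothing error $\varepsilon'=Kn^3\sqrt{\varepsilon}+n\eta$, and failure probability $e^{-2N(\eta\varepsilon-\delta)^2}$; substituting these into the \cite{KWW12} guarantee $\parens*{\lambda n-(n-k)-d,\;2\varepsilon'+2^{-d/2}}$ yields exactly the claimed $\ell$ and $\omega$.

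The plan is as follows. First, I would have Alice and Bob execute \cref{prot:wse} (with Bob and Charlie isolated as required by the three-party model), producing Alice's string $x\in\Z_2^n$ and Bob's pair $(\iota,x_\iota)$. Then I would apply the classical $\ttt{commit}$ and $\ttt{reveal}$ protocols of \cite{KWW12}: commit uses a random codeword of the linear code, broadcast from Alice, to fix Alice to $x$; reveal has Alice broadcast her committed string together with $x$, and Bob checks consistency against $(\iota,x_\iota)$. Correctness is immediate from correctness of the WSE. Hiding (against a dishonest Bob) reduces to $\rho_{AI}=\rho_A\otimes\mu_I$, which our WSE already supplies. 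Binding (against a dishonest Alice) reduces to the smooth min-entropy bound $H^{\varepsilon'}_{\mathrm{min}}(X|B)_\rho\geq(\lg\tfrac{4}{3})n$, which we also have; the additional $2^{-d/2}$ term in $\omega$ comes from the linear-code analysis in \cite{KWW12} and is independent of our WSE.

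The main (and quite modest) obstacle will be checking that the reduction transports from the original two-party model of \cite{KWW12} to our two-receiver model. Since all post-WSE communication is classical broadcast from Alice, Charlie passively receives the same messages Bob does; he plays no active role after the WSE phase, so the Bob--Charlie isolation assumption is invoked only for the WSE step itself. Hiding depends only on Alice's joint state with $I$, which is unaffected by Charlie, and binding only needs the min-entropy of $X$ conditioned on the honest verifier's register --- namely Bob's --- which the WSE provides. The symmetric security for Alice additionally guarantees the analogous bound for Charlie's register, which confirms that Alice cannot gain by colluding with or attempting to deceive Charlie instead. The overall failure probability is dominated by that of the WSE, giving $e^{-2N(\eta\varepsilon-\delta)^2}$.
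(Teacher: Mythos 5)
Your overall plan — invoke the WSE result \cref{thm:wse-sec} and feed it, black-box, into the KWW12 reduction from WSE plus an $(n,k,d)$-linear code to a randomised string commitment, then check that the two-receiver model doesn't break the reduction — is exactly what the paper does, and your parameter bookkeeping ($\lambda=\lg\tfrac{4}{3}$, WSE smoothing error $\varepsilon'=Kn^3\sqrt{\varepsilon}+n\eta$, output error $2\varepsilon'+2^{-d/2}$) is correct.

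However, you have the attribution of the two WSE security properties to the two BC properties reversed. In the KWW12 construction, \emph{hiding} against a dishonest Bob is about Bob being unable to determine the committed string, which is a function of Alice's WSE string $x$; this reduces to \emph{security for Alice}, i.e.\ the smooth min-entropy bound $H^{\varepsilon'}_{\mathrm{min}}(X|B)_\rho\geq(\lg\tfrac43)n$, not to $\rho_{AI}=\rho_A\otimes\mu_I$. That independence condition $\rho_{AI}=\rho_A\otimes\mu_I$ is \emph{security for Bob}, which assumes Bob is honest and so cannot be what you lean on against a dishonest Bob; rather it is what gives \emph{binding}: a dishonest Alice cannot tell which positions Bob holds, so if she tries to open to a string differing from $x$ in more than $\sim d$ positions, Bob catches the inconsistency with probability close to $1-2^{-d/2}$. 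Correspondingly, your reading of the role of the symmetric security for Alice is off: it guarantees $H^{\varepsilon'}_{\mathrm{min}}(X|C)\geq(\lg\tfrac43)n$, which is what extends \emph{hiding} to the second receiver Charlie; it is not about preventing Alice from ``colluding with or deceiving Charlie,'' since in the hiding game Alice is the honest party. The paper states this correctly: correctness and binding follow from WSE correctness and security for Bob, and hiding (against both dishonest receivers) follows from the symmetric security for Alice. If you flesh your plan into an actual argument, the swap would block the proof: you would be trying to derive hiding from a guarantee that only holds when Bob is honest.
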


Using the construction of \cite{KWW12}, the correctness and $\varepsilon$-binding of the scheme between Alice and Bob follow immediately from the correctness and security for Bob of WSE. Also, due to the symmetry requirement on security for Alice in three-party WSE, this construction provides $\varepsilon$-hiding when Bob and Charlie are dishonest.

As mentioned before, a construction of \cite{BGKW88} provides classical bit commitment in a model with two senders who may not communicate. We observe that, in contrast, bit commitment is  classically impossible in our two-receiver model. The first step of a protocol in our model consists of the preparation of an initial shared state by Bob. If only classical operations are allowed, Bob is just sampling from a probability distribution and sharing the result. In particular, he can make sure that all three parties receive the same classical information. Next, for the remainder of the protocol, Bob and Charlie may not communicate. However, since Alice must communicate by publicly broadcasting, Bob and Charlie receive exactly the same information from her, and may respond to all the same challenges. As such, classically, our model becomes equivalent to the standard two-party model. In particular, bit commitment is impossible. The difference with the two-sender model, where bit commitment does exist classically, arises due to the additional communication restriction we imposed: the receiver of \cite{BGKW88} may share different information with each of the senders, rather than broadcasting publicly.

\subsection{Everlasting Randomness Expansion}\label{sec:rand}

The creation of fresh, trusted, uniform randomness is an important part of many computational and cryptographic tasks. Since quantum mechanics is inherently probabilistic, it stands to reason that quantum procedures prove useful for this task. A major theoretical hurdle in achieving this is that it is difficult to characterise the behaviour of an untrusted quantum device: one needs to verify that their source of randomness is truly random and not shared by an eavesdropper. Largely, the methods to bypass this difficulty use a nonlocal game to verify entanglement between two untrusted provers. However, this requires, in particular, that the provers are able to produce entangled states, and keep them from decohering throughout the running time of the protocol. This can be an impractical requirement.

In our contribution, we remove the need for long-distance entanglement, and instead make the assumption of a trusted but leaky measurement, as well as a standard computational assumption.
The protocol consists of two steps: first, Alice samples the output of a pseudorandom generator, allowing her to increase the size of her random string; then, she uses this as the source of randomness to play the TFKW game against computationally-bounded and isolated Bob and Charlie, where the rigidity allows her to extract a string that is uniformly random and unknown to either Bob or Charlie. First, we need to formalise the model we are working in, based on the structure of an MoE game.

\begin{definition}
	The \emph{MoE model for randomness expansion} consists of three quantum parties: a trusted verifier Alice, who interacts with two untrusted provers, Bob and Charlie. The model satisfies the following:
	\begin{itemize}
		\item Bob and Charlie are able to prepare a tripartite shared state but then are isolated.
		
		\item Alice can make trusted measurements on her register, which are leaky in the sense that Bob and Charlie can learn the measurement bases.
	\end{itemize}
\end{definition}

Now, we can define randomness expanders in this model.

\begin{definition}
	A \emph{$(s(n),\varepsilon)$-local randomness expander} is a protocol in the MoE model, where, given a uniformly random seed in $S=\Z_2^{s(n)}$, Alice, Bob, and Charlie construct a quantum state $\rho_{YSBC}$, where $Y=\Z_2^n$ and $S$ are classical registers that Alice holds and $B$ and $C$ are potentially quantum registers that Bob and Charlie hold, respectively, such that
	\begin{align}
	\begin{split}
	&\norm{\rho_{YSB}-\mu_Y\otimes\mu_S\otimes\rho_B}_{\Tr}\leq\varepsilon\\
	&\norm{\rho_{YSC}-\mu_Y\otimes\mu_S\otimes\rho_C}_{\Tr}\leq\varepsilon,
	\end{split}
	\end{align}
	if Alice does not abort during the execution. As before, we say this scheme \emph{fails with probability~$p$} if these conditions do not hold with probability at most $p$.
\end{definition}

The idea of this definition is that Alice's output needs to be approximately uniformly random in any case, but we can also get the additional guarantee that, as long as Bob and Charlie stay isolated, they cannot hold onto side information that allows them to guess the output. However, we do not constrain their ability to guess the output if they come back together: for example, the register $BC$ could be maximally entangled with Alice's register before she makes her final measurement, without either $B$ or $C$ being maximally entangled on their own.

The main computational tool we will be making use of is the idea of a pseudorandom generator against computationally-bounded adversaries.

\begin{definition}
	An algorithm $Q:\Z_2^\ast\rightarrow\Z_2^\ast$ is \emph{quantum polynomial time} (QPT) if there exists a Turing machine $T$ such that, for each $n\in\N$, $T(n)$ outputs in polynomial time the description of a quantum circuit that, on input $x\in\Z_2^n$, outputs $Q(x)$. Similarly, we can consider a family of states $\rho_n$, $n\in\N$, QPT if $T(n)$ outputs a quantum circuit that constructs $\rho_n$ from $\ket{0}$; a family of unitaries $U_n$ QPT if $T(n)$ provides a quantum circuit that acts as $U_n$; and a family of measurements $A_n$ QPT if the measurement $A_n$ can be undertaken by first acting by some QPT unitary $U_n$ and then measuring in the computational basis.
\end{definition}

Now, we can introduce pseudorandom generators as functions that take a uniformly random string to a longer string that no QPT algorithm can distinguish from uniform.

\begin{definition}
	A family of functions $G_n:\Z_2^{s(n)}\rightarrow\Z_2^{n}$ is a \emph{pseudorandom generator} (PRG) if, for uniform random variables $\Gamma$ in $\Z_2^{s(n)}$ and $\Delta$ in $\Z_2^n$, and for every QPT algorithm $Q:\Z_2^\ast\rightarrow\Z_2$,
	\begin{align}
		\abs[\Big]{\Pr\squ*{Q(G_n(\Gamma))=1}-\Pr\squ*{Q(\Delta)=1}}\in\ttt{negl}(n).
	\end{align}
	The input of $G_n$ is called the \emph{seed} and $s(n)$ is the seed length.
\end{definition}
Note that, in our context, the QPT algorithm need only be given classical access to the random variable, since Alice will be simply providing Bob and Charlie with strings sampled from this distribution. As such, that probability of $Q$ outputting $1$ takes its usual meaning as the probability measure of $Q^{-1}(\{1\})$.

Because of brute force attacks against $G_n$, $s(n)\in O(\lg n)$ is a strict lower bound on the seed length. Thus, we cannot hope for exponential randomness expansion with this method, but we can nevertheless expect large polynomial or even superpolynomial expansion. Now we can define a variant of the TFKW game that uses a pseudorandom rather than uniformly random question distribution.

\begin{definition}
	Let $G_n:\Z_2^{s(n)}\rightarrow\Z_2^n$ be a PRG and let $\Gamma$ be the uniform random variable on $\Z_2^{s(n)}$. The \emph{computational TFKW game} on $n$ qubits is the MoE game $\ttt{TFKW}^n_G=\parens*{\Z_2^n,\Z_2^n,\tsf{Q}^{\otimes n},\pi_G,A^n}$, where $(A^n)^\theta_x=\ketbra{x^\theta}$ as for the usual TFKW game and $\pi_G(\theta)=\Pr\squ*{G_n(\Gamma)=\theta}$.
\end{definition}

The set of strategies for the computational TFKW game is identical to that for the usual TFKW game, but where we restrict to families of strategies that can be modelled by QPT adversaries. As a warm-up to the main result of this section, we can see that against QPT strategies (shared state and measurements are all QPT), the usual and computational TFKW games behave essentially the same.

\begin{lemma}
	Let $n\mapsto\ttt{S}_n=\parens*{\tsf{B}_n,\tsf{C}_n,B_n,C_n,\rho_n}$ be a family of strategies with QPT adversaries. Then, assuming the existence of a PRG $G_n:\Z_2^{s(n)}\rightarrow\Z_2^n$, for every $i\in[n]$,
	\begin{align}
		\abs[\Big]{\mfk{w}^i_{\ttt{TFKW}^n_G}(\ttt{S}_n)-\mfk{w}^i_{\ttt{TFKW}^{n}}(\ttt{S}_n)}\in\ttt{negl}(n)
	\end{align}
\end{lemma}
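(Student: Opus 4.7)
The plan is a straightforward reduction: any non-negligible gap between the two winning probabilities would yield a QPT algorithm distinguishing $G_n(\Gamma)$ from the uniform distribution on $\Z_2^n$, contradicting the PRG property of $G_n$. To set this up, I would observe that both winning probabilities are expectations of the same payoff function
\begin{align}
f_n(\theta) := \sum_{y \in \Z_2}\Tr\squ*{\parens*{A^\theta_{y,i}\otimes B^\theta_{y,i}\otimes C^\theta_{y,i}}\rho_n},
\end{align}
with $\mfk{w}^i_{\ttt{TFKW}^n}(\ttt{S}_n) = \expec{\theta \in \Z_2^n} f_n(\theta)$ taking $\theta$ uniform on $\Z_2^n$, and $\mfk{w}^i_{\ttt{TFKW}^n_G}(\ttt{S}_n) = \expec{\theta\leftarrow\pi_G} f_n(\theta)$ taking $\theta = G_n(\Gamma)$ for a uniform seed $\Gamma \in \Z_2^{s(n)}$.

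Next, I would build a QPT distinguisher $Q$ that, on classical input $\theta \in \Z_2^n$, realizes a Bernoulli variable with success probability $f_n(\theta)$. Concretely, $Q$ (i) prepares $\rho_n$ on $\tsf{A}_n \otimes \tsf{B}_n \otimes \tsf{C}_n$, (ii) applies $H^{\theta_i}$ to Alice's $i$-th qubit and measures it in the computational basis to obtain a bit $a$, (iii) runs Bob's POVM $B^\theta$ on $\tsf{B}_n$ and Charlie's POVM $C^\theta$ on $\tsf{C}_n$ to obtain strings $y^B, y^C \in \Z_2^n$, and (iv) outputs $1$ iff $a = y^B_i = y^C_i$. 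Step (i) is QPT by the assumption on $\rho_n$, step (iii) by the QPT assumption on $B_n$ and $C_n$, and the remaining steps are trivial. By construction $\Pr\squ*{Q(\theta)=1} = f_n(\theta)$, whence $\Pr\squ*{Q(\Delta)=1} = \mfk{w}^i_{\ttt{TFKW}^n}(\ttt{S}_n)$ and $\Pr\squ*{Q(G_n(\Gamma))=1} = \mfk{w}^i_{\ttt{TFKW}^n_G}(\ttt{S}_n)$ for a uniform $\Delta \in \Z_2^n$.

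Applying the PRG definition to $Q$ then yields
\begin{align}
\abs*{\mfk{w}^i_{\ttt{TFKW}^n_G}(\ttt{S}_n) - \mfk{w}^i_{\ttt{TFKW}^n}(\ttt{S}_n)} = \abs*{\Pr\squ*{Q(G_n(\Gamma))=1} - \Pr\squ*{Q(\Delta)=1}} \in \ttt{negl}(n),
\end{align}
as required. There is no serious obstacle here; the one point that deserves care is that the classical descriptions of the QPT circuits producing $\rho_n$, $B_n$, and $C_n$ can be composed uniformly in $n$ into a single circuit for $Q$, which follows immediately from the Turing-machine formulation of QPT given in the preliminaries.
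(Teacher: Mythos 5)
Your proposal is correct and follows essentially the same reduction as the paper: construct a QPT distinguisher $Q(\theta)$ that plays out the strategy $\ttt{S}_n$ with Alice measuring according to $\theta$, outputs $1$ iff the $i$-th round is won, and then invoke the PRG indistinguishability guarantee. The only cosmetic difference is that you measure Alice's $i$-th qubit marginally rather than applying the full PVM $(A^n)^\theta$ and inspecting the $i$-th bit, but these are the same operation, so the proofs coincide.
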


\begin{proof}
	We will use $\ttt{S}_n$ to construct a QPT algorithm attempting to distinguish the variable $G_n(\Gamma)$ from uniformly random as follows. To compute $Q(\theta)$, measure the state $\rho_n$ with the POVM $(A^n)^\theta\otimes(B_n)^\theta\otimes(C_n)^\theta$. Output $1$ if the measurement result is some $(x,x^B,x^C)$ with $x_i=x^B_i=x^C_i$ and output $0$ otherwise. Then, for uniform random variables $\Gamma$ in $\Z_2^{s(n)}$ and $\Delta$ in $\Z_2^{n}$,
	\begin{align}
	\begin{split}
		&\abs[\Big]{\Pr\squ*{Q(G_{n}(\Gamma))=1}-\Pr\squ*{Q(\Delta)=1}}\\
		&=\abs[\Big]{\expec{\theta\leftarrow G_{n}(\Gamma)}\sum_{y\in\Z_2}\Tr\squ*{((A^n)^\theta_{y,i}\otimes(B_n)^\theta_{y,i}\otimes (C_n)^\theta_{y,i})\rho_n}-\expec{\theta\leftarrow\Delta}\sum_{y\in\Z_2}\Tr\squ*{((A^n)^\theta_{y,i}\otimes(B_n)^\theta_{y,i}\otimes (C_n)^\theta_{y,i})\rho_n}}\\
		&=\abs[\Big]{\mfk{w}^i_{\ttt{TFKW}^n_G}(\ttt{S}_n)-\mfk{w}^i_{\ttt{TFKW}^{n}}(\ttt{S}_n)}.
	\end{split}
	\end{align}
	We may conclude by noting that the left-hand side is contained in $\ttt{negl}(n)$ by hypothesis.
\end{proof}

\newpage 

Now, we can formally present the protocol.

\begin{mdframed}
\begin{protocol}[randomness expansion]\label{prot:re}\hphantom{}
	\begin{enumerate}
		
		\item Alice samples $(t,u)\in\Z_2^{s(N)+s\parens*{\ceil*{\lg\binom{N}{n}}}}$ uniformly at random. She computes $\theta=G_{N}(t)$ and $J=G_{\ceil*{\lg\binom{N}{n}}}(u)$, where she interprets $J$ as a subset of $[N]$ of cardinality $n$.
		
		\item Bob and Charlie prepare a shared state $\rho_{ABC}$ and then are isolated.
		
		\item Alice measures each of her qubits $i\in[N]$ in basis $\{\ket{0^{\theta_i}},\ket{1^{\theta_i}}\}$ if $i\notin J$ and in basis $\{\ket{0_\circlearrowleft},\ket{1_\circlearrowleft}\}$ if $i\in J$. This produces a string $y\in\Z_2^N$ that she keeps.
		
		\item Alice sends Bob and Charlie the key $\theta$ and $J$. Bob and Charlie each reply with a guess of $y$, $y^B$ and $y^C$ respectively.
		
		\item Alice verifies that they win the TFKW game $y_i=y^B_i=y^C_i$ for at least $(\cos^2\parens*{\frac{\pi}{8}}-\delta)N$ of the $i\in[N]\backslash J$, and then, if she accepts, takes $y_J$ to be her output.
	\end{enumerate}
\end{protocol}
\end{mdframed}

The protocol follows a very similar framework to \cref{prot:wse}, with the main differences being that Alice chooses her questions and test qubits only pseudorandomly, and measures always in the same basis to get her output. This basis is chosen to be mutually unbiased with all of the Breidbart states, and thus gives a uniformly random measurement result for any optimal strategy. Note that Bob and Charlie are able to make the protocol accept and provide randomness without using entanglement simply by preparing the Breidbart state $\ket{\beta}^{\otimes N}$, sending it to Alice, and guessing $0$ on all the TFKW game verification rounds.

Also, in this protocol, Alice shares $\theta$ and $J$ with Bob and Charlie immediately after she measures, so they have full information about her measurement bases. Thus, it doesn't affect the protocol if that information is leaked.

\begin{theorem}
	Let $K$ be the constant from \cref{thm:prob-gen-rob-par}, $\varepsilon,\eta,\delta\in(0,1)$ such that $\eta\varepsilon>\delta$, and $N\in\ttt{poly}(n)$. Assuming the existence of a pseudorandom generator, $G_n:\Z_2^{s(n)}\rightarrow\Z_2^n$, \cref{prot:re} is a $(s(N)+s(\lg\binom{N}{n}),2Kn^3\sqrt{\varepsilon}+2n\eta+\ttt{negl}(n))$-local randomness expander in the MoE model with QPT provers, that fails with probability $e^{-2N(\eta\varepsilon-\delta)^2}+\ttt{negl}(n)$.
\end{theorem}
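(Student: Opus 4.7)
The plan is to reduce to a hybrid in which $\theta$ and $J$ are uniformly random, apply the rigidity machinery of \cref{thm:prob-gen-rob-par} in that hybrid, and exploit the mutual unbiasedness of the $Y$-basis with every Wiesner-Breidbart basis to show that Alice's final measurement produces near-uniform randomness independent of the adversaries' registers. Fix a QPT strategy $\ttt{S}$ for Bob and Charlie. The lemma immediately preceding the theorem gives that per-round winning probabilities under pseudorandom and uniform $\theta$ differ by at most $\ttt{negl}(n)$. For the pseudorandom subset $J$ and any fixed set $E\subseteq[N]$, the event ``the $i$-th coordinate of $J$ lies in $E$'' is polynomial-time decidable once $E$ is hard-coded, so PRG indistinguishability further ensures that the corresponding pseudorandom and uniform probabilities agree up to $\ttt{negl}(n)$.

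Second, I would invoke \cref{lem:obs}: if Alice does not abort, then except with probability $e^{-2N(\eta\varepsilon-\delta)^{2}}$ at least $(1-\eta)N$ rounds $i\in[N]$ satisfy $\mfk{w}^{i}_{\ttt{TFKW}^{N}}(\ttt{S})\geq\cos^{2}\tfrac{\pi}{8}-\varepsilon-\ttt{negl}(n)$. By the preceding remark, each of the $n$ positions drawn pseudorandomly into $J$ lies in this good set with probability at least $1-\eta-\ttt{negl}(n)$. Hence \cref{thm:prob-gen-rob-par} yields isometries $V,W$ and a state $\ket{\phi}=\sum_{t\in(\Z_{2}^{2})^{n}}X^{t_{10}}Z^{t_{11}}\ket{\beta}\otimes\cdots\otimes X^{t_{n0}}Z^{t_{n1}}\ket{\beta}\otimes\ket{\psi_{t}}_{BCR}$, with $\ket{\psi_{t}}$ orthogonally supported on both $\tsf{B}'$ and $\tsf{C}'$, such that $\mathbb{E}\|(V\otimes W)\rho_{A_{J}BC}(V\otimes W)^{\dagger}-\Tr_{R}(\ketbra{\phi})\|_{\Tr}\leq Kn^{3}\sqrt{\varepsilon}+n\eta+\ttt{negl}(n)$.

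Third, I would analyse the $Y$-basis measurement on this idealised state. The core computation is that for every $y\in\Z_{2}$ and every $s\in\Z_{2}^{2}$, $|\bra{y_\circlearrowleft}X^{s_0}Z^{s_1}\ket{\beta}|^{2}=\tfrac{1}{2}$, so measuring $A_{J}$ in the $Y$-basis returns each string $y_{J}\in\Z_{2}^{n}$ with probability exactly $2^{-n}$ from any summand of $\ket{\phi}$, independently of the label $t$; coupled with the orthogonality of the $\ket{\psi_{t}}$ on $\tsf{B}'$, all cross terms vanish under partial trace, so the ideal post-measurement marginal on $YB$ is exactly $\mu_{Y}\otimes\rho_{B}^{\mathrm{ideal}}$. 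Pushing the rigidity error through the (contractive) measurement channel, applying the triangle inequality to replace $\rho_{B}^{\mathrm{ideal}}$ by the actual $B$-marginal $\rho_{B}$ of the real post-measurement state (which produces the factor of $2$ in the final bound), tensoring in Alice's independently-sampled seed $S$, and collecting the negligible PRG contributions, I conclude $\|\rho_{YSB}-\mu_{Y}\otimes\mu_{S}\otimes\rho_{B}\|_{\Tr}\leq 2Kn^{3}\sqrt{\varepsilon}+2n\eta+\ttt{negl}(n)$ whenever Alice does not abort, with total failure probability at most $e^{-2N(\eta\varepsilon-\delta)^{2}}+\ttt{negl}(n)$; symmetry of the rigid form gives the same bound with $C$ in place of $B$.

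The main obstacle is reconciling the fact that the target security is statistical (trace distance between density operators) with the computational nature of the PRG hypothesis. We cannot argue directly that the real and uniform output states are statistically close---only computationally indistinguishable, which is weaker. The resolution is to invoke the PRG only to transfer \emph{numerical} quantities: the per-round winning probabilities and the frequencies of membership events in the fixed good set $E$, each of which is the expectation of a polynomial-time indicator once the adversary's strategy and $E$ are hard-coded. The rigidity theorem then delivers a purely information-theoretic structural statement about the actual state on $A_{J}BC$, and the ideal analysis of the $Y$-basis measurement exploits this structure to produce statistically uniform randomness.
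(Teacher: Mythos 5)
Your proposal follows the paper's proof closely: the same hybrid argument on acceptance probabilities ($U$, $V$, $W$), the same use of \cref{lem:obs}, the same transfer of per-position ``good-set'' membership probabilities to the pseudorandom case, the same application of \cref{thm:prob-gen-rob-par}, and the same $Y$-basis mutual-unbiasedness computation followed by a triangle-inequality step producing the factor of~$2$ and a tensoring-in of the seed register. Your closing discussion of why the PRG need only be invoked to transfer numerical (expectation-of-indicator) quantities rather than to argue statistical closeness directly is exactly the key conceptual point.

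One place where you deviate: to argue that pseudorandomly-sampled positions of $J$ land in the good set $E$ with the right frequency, you propose to \emph{hard-code} $E$ into the distinguisher. Since $E$ depends on the adversary's strategy, this is non-uniform advice, and the paper's PRG definition is stated against uniform QPT distinguishers (a Turing machine outputting circuit descriptions). The paper instead avoids non-uniformity by observing that, since Bob and Charlie's strategy is itself QPT, one can build a uniform QPT algorithm that \emph{approximately} decides membership $j_i(J)\in E$ by running the strategy polynomially many times to estimate per-round winning probabilities, incurring only another $\ttt{negl}(n)$ error. Your version would go through verbatim under a non-uniform PRG assumption; to match the paper's stated assumption you should replace the hard-coding step with this sampling-based approximate membership test.
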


The scenario in \cref{ex:numbers} allows us to take $N=n^{27}$, so provided that $s(n)\in o(n^{1/27})$ is possible, this yields randomness expansion.

\begin{proof}
	Write $b=\ceil*{\lg\binom{N}{n}}$. Let $U$ be the random variable representing the number of rounds Bob and Charlie win, let $V$ be the random variable representing the number of rounds they would have won if Alice chose $J$ uniformly at random (among the subsets of $[N]$ with cardinality $n$), and let~$W$ be the number of rounds they would have won if Alice chose both $J$ and $\theta$ uniformly at random.
	
	Take $Q(\theta,J)$ to be the QPT algorithm computed by running steps 2-5 of the randomness expansion protocol, and outputting $1$ if Alice accepts the verification of the TFKW games, and $0$ otherwise. Then, taking $\Gamma_1,\Gamma_2,\Delta_1,\Delta_2$ to be random variables in $\Z_2^{s(N)}$, $\Z_2^{s(b)}$, $\Z_2^{N}$, and $\Z_2^{b}$, respectively, we know that
	\begin{align}
	\begin{split}
		&\Pr\squ*{Q(G_N(\Gamma_1),G_{b}(\Gamma_2))=1}=\Pr\squ*{U\geq(\cos^2\parens*{\tfrac{\pi}{8}}-\delta)N},\\
		&\Pr\squ*{Q(G_N(\Gamma_1),\Delta_2)=1}=\Pr\squ*{V\geq(\cos^2\parens*{\tfrac{\pi}{8}}-\delta)N},\text{ and}\\ &\Pr\squ*{Q(\Delta_1,\Delta_2)=1}=\Pr\squ*{W\geq(\cos^2\parens*{\tfrac{\pi}{8}}-\delta)N},
	\end{split}
	\end{align}
	giving
	\begin{align}
	\begin{split}
		&\abs[\Big]{\Pr\squ*{U\geq(\cos^2\parens*{\tfrac{\pi}{8}}-\delta)N}-\Pr\squ*{W\geq(\cos^2\parens*{\tfrac{\pi}{8}}-\delta)N}}\\
		&\qquad\leq\abs[\Big]{\Pr\squ*{U\geq(\cos^2\parens*{\tfrac{\pi}{8}}-\delta)N}-\Pr\squ*{V\geq(\cos^2\parens*{\tfrac{\pi}{8}}-\delta)N}}\\
		&\qquad\qquad+\abs[\Big]{\Pr\squ*{V\geq(\cos^2\parens*{\tfrac{\pi}{8}}-\delta)N}-\Pr\squ*{W\geq(\cos^2\parens*{\tfrac{\pi}{8}}-\delta)N}}\\
		&\qquad\in\ttt{negl}(b)+\ttt{negl}(N)\subseteq\ttt{negl}(n),
	\end{split}
	\end{align}
	as $N\in\ttt{poly}(n)$ and $b\in O(n\lg n)$. Now, using \cref{lem:obs} as in \cref{thm:wse-sec}, if less than $(1-\eta)N$ of the rounds have winning probability greater than $\cos^2\parens*{\frac{\pi}{8}}-\varepsilon$, then \begin{align}
		\Pr\squ*{W\geq(\cos^2\parens*{\tfrac{\pi}{8}}-\delta)N}\leq e^{-2N(\eta\varepsilon-\delta)^2}.
	\end{align}
	By the above, then
	\begin{align}
		\Pr\squ*{U\geq(\cos^2\parens*{\tfrac{\pi}{8}}-\delta)N}\in e^{-2N(\eta\varepsilon-\delta)^2}+\ttt{negl}(n).
	\end{align}
	So, other than with negligible  failure probability, at least $(1-\eta)N$ of the rounds have winning probability greater than $\cos^2\parens*{\frac{\pi}{8}}-\varepsilon$.
	
	If we select $n$ rounds uniformly at random, each of the rounds has probability $1-\eta$ of winning with probability greater than $\cos^2\parens*{\frac{\pi}{8}}-\varepsilon$. Of course, the rounds are actually selected pseudorandomly: we claim that Bob and Charlie have a negligible probability of distinguishing the two cases. Let $E\subseteq[N]$ be the set of rounds that win with probability greater than $\cos^2\parens*{\frac{\pi}{8}}-\varepsilon$, and write $J=\{j_1(J),\ldots,j_n(J)\}$, where $j_1(J)<\ldots< j_n(J)$. Then, as Bob and Charlie's strategy is QPT, it is possible, for each $i\in[n]$, by using their strategy to play $\ttt{TFKW}$ a polynomial number of times, to get a QPT algorithm that, on input $J$, outputs whether $j_i(J)\in E$ correctly with $1-\ttt{negl}(n)$ probability. Thus, using pseudorandomness, we know
	\begin{align}
		\abs*{\Pr\squ*{j_i\parens*{G_{b}(\Gamma_2)}\in E}-\Pr\squ*{j_i(\Delta_2)\in E}}\in\ttt{negl}(n).
	\end{align}
	As $\Pr\squ*{j_i(\Delta_2)\in E}\geq 1-\eta$, each of the rounds chosen pseudorandomly has probability at least $1-\eta-\ttt{negl}(n)$ of having winning probability greater than $\cos^2\parens*{\frac{\pi}{8}}-\varepsilon$. So, by \cref{thm:prob-gen-rob-par} there exists a constant $K\geq 0$, isometries $V:\tsf{B}\rightarrow\tsf{B}'$ and $W:\tsf{C}\rightarrow\tsf{C}'$, an auxiliary register $R$, and a state $\ket{\phi}=\sum_{x,\varphi\in\Z_2^n}\be^{\otimes n}\ket{x^\varphi}\otimes\ket{x,\varphi}_{BCR}$ where the $\ket{x,\varphi}_{BCR}\in\tsf{B}'\otimes\tsf{C}'\otimes\tsf{R}$ have orthogonal support on both $\tsf{B}'$ and $\tsf{C}'$ such that
	\begin{align}
		\expec{J\leftarrow G_{b}(\Gamma_2)}\norm*{(V\otimes W)\rho_{A_JBC}(V\otimes W)^\dag-\Tr_R(\ketbra{\phi})}_{\Tr}\leq Kn^3\sqrt{\varepsilon}+n\eta+\ttt{negl}(n).
	\end{align}
	Let $\sigma_{A_JBC}=\Tr_R(\ketbra{\phi})$. If Alice measures her register in the basis $\set*{\ket{y_\circlearrowleft}}{y\in\Z_2^n}$, she gets
	\begin{align}
	\begin{split}
		\sigma_{YB}&=\sum_{y,x,\varphi\in\Z_2^n}\abs*{\braket{y_\circlearrowleft}{\be^{\otimes n}}{x^\varphi}}^2\ketbra{y}_Y\otimes\Tr_{CR}\parens*{\ketbra{x,\varphi}}_{B}\\
		&=\sum_{y,x,\varphi\in\Z_2^n}\frac{1}{2^n}\ketbra{y}_Y\otimes\Tr_{CR}\parens*{\ketbra{x,\varphi}}_{B}=\mu_Y\otimes\sigma_B.
	\end{split}
	\end{align}
	So, following the protocol, Alice measures $A_J$ of $\rho$ in this basis, giving
	\begin{align}
		\expec{J\leftarrow G_{b}(\Gamma_2)}\norm{V\rho_{Y_JB}V^\dag-\mu_Y\otimes\sigma_B}_{\Tr}\leq Kn^3\sqrt{\varepsilon}+n\eta+\ttt{negl}(n).
	\end{align}
	Acting with the trace non-increasing channel $\rho\mapsto V^\dag\rho V$,
	\begin{align}
		\expec{J\leftarrow G_{b}(\Gamma_2)}\norm{\rho_{Y_JB}-\mu_Y\otimes V^\dag\sigma_BV}_{\Tr}\leq Kn^3\sqrt{\varepsilon}+n\eta+\ttt{negl}(n),
	\end{align}
	 where in particular,
	 \begin{align}
	 	\expec{J\leftarrow G_{b}(\Gamma_2)}\norm{\rho_{B}-V^\dag\sigma_BV}_{\Tr}\leq Kn^3\sqrt{\varepsilon}+n\eta+\ttt{negl}(n),
	\end{align}
	 so, using the triangle inequality,
	\begin{align}
		\expec{J\leftarrow G_{b}(\Gamma_2)}\norm{\rho_{Y_JB}-\mu_Y\otimes\rho_B}_{\Tr}\leq 2Kn^3\sqrt{\varepsilon}+2n\eta+\ttt{negl}(n).
	\end{align}
	Let $S$ be a classical register holding the seed, and let $I=G_b(S)$ be the register that holds $J$. Then,
	\begin{align}
	\begin{split}
		\norm{\rho_{YSB}-\mu_Y\otimes\mu_S\otimes\rho_B}_{\Tr}&\leq\norm[\Big]{\expec{t\in\Z_2^{s(b)}}\ketbra{t}_S\otimes\ketbra{G_b(t)}_I\otimes\parens*{\rho_{Y_{G_b(t)}B}-\mu_Y\otimes\rho_B}}_{\Tr}\\
		&=\expec{t\in\Z_2^{s(b)}}\norm*{\rho_{Y_{G_b(t)}B}-\mu_Y\otimes\rho_B}_{\Tr}\\
		&\leq 2Kn^3\sqrt{\varepsilon}+2n\eta+\ttt{negl}(n)
	\end{split}
	\end{align}
	The same proof holds for $\rho_{YSC}$.
\end{proof}

\appendix

\section{Preliminary Lemmas} \label{sec:a-prelimlems}

\begin{lemma}[State purification]\label{lem:pure-state}
	Let $\tsf{H}$ be a Hilbert space. For any mixed state $\rho\in\mc{D}(\tsf{H})$, there exists a Hilbert space $\tsf{R}$ and a pure state $\ket{\psi}\in\tsf{H}\otimes\tsf{R}$ such that $\rho=\Tr_R(\ketbra{\psi})$.
\end{lemma}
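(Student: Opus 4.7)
The plan is to use the spectral decomposition of $\rho$ to construct the purification explicitly. This is the classical Gelfand–Naimark–Segal style construction specialized to the finite-dimensional setting, and it is essentially a one-line observation once the right ansatz is written down.

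First, since $\rho \in \mc{D}(\tsf{H})$ is positive semidefinite with unit trace, the spectral theorem gives an orthonormal basis $\{\ket{i}\}_{i=1}^{\dim\tsf{H}}$ of $\tsf{H}$ and nonnegative reals $p_i \geq 0$ with $\sum_i p_i = 1$ such that
\begin{align}
\rho = \sum_i p_i \ketbra{i}.
\end{align}
Next, I would take $\tsf{R}$ to be any Hilbert space of dimension at least $\dim\tsf{H}$ (for concreteness, $\tsf{R} \cong \tsf{H}$ with orthonormal basis $\{\ket{i}_R\}$) and define the candidate purification
\begin{align}
\ket{\psi} = \sum_i \sqrt{p_i}\, \ket{i} \otimes \ket{i}_R \in \tsf{H} \otimes \tsf{R}.
\end{align}

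The two things to check are that $\ket{\psi}$ is a unit vector and that tracing out $\tsf{R}$ recovers $\rho$. The first is immediate: $\braket{\psi} = \sum_{i,j} \sqrt{p_i p_j} \braket{i}{j}\braket{i}{j}_R = \sum_i p_i = 1$. For the second, using the definition of the partial trace on pure tensors,
\begin{align}
\Tr_R(\ketbra{\psi}) = \sum_{i,j} \sqrt{p_i p_j}\, \ketbra{i}{j} \cdot \Tr\!\parens*{\ketbra{i}{j}_R} = \sum_i p_i \ketbra{i} = \rho.
\end{align}

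There is no real obstacle here: the whole content lies in choosing the ansatz, after which both verifications reduce to a line of arithmetic. The only matter of taste is whether to present $\tsf{R}$ minimally (of dimension equal to the rank of $\rho$) or as a copy of $\tsf{H}$; the latter is cleaner and is all that the paper needs.
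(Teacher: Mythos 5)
The paper states \cref{lem:pure-state} in the appendix without proof, treating it as a standard fact from quantum information theory. Your proof is correct and is exactly the canonical argument: diagonalize $\rho$ via the spectral theorem, write down the Schmidt-form ansatz $\ket{\psi}=\sum_i\sqrt{p_i}\,\ket{i}\otimes\ket{i}_R$, and verify normalization and the partial-trace identity by direct computation. Since the paper offers no proof, there is nothing to compare against beyond confirming that your argument is the textbook one any reader would supply.
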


\begin{lemma}[Measurement purification]\label{lem:pure-pvm}
	Let $\tsf{H}$ be a Hilbert space. Any POVM $P:X\rightarrow\mc{P}(\tsf{H})$ can be simulated by a PVM by enlarging the state space.
\end{lemma}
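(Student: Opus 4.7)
The plan is to invoke Naimark's dilation theorem: enlarge the state space by a tensor factor indexed by the outcome set, lift the state via an isometry, and then recover the original POVM statistics from a canonical PVM on the larger space.

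Concretely, I would first introduce the auxiliary Hilbert space $\tsf{K}=\spn_\C\set{\ket{x}}{x\in X}$, with the $\ket{x}$ forming an orthonormal basis indexed by outcomes, and define the linear map $V:\tsf{H}\to\tsf{H}\otimes\tsf{K}$ by
\begin{align}
V\ket{\psi}=\sum_{x\in X}\sqrt{P_x}\ket{\psi}\otimes\ket{x},
\end{align}
where $\sqrt{P_x}$ denotes the unique positive square root of $P_x\in\mc{P}(\tsf{H})$. A one-line computation using the POVM condition $\sum_x P_x=\Id_H$ gives
\begin{align}
V^\dag V=\sum_{x\in X}\sqrt{P_x}\sqrt{P_x}=\sum_{x\in X}P_x=\Id_H,
\end{align}
so $V$ is an isometry. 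This step is mechanical; there is no real obstacle here.

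Next I would define the PVM $\Pi:X\to\mc{P}(\tsf{H}\otimes\tsf{K})$ by $\Pi_x=\Id_H\otimes\ketbra{x}$. The orthogonality $\Pi_x\Pi_{x'}=\delta_{x,x'}\Pi_x$ and completeness $\sum_x\Pi_x=\Id_{HK}$ are immediate from the orthonormality of $\set{\ket{x}}$. The key identity to verify is
\begin{align}
V^\dag\Pi_xV=\sqrt{P_x}\sqrt{P_x}=P_x,
\end{align}
which follows by direct computation from the definition of $V$ and the orthonormality of $\set{\ket{x}}$. Therefore for any $\rho\in\mc{D}(\tsf{H})$, Born's rule on the dilated state $V\rho V^\dag$ reproduces the original POVM statistics:
\begin{align}
\Tr\!\parens*{\Pi_x\,V\rho V^\dag}=\Tr\!\parens*{V^\dag\Pi_xV\rho}=\Tr(P_x\rho).
\end{align}

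To get the statement in the form used in the proof of \cref{lem:purify}, I would finally observe that this construction extends to POVMs tensored with identities on auxiliary registers (just tensor $V$ with the identity), so applying it to Bob's and Charlie's measurements corresponds to locally enlarging their spaces by the isometry $V$ and attaching a fresh ancilla prepared in some fixed state $\ket{\mathrm{aux}}$; the PVM $\Pi$ then reproduces the original measurement on the new shared state $V\rho V^\dag$. There is no substantial obstacle to the proof — the only mild care needed is to note that $\sqrt{P_x}$ is well-defined because each $P_x$ is positive, and that tensoring $V$ with identities preserves both the isometry property and the locality of the purified measurement.
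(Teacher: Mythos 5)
Your proof is correct and follows essentially the same Naimark-dilation argument as the paper: you construct the identical isometry $V\ket{\psi}=\sum_x\sqrt{P_x}\ket{\psi}\otimes\ket{x}$ and the canonical PVM $\Id_H\otimes\ketbra{x}$ on the enlarged space. The only cosmetic difference is that the paper further extends $V$ to a unitary $U$ on $\tsf{H}\otimes\tsf{X}$ and conjugates the canonical PVM by $U$, so that the state enlargement reads as ``append an ancilla in $\ket{x_0}$'' rather than ``apply the isometry $V$''; these descriptions are related by $V\rho V^\dag=U(\rho\otimes\ketbra{x_0})U^\dag$, so the two proofs are mathematically equivalent.
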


This lemma appeared in \cite{TFKW13} and was used in the same way as it is here. We see from the proof below that the enlargement of the state space is an isometry.

\begin{proof}
	Consider the linear operator
	\begin{align}
	\begin{matrix}V:&\tsf{H}&\rightarrow&\tsf{H}\otimes\C^X\\&\ket{\psi}&\mapsto&\sum_x\sqrt{P_x}\ket{\psi}\otimes\ket{x}.\end{matrix}
	\end{align}
	Since $\braket{\varphi}{V^\dag V}{\psi}=\sum_x\braket{\varphi}{P_x}{\psi}=\braket{\varphi}{\psi}$, we have that $V$ is an isometry. Fix some $x_0\in X$ and, identifying $\tsf{H}$ with the subspace $\tsf{H}\otimes\ket{x_0}$ of $\tsf{H}\otimes\tsf{X}$, we can extend $V$ to a unitary operator $U:\tsf{H}\otimes\tsf{X}\rightarrow \tsf{H}\otimes\tsf{X}$. Define now $\tilde{P}:X\rightarrow\mc{P}(\tsf{H}\otimes\tsf{X})$ by $\tilde{P}_x=U^\dag(\Id_H\otimes\ketbra*{x})U$. $\tilde{P}$ is a projective measurement as $\tilde{P}_x\tilde{P}_y=\braket{x}{y}U^\dag(\Id_H\otimes\ketbra*{x}{y})U=\delta_{x,y}\tilde{P}_x$, and $\sum_x\tilde{P}_x=U^\dag(\Id_H\otimes\Id_X)U=\Id_{HX}$. Finally, $\tilde{P}$ acts as $P$ on $\tsf{H}\otimes\ket{x_0}$ as
	\begin{align}
		(\bra{\varphi}\otimes\bra{x_0})\tilde{P}_x(\ket{\psi}\otimes\ket{x_0})=\sum_{y,y'}\braket{\varphi}{\sqrt{P_{y'}}\sqrt{P_y}}{\psi}\braket*{y'}{x}\braket*{x}{y}=\braket{\varphi}{P_x}{\psi}.
	\end{align}
\end{proof}

\begin{lemma}[Properties of the trace distance]\label{lem:tr-dist}
	Let $\tsf{H}$ and $\tsf{K}$ be Hilbert spaces.
	\begin{itemize}
		\item $d_{\Tr}:\mc{D}(\tsf{H})\times\mc{D}(\tsf{H})\rightarrow\R_+$ is a metric.
		\item For $\rho,\sigma\in\mc{D}(\tsf{H})$, $\norm{\rho-\sigma}_{\Tr}\leq 1$ with equality iff $\rho$ and $\sigma$ have orthogonal supports.
		\item For any quantum channel $\Phi:\mc{L}(\tsf{H})\rightarrow\mc{L}(\tsf{K})$, $\norm{\Phi(\rho)-\Phi(\sigma)}_{\Tr}\leq\norm{\rho-\sigma}_{\Tr}$.
		\item For pure states $\ket{\psi},\ket{\phi}\in\tsf{H}$, $\norm{\ketbra{\psi}-\ketbra{\phi}}_{\Tr}\leq\norm{\ket{\psi}-\ket{\phi}}$.
	\end{itemize}
\end{lemma}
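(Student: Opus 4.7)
The four properties are standard facts about the trace norm; my plan is to prove each in order, reusing earlier parts where helpful, and I would spend the most care on monotonicity under channels since the others are either routine or reduce to a direct spectral computation.

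For the metric property, I would note that symmetry and non-negativity are immediate from the definition $\tfrac12\Tr|\rho-\sigma|$, and definiteness follows because $|\rho-\sigma|$ is positive, so its trace vanishes only when $\rho=\sigma$. The triangle inequality reduces to $\Tr|A+B|\leq \Tr|A|+\Tr|B|$, which I would obtain from the polar-decomposition/variational characterization $\Tr|A|=\sup_{U\text{ unitary}}|\Tr(UA)|$.

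For the bound by $1$, the key step is the Jordan decomposition $\rho-\sigma=P-Q$ with $P,Q\geq 0$ of orthogonal supports. Because $\rho-\sigma$ is traceless, $\Tr P=\Tr Q$, and restricting $\rho=\sigma+P-Q$ to the support of $P$ (where $Q$ vanishes) yields $P\leq\rho$, hence $\Tr P\leq 1$, giving $\|\rho-\sigma\|_{\Tr}=\tfrac12(\Tr P+\Tr Q)\leq 1$. For the equality case, saturation forces $\Tr P=\Tr\rho=1$, so $\rho$ is supported on $\mathrm{supp}(P)$ and symmetrically $\sigma$ on $\mathrm{supp}(Q)$, which are orthogonal; the converse is by direct computation on the block-diagonal form.

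For monotonicity under a channel $\Phi$, I would again Jordan-decompose $\rho-\sigma=A_+-A_-$ with $A_\pm\geq 0$ on orthogonal supports. Let $\Pi$ be the projector onto the positive eigenspace of $\Phi(\rho-\sigma)$. Then
\begin{align*}
\Tr|\Phi(\rho-\sigma)|&=\Tr\bigl((2\Pi-\Id)\Phi(A_+-A_-)\bigr)\\
&\leq\Tr\bigl(\Pi\Phi(A_+)\bigr)+\Tr\bigl((\Id-\Pi)\Phi(A_-)\bigr)\\
&\leq\Tr\Phi(A_+)+\Tr\Phi(A_-)=\Tr A_++\Tr A_-=\Tr|\rho-\sigma|,
\end{align*}
where the inequalities use positivity of $\Phi(A_\pm)$ and the final equality uses trace preservation. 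The mildly tricky step is ensuring positivity of $\Phi(A_\pm)$ as genuine operators on $\tsf{K}$, which follows because $\Phi$ is completely positive (and in particular positive).

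For the pure-state bound, the cleanest route is to work in the two-dimensional subspace $\mathrm{span}\{\ket{\psi},\ket{\phi}\}$. Writing $\ket{\phi}=c\ket{\psi}+s\ket{\psi^\perp}$ with $|c|^2+|s|^2=1$, the operator $\ketbra{\psi}-\ketbra{\phi}$ has eigenvalues $\pm|s|$, so $\|\ketbra{\psi}-\ketbra{\phi}\|_{\Tr}=\sqrt{1-|\braket{\psi}{\phi}|^2}$. Comparing with $\|\ket{\psi}-\ket{\phi}\|^2=2-2\Re\braket{\psi}{\phi}$, the inequality $1-|\braket{\psi}{\phi}|^2\leq 2-2\Re\braket{\psi}{\phi}$ follows from $|\braket{\psi}{\phi}|^2\geq(\Re\braket{\psi}{\phi})^2$ and $1+\Re\braket{\psi}{\phi}\leq 2$, closing out the lemma.
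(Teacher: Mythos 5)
The paper states this lemma in the appendix as a collection of standard facts and does not supply a proof, so there is no in-paper argument to compare against; your task is simply to verify that the claimed properties hold, and your proposal does this correctly. One small imprecision: in the proof of the second item, the assertion ``$P\leq\rho$'' is not true as an operator inequality (for $\rho=\ketbra{0}$, $\sigma=\ketbra{+}$ one has $\bra{1}(\rho-P)\ket{1}<0$). What you actually establish by compressing onto $\mathrm{supp}(P)$, where $Q$ vanishes, is $P=\Pi_P(\rho-\sigma)\Pi_P\leq\Pi_P\rho\Pi_P$, which still gives $\Tr P\leq\Tr\rho=1$ and makes the rest of your equality-case argument (saturation forces $\Tr(\Pi_P\rho)=1$, hence $\mathrm{supp}(\rho)\subseteq\mathrm{supp}(P)$, and symmetrically for $\sigma$ and $Q$) go through verbatim. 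The remaining three items are fine: the variational formula $\Tr|A|=\sup_U|\Tr(UA)|$ gives the triangle inequality; the monotonicity argument via the Jordan decomposition $\rho-\sigma=A_+-A_-$, the projector $\Pi$ onto the positive part of $\Phi(\rho-\sigma)$, positivity of $\Phi(A_\pm)$, and trace preservation is the standard one; and the two-dimensional computation showing $\norm{\ketbra{\psi}-\ketbra{\phi}}_{\Tr}=\sqrt{1-|\braket{\psi}{\phi}|^2}$ (with the paper's $\tfrac12$ normalisation) together with $1-a^2\leq(1-x)(1+x)\leq 2(1-x)$ for $x=\Re\braket{\psi}{\phi}\leq a=|\braket{\psi}{\phi}|\leq 1$ closes the last item.
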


\begin{lemma}[Properties of the operator norm]
	Let $\tsf{H}$ and $\tsf{K}$ be Hilbert spaces.
	\begin{itemize}
		\item The operator norm is a norm.
		\item The operator norm of a normal (e.g.,~Hermitian or unitary) operator is maximum of the moduli of the eigenvalues.
		\item For $A\in\mc{L}(\tsf{H})$ and $\ket{v}\in\tsf{H}$, $\norm{A\ket{v}}\leq\norm{A}\norm{\ket{v}}$.
		\item For $A,B\in\mc{L}(\tsf{H})$, $\norm{AB}\leq\norm{A}\norm{B}$.
		\item For any isometry $V:\tsf{H}\rightarrow\tsf{K}$, $\norm{VAV^\dag}=\norm{A}$.
	\end{itemize}
\end{lemma}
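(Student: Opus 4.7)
The plan is to verify each bullet in turn; all are standard consequences of the supremum definition $\norm{L}=\sup\set{\norm{L\ket{v}}}{\braket{v}=1}$, so I would organise the proof as a short sequence of self-contained arguments rather than a single long derivation.

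First I would handle the norm axioms. Positivity and definiteness are inherited from the Euclidean norm on $\tsf{H}$: $\norm{L}\geq 0$, and if $\norm{L}=0$ then $L\ket{v}=0$ for every unit vector, hence for every vector, so $L=0$. Absolute homogeneity is immediate from $\norm{(\lambda L)\ket{v}}=|\lambda|\norm{L\ket{v}}$, and the triangle inequality follows from $\norm{(L+M)\ket{v}}\leq\norm{L\ket{v}}+\norm{M\ket{v}}$ before taking suprema.

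Next I would establish the scaling bound $\norm{A\ket{v}}\leq\norm{A}\norm{\ket{v}}$ directly from the definition: the case $\ket{v}=0$ is trivial, and otherwise $\ket{v}/\norm{\ket{v}}$ is a unit vector so $\norm{A(\ket{v}/\norm{\ket{v}})}\leq\norm{A}$. Submultiplicativity $\norm{AB}\leq\norm{A}\norm{B}$ then follows by applying the scaling bound to $A$ with input vector $B\ket{v}$, then taking the supremum over unit $\ket{v}$.

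For the spectral identity, I would invoke the spectral theorem for normal operators: there is an orthonormal eigenbasis $\set{\ket{e_i}}$ with $A\ket{e_i}=\lambda_i\ket{e_i}$. Expanding any unit vector $\ket{v}=\sum_i c_i\ket{e_i}$ with $\sum_i|c_i|^2=1$ gives
\begin{align}
\norm{A\ket{v}}^2=\sum_i|\lambda_i|^2|c_i|^2\leq\max_i|\lambda_i|^2,
\end{align}
with equality achieved by choosing $\ket{v}$ to be an eigenvector realising the maximum; taking square roots and the supremum yields $\norm{A}=\max_i|\lambda_i|$.

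The main item requiring a little care is the isometric-conjugation identity. Here $V\colon\tsf{H}\to\tsf{K}$ satisfies $V^\dag V=\Id_H$ but $VV^\dag$ is only a projection, so $V^\dag$ is not necessarily an isometry. For the upper bound I would use submultiplicativity together with the observation that $\norm{V}=1$ and $\norm{V^\dag}=1$ (the latter because $\norm{V^\dag\ket{w}}^2=\braket{w}{VV^\dag}{w}\leq\braket{w}$), giving $\norm{VAV^\dag}\leq\norm{A}$. For the reverse inequality, for any unit $\ket{v}\in\tsf{H}$ the vector $V\ket{v}\in\tsf{K}$ is also a unit vector, and $VAV^\dag V\ket{v}=VA\ket{v}$ since $V^\dag V=\Id_H$, so $\norm{VAV^\dag(V\ket{v})}=\norm{VA\ket{v}}=\norm{A\ket{v}}$; taking the supremum over $\ket{v}$ gives $\norm{VAV^\dag}\geq\norm{A}$. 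No step is a genuine obstacle; the only pitfall is remembering that $V^\dag$ is a co-isometry rather than an isometry, which is why the lower bound must be obtained by feeding $VAV^\dag$ the special input $V\ket{v}$ rather than by a symmetric argument.
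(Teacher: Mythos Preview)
Your proof is correct and complete. The paper itself states this lemma without proof, treating it as a standard preliminary fact, so there is no paper proof to compare against; your arguments are the standard ones and handle all five items cleanly, including the only mildly delicate point about $V^\dag$ being a co-isometry rather than an isometry in the last item.
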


\fi

\newpage


\bibliographystyle{bibtex/bst/alphaarxiv.bst}
\bibliography{bibtex/bib/full.bib,bibtex/bib/quantum.bib,bibtex/bib/quantum_new.bib}

\end{document} 